\numberwithin{equation}{section}
\theoremstyle{plain}
\newtheorem{thm}{\protect\theoremname}[section]
\theoremstyle{definition}
\newtheorem{defn}[thm]{\protect\definitionname}
\theoremstyle{remark}
\newtheorem{rem}[thm]{\protect\remarkname}
\theoremstyle{plain}
\newtheorem{lem}[thm]{\protect\lemmaname}
\theoremstyle{plain}
\newtheorem{prop}[thm]{\protect\propositionname}
\theoremstyle{plain}
\newtheorem{cor}[thm]{\protect\corollaryname}
\theoremstyle{remark}
\newtheorem*{claim*}{\protect\claimname}
\providecommand{\theoremname}{Theorem}
\providecommand{\definitionname}{Definition}
\providecommand{\lemmaname}{Lemma}
\providecommand{\propositionname}{Proposition}
\providecommand{\remarkname}{Remark}
\providecommand{\claimname}{Claim}
\providecommand{\claimname}{Claim}
\providecommand{\corollaryname}{Corollary}
\providecommand{\definitionname}{Definition}
\providecommand{\lemmaname}{Lemma}
\providecommand{\propositionname}{Proposition}
\providecommand{\remarkname}{Remark}
\providecommand{\theoremname}{Theorem}
\begin{document}
\global\long\def\tr{\textnormal{tr}}%
\global\long\def\d{\textnormal{d}}%
\global\long\def\R{\mathbb{R}}%
\global\long\def\E{\mathbb{E}}%
\global\long\def\P{\mathbb{P}}%
\global\long\def\V{\mathbb{V}}%
\global\long\def\Z{\mathbb{Z}^{3}}%
\global\long\def\id{\textnormal{id}}%
\global\long\def\phiak{(h_{y})_{\alpha}(k)}%
\global\long\def\Elin{\mathfrak{E}^{\textnormal{lin}}}%
\global\long\def\Ebos{\mathfrak{E}^{\textnormal{B}}}%
\global\long\def\supp{\textnormal{supp}}%
 
\global\long\def\rangleka{\rangle_{\Gamma}}%
\global\long\def\sumak{\sum_{\alpha\in\mathcal{I}_{k}}}%
\global\long\def\sumal{\sum_{\alpha\in\mathcal{I}_{l}}}%
\global\long\def\kinG{k\in\Gamma}%
\global\long\def\linG{l\in\Gamma}%
\global\long\def\Heff{\mathbb{H}^{\textnormal{eff}}}%
\global\long\def\xo{\frac{\pi}{2}}%
\global\long\def\mr#1{\mathcal{R}^{#1}}%

\global\long\def\cak{c_{\alpha}(k)}%
\global\long\def\ccak{c_{\alpha}^{\ast}(k)}%
\global\long\def\fock{\mathcal{F}}%
 
\global\long\def\domH{L^{2}(\Lambda,\d y)\otimes\mathcal{H}_{N}^{-}}%
\global\long\def\BF{B_{\textnormal{F}}}%
\global\long\def\BFc{\BF^{c}}%
\global\long\def\kF{k_{\textnormal{F}}}%
\global\long\def\cgl{c_{\gamma}(l)}%
\global\long\def\ccgl{c_{\gamma}^{\ast}(l)}%

\title{Effective Polaron Dynamics of an Impurity Particle Interacting with a Fermi Gas}

\author{Duc Viet Hoang\thanks{Fachbereich Mathematik, University of Tübingen, Auf der Morgenstelle
10, 72076 Tübingen, Germany. E-mail: \texttt{viet.hoang@uni-tuebingen.de}}\and Peter Pickl \thanks{Fachbereich Mathematik, University of Tübingen, Auf der Morgenstelle
10, 72076 Tübingen, Germany. E-mail: \texttt{p.pickl@uni-tuebingen.de}}}

\frenchspacing

\date{\today}
\maketitle
\frenchspacing

\begin{abstract}
We study the quantum dynamics of a homogeneous ideal Fermi gas coupled
to an impurity particle on a three-dimensional box with periodic boundary
condition. For large Fermi momentum $\kF$, we prove that the effective
dynamics is generated by a Fröhlich-type polaron Hamiltonian, which
linearly couples the impurity particle to an almost-bosonic excitation
field. Moreover, we prove that the effective dynamics can be approximated
by an explicit coupled coherent state. Our method is applicable to a range of
interaction couplings, in particular including interaction couplings of 
order 1 and time scales of the order $\kF^{-1}$.
\end{abstract}

\begin{abstract}
\tableofcontents{}
\end{abstract}

\section{Introduction}

The study of impurities in quantum gases has garnered considerable
attention due to its relevance in various physical contexts, ranging
from solid-state physics to cold atom experiments. In this context
quasi-particles such as polarons stand as intriguing entities emerging
from the interaction of a single impurity particle with a surrounding
medium. The concept of a polaron, originally introduced by Lev Landau
to study the motion of an electron in a dielectric crystal \cite{1965}, most famously
emerges from the celebrated Fröhlich Hamiltonian in second quantization
formalism describing electron-phonon interactions  \cite{Froehlich1954}. Subsequently, the
polaron concept was extended to all kind of surrounding media including
Bose and Fermi gases. The formation conditions and properties of polarons
are believed to play a central role to understand the transport properties
and the effective mass of impurities within the host material. 

In this article, we study with mathematical rigor the dynamics of
an impurity particle immersed in a dense gas of fermions as surrounding
medium. Interactions between fermions are neglected and we assume
that the impurity particle interacts with the Fermi gas via a finite-range 
potential in momentum space.
The initial state $\psi$ of the system is a product state between
the impurity state and a filled Fermi ball. This mathematical framework
finds resonance with recent experimental and theoretical advancements
in the study of ultracold atoms \cite{Schirotzek2009,Knap2012,Cetina2015}.
We emphasize that our assumption on the interaction potential aligns with the growing interest 
among experimentalists and theorists in cases where the impurity is charged, 
and hence the interaction potential is long-ranged in position space \cite{Christensen2022,Mysliwy2024}.
We show that the effective dynamics of the system is governed by a
Fröhlich-type Hamiltonian, which linearly couples the impurity particle
to an almost-bosonic excitation field. More specifically, the excitations
relative to the filled Fermi ball are up to a constant described by
the Hamiltonian 

\begin{equation}
\mathbb{H}^{\text{F}}=(-\Delta_{y})\otimes1+1\otimes\mathbb{D}_{\text{B}}+\Phi(h_{y})\label{eq:HF}
\end{equation}
with $(-\Delta_{y})$ describing the kinetic energy of the impurity
particle, $\mathbb{D}_{\text{B}}$ describing the kinetic energy of
the excitation field and $\Phi(h_{y})\coloneqq c^{\ast}(h_{y})+c(h_{y})$
the linear coupling between impurity particle and excitations. The
operators $c^{\ast}$ and $c$ describing this excitation field coincide
with those introduced in a series of pioneering studies on the correlation
energy of interacting fermions \cite{Benedikter2019,Benedikter2021,Benedikter2021a}.
We note that an effective Hamiltonian of a similar type to \eqref{eq:HF}
has recently been derived in another microscopic setting involving
a tracer particle interacting with excitations of a Bose Einstein
condensate \cite{Lampart2022,Mysliwy2020}.\\
Subsequently, we show that the effective time evolved state can be
up to a phase factor approximated by a time-dependent coupled coherent
state $W(\eta_{t})\phi\otimes\Omega$ where $W$ is the Weyl operator of the excitation field
which is simply parameterized by a function $\eta_{t}$ and $\Omega$ is the Fock space vacuum. 
An explicit expression for $\eta_{t}$ is derived which allows for determining
the number of collective excitations over time. We believe that such
quantities are particularly helpful to gain deeper insights into
the formation process of quasi-particles as studied in experiments
such as \cite{Cetina2016}. Eventually, we show that the linear coupling
term $\Phi(h_{y})$ cannot be omitted in the effective description
but adds a leading order effect to the effective dynamics in the setting of
interaction couplings of order 1.

Our results hold for a variety of time scales and couplings, describing
different interaction strengths and mass ratios, which will be specified
in the subsequent section. We note that the same microscopic model
has been studied in \cite{Jeblick2017,Jeblick2018,mitrouskas2021effective}
but with very specific choices of couplings different from ours, leading
to an effective decoupling of impurity and gas.

\subsection{The microscopic model}

We consider an impurity particle interacting with $N$ spinless fermions
on a 3-dimensional box with periodic boundaries described by $\Lambda\coloneqq\mathbb{T}^{3}\coloneqq\R^{3}/(2\pi\mathbb{Z}^{3})$.
The system is described by a state in the Hilbert space $\domH$
with $\mathcal{H}_{N}^{-}=L^{2}(\Lambda)^{\wedge N}$ where $y$ is
the coordinate of the impurity particle and $\{x_{i}\}_{i=1,\ldots,N}$
are the coordinates of the fermions. The Hamiltonian for our main model of the system
is given by 
\begin{equation}
H_{N}=-\beta\Delta_{y}+\sum_{i=1}^{N}(-\Delta_{x_{i}})+\lambda\sum_{i=1}^{N}V(x_{i}-y)\label{eq:microscopic-Hamiltonian}
\end{equation}
and parameterized by  $\beta,\lambda>0$.
Note that the different parts of the Hamiltonian on different tensor 
components of our Hilbert space writing, i.e. we used the short-hand notation writing, for example,
$-\Delta_{y}\coloneqq -\Delta_{y}\otimes1$ for the Laplacian acting on the impurity particle. 
The interaction $V$ is assumed to have a Fourier transform
$\hat{V}$ with compact support satisfying $\hat{V}(k)=\hat{V}(-k)$ and $\hat{V}(k)\geq 0$
for all $k\in\Z$. It is well-known that under this assumption the
Hamiltonian \eqref{eq:microscopic-Hamiltonian} defines a self-adjoint
operator which generates by Stone's theorem the unitary time evolution
$e^{-H_{N}t}$.

We are interested in the dynamics of the system governed by the time-dependent
Schrödinger equation of the form 
\begin{equation}
i \hbar \frac{\d}{\d t}\psi_{t}=H_{N}\psi_{t},\qquad\psi_{0}\in\domH.\label{eq:Schroedinger-eq}
\end{equation}
where $\hbar$ is the reduced Planck constant. 
For mathematical convenience we will always set $\hbar=1$ unless explicitly stated otherwise.

Note that the filled Fermi ball is a ground state for the non-interacting Fermion system. It is
 non-degenerate and explicitly given by 
\begin{equation}
\Omega_{0}\coloneqq\bigwedge_{k\in\BF}f_{k},\qquad f_{k}(x)\coloneqq\frac{\exp(ikx)}{(2\pi)^{3/2}}\in L^{2}(\Lambda).\label{eq:}
\end{equation}
We choose the initial state to be of product form
\begin{equation}
\psi_{0}\big(y;x_{1},\ldots x_{N}\big)\coloneqq\phi(y)\otimes\Omega_{0}(x_{1},\ldots x_{N}),\label{eq:initial-state}
\end{equation}
with a general state $\phi$ for the impurity particle
i.e., the system is initially prepared in a state describing a ground
state of the ideal Fermi gas which does not interact with the impurity
particle.

Furthermore, we choose the Fermi momentum $k_{F}$ to be our parameter
of the system in the sense that the particle number is defined as 

\begin{equation}
N\equiv N(\kF)\coloneqq|\BF|,\qquad\BF\coloneqq\{k\in\mathbb{Z}^{3}:|k|\leq\kF\},
\end{equation}
i.e. the particle number $N$ of the Fermi gas is chosen such that
the Fermi ball is completely filled. Note that the average density
is in this case proportional to the number $N$ of gas particles due
to the following relation
\begin{equation}
\kF=\left(\frac{3}{4\pi}\right)^{1/3}N^{\frac{1}{3}}+\mathcal{O}(1)
\end{equation}
which is a consequence of Gauss' counting argument.

\subsection{Relevant parameters and time scales} \label{ssec:couplings}

In the following, we present the ranges of $\beta$ and $\lambda$
we are aiming for, and discuss the physical meaning of the parameters. 
In addition, it is important to discuss on which time scales our results
hold. 
\begin{itemize}
\item The parameter $\beta$ determines the mass ratio of the impurity and Fermi gas particles
such that  $\beta \ll 1$ corresponds to a relatively heavy, $\beta \gg 1$ to a relatively light impurity particle.
The case of equal masses corresponds to $\beta=1$. Our work requires the
restriction $\beta \in o(\kF) $ see \prettyref{rem:coherent-bound}. Since $\kF$ tends to infinity,
this allows to include all three cases.
\item The parameter $ \lambda \in [\kF^{-1/6}, 1 ] $ models the coupling strength between the Fermi
gas and the impurity. For small $\lambda$ we expect a decoupling
between the gas and the impurity in the sense that the time evolution
given by \eqref{eq:Schroedinger-eq} does not entangle an initial state
of product form $\phi\otimes\Omega_{0}$. Such a result was shown
in \cite{mitrouskas2021effective} with $\beta=1$ and $\lambda=\kF^{-1/2}$
in three dimensions and \cite{Jeblick2017,Jeblick2018} with $\beta=\lambda=1$
in two dimensions. Note that our interaction coupling $\lambda$ is much larger than in \cite{mitrouskas2021effective} and in particular, we are able to include $\lambda=1$.
As will be discussed in \prettyref{rem:Fermi-blockade}, it is necessary to include excitations of the Fermi gas into the
effective description for $\lambda=1$.
\item Our approximations will apply for times $t \in \mathcal{O} (\kF^{-1} \lambda^{-1})$, so that a weaker coupling strength $\lambda$ corresponds to slightly larger times.
The times $t\in\mathcal{O}(\kF^{-1})$ are on the time scale of the fermions near the Fermi surface, which 
have an approximate momentum of $\kF$ and thus travel a distance of order 1 in this time. 
We are therefore able to enter a time scale where the impurity
particle can resolve the motion of the fermions.
We remark that the results in \cite{mitrouskas2021effective} can
be transferred to this setting of $\lambda=1$, however, allowing only for
shorter time scales of $t \in o (\kF^{-1})$.
\end{itemize}

We remark that results obtained in the previously mentioned parameter range remain valid under re-scaling by multiplying the Hamiltonian by an overall factor and absorbing it by re-scaling the time variable.
Although we will stick to the above choices, it might still be helpful for the reader to see some connections to other scaling regimes by re-scaling.
We present two of them briefly:
\begin{itemize}
  \item Time scales of order 1 are obtained by multiplying the Hamiltonian by an overall factor of $\kF^{-1}$. In this case, the factor $\kF^{-1}$ appears
  as new parameters in front of the kinetic energy of the Fermi gas corresponding to a heavy fermion regime.
  \item In recent years the so-called semiclassical regime has been
  widely studied in the analysis of dense Fermi gases, as can be seen for example in \cite{Benedikter2022,Saffirio2023}.
  This regime is associated with identifying $\hbar \coloneqq \kF^{-1}$ as small parameter instead of setting $\hbar=1$. 
  It is achieved by multiplying the Hamiltonian by an overall factor of $\kF^{-2}$ and absorbing $\kF^{-1}$ in the time variable. 
  For this new time scale, our theorem makes a statetment for times of order 1. 
  The re-scaled Schrödinger equation takes the form of 
  \begin{equation}
  i\hbar\partial_{t}\psi_{t}=\left(\hbar^{2}\beta (-\Delta_{y})+\hbar^{2}\sum_{i=1}^{N}(-\Delta_{x_{i}})+ \lambda\kF^{-2} \sum_{i=1}^{N}V(x_{i}-y)\right)\psi_{t}
  \end{equation}
  One identifies $\lambda\kF^{-2}\in [\kF^{-13/6},\kF^{-2}]$ as re-scaled interaction coupling.
\end{itemize}

\section{Preliminaries}

\paragraph*{Second quantization}

It is convenient to consider $\domH$ as the $N$-particle sector of $L^{2}(\Lambda)\otimes\fock$
with the fermionic Fock space $\mathcal{F}$ constructed over $L^{2}(\Lambda)$.
This way, we have access to the powerful formalism of second quantization
with the fermionic creation operator $a_{p}^{\ast}$ creating a particle
with momentum $p\in\mathbb{Z}^{3}$ and the annihilation operator
$a_{p}$ annihilating a particle with momentum $p\in\mathbb{Z}^{3}$.
Those operators satisfy the canonical anticommutation relations (CAR)
\begin{equation}
\forall p,q\in\mathbb{Z}^{3}:\{a_{p},a_{q}^{\ast}\}=\delta_{p,q},\qquad\{a_{p},a_{q}\}=0=\{a_{p}^{\ast},a_{q}^{\ast}\}.\label{eq:CAR}
\end{equation}
Furthermore we introduce the fermionic number operator $\mathcal{N}\coloneqq\sum_{p\in\mathbb{Z}^{3}}a_{p}^{\ast}a_{p}$
and the vacuum $\Omega$ satisfying $a_{p}\Omega=0$ for all $p\in\mathbb{Z}^{3}$.

We lift our $N$-particle Hamiltonian $H_{N}$ to Fock space as

\begin{equation}
\mathbb{H}=\underbrace{-\beta\Delta_{y}}_{\eqqcolon h_{0}}+\underbrace{\sum_{k\in\Z}|k|^{2}a_{k}^{\ast}a_{k}}_{\eqqcolon\mathbb{H}^{\textnormal{kin}}}+\underbrace{\lambda\sum_{k,p\in\Z}\hat{V}(k)e^{iky}a_{p}^{\ast}a_{p-k}}_{\eqqcolon\mathbb{V}}
\end{equation}
which agrees with $H_{N}$ if restricted to $\domH$. We denote by $\langle\cdot,\cdot\rangle$
the inner product on $\domH$ and by $\|\cdot\|$ the induced norm
if not stated otherwise.

We will mostly use the abuse of notation $\mathbb{A}\equiv1\otimes\mathbb{A}$
as operator on $L^{2}(\Lambda)\otimes\fock$ where $\mathbb{A}$ acts
as an operator on the Fock space part.

\paragraph*{Particle-hole transformation}

In our analysis, the primary objective is to focus on excitations relative to the non-interacting Fermi ball. In particular, we want to use a description of our fermionic system in which the non-interacting Fermi ball $\Omega_0 = \prod_{k\in \BF} a^{\ast}_k \Omega $ is mapped to the vacuum.
To achieve this, we employ the particle-hole transformation, which is a specific type of fermionic Bogoliubov transformation as creation operators are mapped to linear combinations of creation and annihilation operators while preserving the CAR.
The \emph{particle-hole transformation} is defined as the map $R:\fock\to\fock$
satisfying 
\begin{equation}
R^{\ast}a_{k}^{\ast}R\coloneqq\begin{cases}
a_{k}^{\ast} &  \text{if } k\in\BFc,\\
a_{k} & \text{if } k\in\BF
\end{cases}\quad \text{and} \quad R\Omega\coloneqq \Omega_{0}\ .\label{eq:particle-hole-trafo}
\end{equation}
It is easy to check that the map is well-defined, unitary and satisfies
$R^{-1}=R^{\ast}=R$. 

With this, we can re-write the initial
state \eqref{eq:initial-state} representing a non-interacting impurity
particle and a Fermi gas as
\begin{equation}
\psi_{0}=\phi\otimes\Omega_{0}=(1\otimes R)\left(\phi\otimes\Omega\right)\eqqcolon R\psi.
\end{equation}
Later on, we will mostly use the product state $\psi=\phi\otimes\Omega$
of the impurity and the vacuum instead of $\psi_{0}$. \\
Furthermore, we define
\begin{equation}
E_{N}^{\text{pw}}\coloneqq\sum_{k\in\BF}|k|^{2}=\langle R\Omega,H_{N},R\Omega\rangle\label{eq:E-pw}
\end{equation}
to be the energy of the non-interacting Fermi ball.

Of greatest interest is of course the action of the particle-hole
transformation on the microscopic Hamiltonian as generator of the
dynamics. The conjugation with $R$ of $\mathbb{H}=-\beta\Delta_{y}+\mathbb{H}^{\textnormal{kin}}+\mathbb{V}$
yields

\begin{align}
\mathbb{H}_{0} & \coloneqq R^{\ast}\mathbb{H}^{\textnormal{kin}}R-E_{N}^{\text{pw}}=\sum_{k\in\Z}|k|^{2}R^{\ast}a_{k}^{\ast}RR^{\ast}a_{k}R-E_{N}^{\text{pw}}\\
 & =\sum_{k\in\BF}|k|^{2}a_{k}a_{k}^{\ast}+\sum_{k\in\BFc}|k|^{2}a_{k}^{\ast}a_{k}-E_{N}^{\text{pw}}\\
 & =\sum_{k\in\Z}e(k)a_{k}^{\ast}a_{k}\qquad\text{with }e(k)\coloneqq\begin{cases}
|k|^{2} & \text{if } k\in\BFc,\\
-|k|^{2} & \text{if } k\in\BF.
\end{cases}\label{eq:H_0-def}
\end{align}

Similarly, we can see that 
\begin{subequations}
\begin{align}
R^{\ast}\mathbb{V}R & =\lambda\sum_{k,p\in\Z}\hat{V}(k)e^{iky}R^{\ast}a_{p}^{\ast}RR^{\ast}a_{p-k}R\nonumber \\
 & =\lambda\sum_{k\in\Z}\sum_{\substack{p-k\in\BFc,\\
p\in\BF
}
}\hat{V}(k)e^{iky}a_{p}a_{p-k}+\lambda\sum_{k\in\Z}\sum_{\substack{p\in\BFc,\\
p-k\in\BF
}
}\hat{V}(k)e^{iky}a_{p}^{\ast}a_{p-k}^{\ast}\label{eq:bosonizable-V}\\
 & \quad+\lambda\sum_{k\in\Z}\sum_{\substack{p\in\BF,\\
p-k\in\BF
}
}\hat{V}(k)e^{iky}a_{p}a_{p-k}^{\ast}+\lambda\sum_{k\in\Z}\sum_{\substack{p\in\BFc,\\
p-k\in\BFc
}
}\hat{V}(k)e^{iky}a_{p}^{\ast}a_{p-k}.\label{eq:non-bosonizable-V}
\end{align}
\end{subequations}

For later purposes we shall introduce for $\varphi\in l^{2}(\mathbb{Z}^{3})$
the short-notation 
\begin{align}
b(\varphi) & \coloneqq\sum_{k\in\Z}\overline{\varphi(k)}\sum_{\substack{p\in\BFc,\\
p-k\in\BF
}
}a_{p-k}a_{p},\\
b^{\ast}(\varphi) & =\sum_{k\in\Z}\varphi(k)\sum_{\substack{p\in\BFc,\\
p-k\in\BF
}
}a_{p}^{\ast}a_{p-k}^{\ast}.\label{eq:short-hand1}
\end{align}
We can then write
\begin{equation}
R^{\ast}\mathbb{H}R=-\beta\Delta_{y}+\mathbb{H}_{0}+b^{\ast}(\tilde{h}_y)+b(\tilde{h}_y)+\mathcal{E}\label{eq:ph-transformed-H}
\end{equation}
with $\tilde{h}_{y}(k)\coloneqq\lambda\hat{V}(k)e^{iky}$
and $\mathcal{E}$ is given by the terms of \prettyref{eq:non-bosonizable-V}
since

\begin{align}
\sum_{k\in\Z}\sum_{\substack{p-k\in\BFc,\\
p\in\BF
}
}\hat{V}(k)e^{iky}a_{p}a_{p-k}= & \sum_{k\in\Z}\sum_{\substack{\tilde{p}\in\BFc,\\
\tilde{p}+k\in\BF
}
}\hat{V}(k)e^{iky}a_{\tilde{p}+k}a_{\tilde{p}}\nonumber \\
= & \sum_{k\in\Z}\sum_{\substack{p\in\BFc,\\
p-k\in\BF
}
}\hat{V}(k)e^{-iky}a_{p-k}a_{p}
\end{align}
where we used that $\hat{V}(-k)=\hat{V}(k)$.

\paragraph*{Almost-bosonic operators and patch decomposition }

Our effective description of the microscopic system described by \eqref{eq:microscopic-Hamiltonian}
will involve the emergence of almost-bosonic particles describing
pair excitations of the Fermi ball. Those pair excitations will be
delocalized over the Fermi surface in the sense that they correspond
to a linear combination of pairs of fermionic operators. As mentioned
before  the almost-bosonic pair operators which occur in this article
coincide with the ones introduced in the series of seminal works \cite{Benedikter2019,Benedikter2021,Benedikter2021a,Benedikter2023}
on the correlation energy of a weakly interacting Fermi gas. We give
a brief introduction to the construction of those operators with the
most relevant properties in this subsection and in \prettyref{sec:Useful-bosonization}. 

A key ingredient for the approximation of the microscopic fermionic
system by almost-bosonic excitations is the decomposition of the Fermi
surface into patches. This will allow to approximate the fermionic
kinetic energy term by a term quadratic in the almost-bosonic pair
operators.

Introduce the bisecting subset of $\Z\cap\supp\hat{V}$
\begin{equation}
\Gamma\coloneqq\left\{ (k_{1},k_{2},k_{3})\in\Z\cap\supp\hat{V}:(k_{3}>0\vee k_{3}=0), (k_{2}>0\vee k_{2}=k_{3}=0) ,k_{1}>0\right\} 
\end{equation}
allowing the decomposition $\Gamma\cup (-\Gamma)=\Z\cap\supp\hat{V}$.

The construction works as follows: 
\begin{enumerate}
\item Choose the number $M$ of patches satisfying 
\begin{equation}
N^{2\delta}\ll M\ll N^{\frac{2}{3}-2\delta},\qquad\delta\in(0,\frac{1}{6}).\label{eq:condition-on-M}
\end{equation}
% Equivalently, we can also write the condition as $M=N^{c\delta}$ with $\delta\in(0,1/6)$ and $c\in(2,\frac{2}{3\delta}-2$). 
The lower bound on $M$ is needed to control the number of momenta inside each
patch whereas the upper bound is needed to suppress Pauli's principle.
The choice of $\delta$ and $c$ will be taken later. 
\item Define equal-area disjoint patches $p_{\alpha}$ 
as follows 
\begin{itemize}
\item $p_{1}$ is spherical cap of area $4\pi/M$, 
\item decompose remaining semi-sphere into $\sqrt{M}/2$ collars, 
\item leave corridors of width $2R\coloneqq2\ \text{supp}\hat{V}$ between
adjacent patches,
\item define patches of southern semi-sphere by reflection $k\mapsto-k$.
\end{itemize}
Define $\omega_{\alpha}\in S^{2}$ as centers of the patch $p_{\alpha}$. A useful graphical sketch of this patch construction is given in Figure 1 of \cite{Benedikter2019}.
\item For given $\kinG$ define the set of north and south patch indices
\begin{align}
\mathcal{I}_{k}^{+} & \coloneqq\{\alpha\in\{1,\ldots,M\}\mid k\cdot\hat{\omega}_{\alpha}\geq N^{-\delta}\},\\
\mathcal{I}_{k}^{-} & \coloneqq\{\alpha\in\{1,\ldots,M\}\mid k\cdot\hat{\omega}_{\alpha}\leq-N^{-\delta}\}
\end{align}
and $\mathcal{I}_{k}\coloneqq\mathcal{I}_{k}^{+}\cup\mathcal{I}_{k}^{-}$.
This has the effect of excluding a strip around the equator of the Fermi ball, where the number of momenta
per patch may become too small. Note that $\delta>0$ coincides
with the parameter in step 1.
\item Define the collective almost-bosonic creation operator and its normalization
factor as
\begin{equation}
c_{\alpha}^{\ast}(k)\coloneqq
\begin{cases}
  b_{\alpha}^{\ast}(+ k) & \text{if } \alpha\in\mathcal{I}_{k}^{+},\\
  b_{\alpha}^{\ast}(- k) & \text{if } \alpha\in\mathcal{I}_{k}^{-}.
  \end{cases},\ \ 
n_{\alpha}(k)\coloneqq 
\begin{cases}
m_{\alpha}(k) & \text{if } \alpha\in\mathcal{I}_{k}^{+},\\
m_{\alpha}(- k) & \text{if } \alpha\in\mathcal{I}_{k}^{-}.
\end{cases}
\end{equation}
with 
\begin{equation}
b_{\alpha}^{\ast}(k)\coloneqq\frac{1}{m_{\alpha}(k)}\sum_{\substack{p\in B_{F}^{c}\cap B_{\alpha},\\p-k\in B_{F}\cap B_{\alpha}}}a_{p}^{\ast}a_{p-k}^{\ast}
, \quad m_{\alpha}(k)^{2}\coloneqq \sum_{\substack{p\in B_{F}^{c}\cap B_{\alpha},\\p-k\in B_{F}\cap B_{\alpha}}} 1 \end{equation}
being sensitive to being on the north or south hemisphere. The creation
operator can be seen as collective in the sense that it involves a
superposition of all possible fermion pairs with relative momentum
$k$.
\end{enumerate}
Similarly to \eqref{eq:short-hand1} introduced in the previous subsection,
we define as in \cite{Benedikter2021}
\begin{equation}
c^{\ast}(\eta)\coloneqq\sum_{\kinG}\sumak\overline{\eta_{\alpha}(k)}\ccak\label{eq:short-hand2}
\end{equation}
with inner product $\langle\eta,\varphi\rangle\coloneqq\sum_{\kinG}\sumak\overline{\eta_{\alpha}(k)}\varphi_{\alpha}(k)$
for all $\eta,\varphi\in\bigoplus_{\kinG}l^{2}(\mathcal{I}_{k})$.

The following statements hold as a consequence of the above construction.
\begin{itemize}
\item The surface area of a patch satisfies $\sigma(p_{\alpha})\in\mathcal{O}(1/M)$.
\item The Canonical Commutation Relations (CCR) are satisfied up to
an error term (see \cite[Lemma 4.1]{Benedikter2019}): It holds for
all $k',\kinG$ and $\alpha\in\mathcal{I}_{k},\beta\in\mathcal{I}_{k'}$
\begin{align}
[c_{\alpha}(k),c_{\beta}(k')] & =[c_{\alpha}^{\ast}(k),c_{\beta}^{\ast}(k')]=0,\\{}
[c_{\alpha}(k),c_{\beta}^{\ast}(k')] & =\delta_{\alpha,\beta}\big(\delta_{k,k'}+\mathcal{E}_{\alpha}(k,k')\big)\label{eq:approx-CAR}
\end{align}
satisfying $\mathcal{E}_{\alpha}(k,k)\leq0$, $\mathcal{E}_{\alpha}(k,l)=\mathcal{E}_{\alpha}(l,k)^{\ast}$
and
\begin{equation}
\forall\psi\in\mathcal{F}:\qquad\|\mathcal{E}_{\alpha}(k,k')\psi\|\leq\frac{2}{n_{\alpha}(k)n_{\alpha}(k')}\|\mathcal{N}\psi\|.
\end{equation}
\item The almost-bosonic operators change the number operator by two (see
\cite[Lemma 2.3]{Benedikter2019}) in the following sense 
\begin{align}
\cak\mathcal{N} & =(\mathcal{N}+2)\cak.\label{eq:number-op-vs-ccak}
\end{align}
\item The normalization constant satisfies (see \cite[Proposition 3.1]{Benedikter2019})
\begin{equation}
n_{\alpha}(k)^{2}=\frac{4\pi\kF^{2}}{M}|k\cdot\hat{\omega}_{\alpha}|\big(1+o(1)\big).
\end{equation}
\end{itemize}
Also note that the summation in the definition of the almost-bosonic
operators $\ccak$ and $\cak$ involves only finite sets. Unlike in
the exactly bosonic case our almost-bosonic operators therefore inherit
boundedness from the fermionic constituents which satisfy $\|a_{k}^{\ast}\|=\|a_{k}\|=1$.
Subtle questions about the mutual adjointness and domain of the almost-bosonic
operators remain trivial in our case.

\section{Main results}
\subsection{Effective time evolution}
We are now focusing on the effective time evolution of the initial
state $\psi_{0}=R\psi\equiv\phi\otimes R\Omega\in\domH$, i.e. a uncorrelated
product state with the non-interacting Fermi gas prepared as its non-degenerate
ground state with no initial excitations. This set-up corresponds
to a system where the impurity does not interact with the cold Fermi
gas at time $t=0$. Over time, we expect that the influence of the
impurity particle creates and annihilates excitations of the Fermi
ball. Therefore we will use the particle-hole transformation as defined
in \eqref{eq:particle-hole-trafo} to connect the microscopic description
to the following effective description: Let
\begin{align}
\mathbb{H}^{\text{eff}}\coloneqq & -\beta\Delta_{y}+\sum_{\kinG}\sumak\epsilon_{\alpha}(k)\ccak\cak+E_{N}^{\text{pw}}\nonumber \\
 & +\lambda\sum_{\kinG}\sumak\hat{V}(k)e^{iky}n_{\alpha}(k)\bigg(\ccak+c_{\alpha}(-k)\bigg)\label{eq:effective-Hamiltonian}
\end{align}

be our effective Hamiltonian with $\epsilon_{\alpha}(k)=2 k_{F}|k\cdot\omega_{\alpha}|$
and $E_{N}^{\text{pw}}$ as defined in \eqref{eq:E-pw}. We introduce
for all $\kinG$ and $\alpha\in\mathcal{I}_{k}$ 
\begin{equation}
(h_{y})_{\alpha}(k)\coloneqq\lambda\hat{V}(k)e^{iky}n_{\alpha}(k).
\end{equation}
Note that the effective Hamiltonian acts on the components of the
Hilbert space $\domH$ in the sense that we can write

\begin{equation}
\mathbb{H}^{\text{eff}}=(-\beta\Delta_{y})\otimes1+1\otimes\mathbb{D}_{\text{B}}+\Phi(h_{y})+E_{N}^{\text{pw}}
\end{equation}
with $\Phi(h_{y})\coloneqq c^{\ast}(h_{y})+c(h_{y})$ and 
\begin{align}
c^{\ast}(h_{y}) & \coloneqq\sum_{\kinG}\sumak(h_{y})_{\alpha}(k)\ \ccak,\\
c(h_{y}) & \coloneqq\sum_{\kinG}\sumak\overline{(h_{y})_{\alpha}(k)}\ \cak,\\
\mathbb{D}_{\text{B}} & \coloneqq\sum_{\kinG}\sumak\epsilon_{\alpha}(k)\ccak\cak\quad\text{with }\epsilon_{\alpha}(k)=2 k_{F}|k\cdot\omega_{\alpha}|
\end{align}
describing the kinetic energy of the almost-bosonic pair excitations
with linear dispersion relation. Note that by the Kato-Rellich theorem
the effective Hamiltonian \eqref{eq:effective-Hamiltonian} is self-adjoint
in its natural domain and generates a unitary time evolution. 

To state an effective description of the time evolution of those excitations
we compare the particle-hole transformed microscopic dynamics $R^{\ast}e^{-i\mathbb{H}t}R\psi$
with the effective time evolution $e^{-i\mathbb{H}^{\textnormal{eff}}t}\psi$
in Hilbert space norm:
\begin{thm}[Effective dynamics of the system]
\label{thm:Main-thm}Assume that $\hat V \geq 0$ is compactly supported and satisfies $\hat V (-k)= \hat V (k)$
for all $k \in \Z $. Let $ \lambda \in [\kF^{-1/6}, 1] $ be the interaction parameter as introduced in \eqref{eq:microscopic-Hamiltonian} and take the number of patches to be $M=N^{\frac{16}{45}}$ with
$\delta=\frac{2}{15}$ as introduced in \eqref{eq:condition-on-M}. Then it holds for the initial state $\psi=\phi\otimes\Omega\in\domH$
that there is a $C >0$ depending only on the interaction $V$ such that for all $\kF\geq 2$ and
$t\geq0$

\[
\|R^{\ast}e^{-i\mathbb{H}t}R\psi-e^{-i\mathbb{H}^{\textnormal{eff}}t}\psi\|\leq C \left(e^{C \lambda\kF t}-1\right)\kF^{-\frac{1}{5}}.
\]
\end{thm}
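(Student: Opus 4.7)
I would use a Duhamel--Grönwall approach comparing the two propagators acting on the same initial data. Applying the Duhamel formula to the difference of the unitary groups generated by $R^{\ast}\mathbb{H}R$ and $\mathbb{H}^{\textnormal{eff}}$ and using unitarity of the left factor gives
\begin{equation}
\|R^{\ast}e^{-i\mathbb{H}t}R\psi-e^{-i\mathbb{H}^{\textnormal{eff}}t}\psi\|\leq\int_{0}^{t}\bigl\|(R^{\ast}\mathbb{H}R-\mathbb{H}^{\textnormal{eff}})\Psi_{s}^{\textnormal{eff}}\bigr\|\,ds,\qquad\Psi_{s}^{\textnormal{eff}}\coloneqq e^{-i\mathbb{H}^{\textnormal{eff}}s}\psi.
\end{equation}
Combining \eqref{eq:ph-transformed-H} with the definition \eqref{eq:effective-Hamiltonian} of $\mathbb{H}^{\textnormal{eff}}$, the generator difference splits into three structurally distinct pieces,
\begin{equation}
R^{\ast}\mathbb{H}R-\mathbb{H}^{\textnormal{eff}}=(\mathbb{H}_{0}-\mathbb{D}_{\text{B}})+\bigl(b^{\ast}(\tilde{h}_{y})+b(\tilde{h}_{y})-\Phi(h_{y})\bigr)+\mathcal{E},
\end{equation}
with $\mathcal{E}$ collecting the non-bosonizable, number-preserving terms from \eqref{eq:non-bosonizable-V}. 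Since only the form factor $h_{y}$ depends on the impurity coordinate, the variable $y$ does not interact with any of the fermionic estimates.

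The central step is bounding each of the three pieces in Fock norm on $\Psi_{s}^{\textnormal{eff}}$. For the linearization $\mathbb{H}_{0}-\mathbb{D}_{\text{B}}$ I would invoke the patch commutator identity of \cite{Benedikter2019,Benedikter2021}, $[\mathbb{H}_{0},c_{\alpha}^{\ast}(k)]=\epsilon_{\alpha}(k)c_{\alpha}^{\ast}(k)+\mathcal{R}_{\alpha}(k)$, whose remainder $\mathcal{R}_{\alpha}(k)$ is of order $\kF^{-\delta}$ on the almost-bosonic sector; iterating this on $c_{\alpha}^{\ast}$-monomials and using the approximate CCR \eqref{eq:approx-CAR} controls the linearization error by $\kF^{-\delta}$ times a suitable power of $(\mathcal{N}+1)^{1/2}$. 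For the bosonization error $b^{\ast}(\tilde{h}_{y})+b(\tilde{h}_{y})-\Phi(h_{y})$, the discrepancy is supported on the equatorial strip omitted from $\bigcup_{k}\mathcal{I}_{k}$ and on the sub-leading part of the normalization $n_{\alpha}(k)$; the former is estimated by the Pauli count of a shell of width $N^{-\delta}$, the latter by the $o(1)$ asymptotics of $n_{\alpha}(k)^{2}$. The non-bosonizable remainder $\mathcal{E}$ only acts on momentum pairs both inside or both outside $\BF$ with transfer in $\supp\hat{V}$; Cauchy--Schwarz against the resulting thin shell gives $\|\mathcal{E}\Psi_{s}^{\textnormal{eff}}\|\leq C\lambda\kF^{-\gamma}\|(\mathcal{N}+1)^{1/2}\Psi_{s}^{\textnormal{eff}}\|$ for some $\gamma>0$, and the three errors are balanced precisely by the calibration $M=N^{16/45}$, $\delta=2/15$.

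Finally, all three bounds feed on a priori control of the almost-bosonic number of excitations along the effective evolution. Since the coupling satisfies $\|h_{y}\|_{2}^{2}\leq C\lambda^{2}\sum_{k\in\Gamma}|\hat{V}(k)|^{2}\sum_{\alpha\in\mathcal{I}_{k}}n_{\alpha}(k)^{2}\leq C\lambda^{2}\kF^{2}$, a Grönwall iteration on $\partial_{s}\langle\Psi_{s}^{\textnormal{eff}},(\mathcal{N}+1)^{j}\Psi_{s}^{\textnormal{eff}}\rangle$ driven by the commutator $[\Phi(h_{y}),(\mathcal{N}+1)^{j}]$ yields $\langle\Psi_{s}^{\textnormal{eff}},(\mathcal{N}+1)^{j}\Psi_{s}^{\textnormal{eff}}\rangle\leq C_{j}\,e^{C\lambda\kF s}$ for every $j$. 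Combining with the three error bounds produces an integrand of order $\lambda\kF\cdot\kF^{-1/5}\cdot e^{C\lambda\kF s}$, and integration in $s$ gives exactly the stated bound. The main obstacle is the kinetic linearization $\mathbb{H}_{0}-\mathbb{D}_{\text{B}}$: $\mathbb{H}_{0}$ is not controlled by $\mathcal{N}$ alone, so one must exploit that $\Psi_{s}^{\textnormal{eff}}$ is generated from $\Omega$ by the coherent-state-like dynamics of \eqref{eq:effective-Hamiltonian} and is therefore essentially supported in the almost-bosonic sector where the Benedikter--Nam--Porta--Schlein commutator framework applies; balancing the $\kF$-powers produced by the patch number $M$, the equatorial cut-off $N^{-\delta}$, and the Pauli/bosonization remainders is what pins down the specific choice of $M$ and $\delta$ and the exponent $\kF^{-1/5}$.
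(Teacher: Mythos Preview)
Your overall architecture---Duhamel against the \emph{effective} flow, the three-fold decomposition $(\mathbb{H}_{0}-\mathbb{D}_{\text{B}})+(b^{\ast}(\tilde h_y)+b(\tilde h_y)-\Phi(h_y))+\mathcal{E}$, and Gr\"onwall control of $(\mathcal{N}+1)^{j}$ along $\Psi_{s}^{\textnormal{eff}}$---is exactly the paper's strategy, and the calibration of $M,\delta$ and the exponent $\kF^{-1/5}$ indeed arise from balancing the patch errors. Two of your ingredient estimates, however, are not right as stated, and one of them is precisely the ``main obstacle'' you flag.

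\textbf{The non-bosonizable remainder $\mathcal{E}$.} There is no thin-shell gain here. The operator $\mathcal{E}$ from \eqref{eq:non-bosonizable-V} consists of normal-ordered $a^{\ast}a$ terms with both momenta on the same side of the Fermi surface; it satisfies only $\|\mathcal{E}\zeta\|\leq C\lambda\|\hat V\|_{1}\|\mathcal{N}\zeta\|$ (Lemma~\ref{lem:Estimate-non-bosonizable}), with \emph{no} intrinsic factor $\kF^{-\gamma}$. Its contribution becomes small only after time integration: with $\|\mathcal{N}\Psi_{s}^{\textnormal{eff}}\|\leq Ce^{C\lambda\kF s}$ one obtains $\int_{0}^{t}C\lambda\,e^{C\lambda\kF s}\,\d s\leq C\kF^{-1}(e^{C\lambda\kF t}-1)$, which is subdominant to $\kF^{-1/5}$. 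Your conclusion is fine, the mechanism is not.

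\textbf{The kinetic linearization $\mathbb{H}_{0}-\mathbb{D}_{\textnormal{B}}$.} ``Iterating the commutator identity on $c_{\alpha}^{\ast}$-monomials'' does not apply: $\Psi_{s}^{\textnormal{eff}}$ is generated by the \emph{quartic} fermionic operator $\mathbb{D}_{\text{B}}$ and is not a polynomial in the $c_{\alpha}^{\ast}$'s. The paper instead exploits $(\mathbb{H}_{0}-\mathbb{D}_{\text{B}})\,\phi\otimes\Omega=0$ and runs a \emph{second} Duhamel/Gr\"onwall argument for $\|(\mathbb{H}_{0}-\mathbb{D}_{\text{B}})\Psi_{s}^{\textnormal{eff}}\|$, driven by $[\mathbb{H}_{0}-\mathbb{D}_{\text{B}},\mathbb{H}^{\textnormal{eff}}]=[\mathbb{H}_{0}-\mathbb{D}_{\text{B}},\mathbb{D}_{\text{B}}]+[\mathbb{H}_{0}-\mathbb{D}_{\text{B}},\Phi(h_{y})]$. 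Both commutators are expressible through the error operators $\mathfrak{E}^{\textnormal{lin}}_{\alpha}(k)$ and $\mathfrak{E}^{\textnormal{B}}_{\alpha}(k)$ of Lemmas~\ref{lem:Elin-bound} and~\ref{lem:Ebos-bound}, whose sizes are $N^{1/3}M^{-1/2}$ and $\kF M N^{-2/3+\delta}$ respectively (not $\kF^{-\delta}$ as you write). This nested integration is exactly what produces the prefactor $(1+\lambda^{-1})\lambda^{-1}$ in the non-optimal bound displayed in the Remark following the theorem, and is also why the lower restriction $\lambda\geq\kF^{-1/6}$ enters. Once you replace ``iterate on monomials'' by this secondary Gr\"onwall step and correct the $\mathcal{E}$ mechanism, your outline coincides with the paper's proof.
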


\begin{rem}
 Note that the right hand side of the bound is indeed small as long as $t \in \mathcal{O} (\kF^{-1}\lambda^{-1})$. The error $\kF^{-1/5}$ 
 is a result of the optimized choice of $M$ and $\delta$. The non-optimal error bound depends
 on the patch parameters and is given by 
 \begin{equation}
  C\left((1+\lambda^{-1})\lambda^{-1} (M^{-\frac{1}{2}} 
  + MN^{-\frac{2}{3} +\delta})  
  + (N^{-\frac{\delta}{2}}
  +N^{-\frac{1}{6}}M^{\frac{1}{4}})\right) \left(e^{C\lambda\kF t}-1\right). \\ 
 \end{equation}
\end{rem}

\subsection{Effective coherent state}

The effective time evolved state from \prettyref{thm:Main-thm} can be further simplified. Our second result shows how the dynamics can be approximated on the
level of states. 
In order to state the second main result, we introduce almost-bosonic coherent states.

Note that since $B\coloneqq c^{\ast}(\eta)-c(\eta)$ defines for all $\eta\in\bigoplus_{\kinG}l^{2}(\mathcal{I}_{k})$
a bounded operator and satisfies $B=-B^{\ast}$, the exponential operator
$e^{B}$ is well-defined and is unitary.
\begin{defn}
\label{def:Weyl-op}Define for $\eta\in\bigoplus_{\kinG}l^{2}(\mathcal{I}_{k})$
the Weyl operator 
\begin{equation}
W(\eta)\coloneqq e^{B}\coloneqq e^{c^{\ast}(\eta)-c(\eta)}.
\end{equation}
If $\eta\equiv\eta^{y}$ is additionally a bounded multiplication
operator for each $y\in\R^{3}$, we call $W(\eta)\phi\otimes\Omega$
a coupled coherent state with $\phi\otimes\Omega\in\domH$.
\end{defn}

\begin{rem}
If $c^{\ast},c$ would satisfy the CCR without error, one could use
the Baker-Campbell-Hausdorff formula to formally write 
\begin{align}
W(\eta)\phi\otimes\Omega= & e^{-\|\eta\|^{2}/2}e^{c^{\ast}(\eta)}\phi\otimes\bigg\{1,0,\ldots,0,\ldots\bigg\}\nonumber \\
= & e^{-\|\eta\|^{2}/2}e^{c^{\ast}(\eta)}\bigg\{\phi,0,\ldots,0,\ldots\bigg\}\nonumber \\
= & e^{-\|\eta_{s}\|^{2}/2}\bigg\{\phi,\eta\phi,\frac{\eta^{\otimes2}\phi}{\sqrt{2!}},\ldots,\frac{\eta^{\otimes n}\phi}{\sqrt{n!}},\ldots\bigg\}
\end{align}
i.e. the coupled coherent state corresponds to a superposition of
different particle number.  % Later on , the terms $c^{\ast}(\eta)$
% $c(\varphi)$ with the function $(\eta)_{\alpha}(k)\coloneqq(\eta^{y})_{\alpha}(k)\coloneqq e^{iky}\varphi_{\alpha}(k)$
%will take the role of an interaction term with a multiplication operator
%acting on $\phi$. Thus the occurrence of $\phi$ in each component
%of the vector representation stresses the coupling between the $L^{2}$
%function $\phi$ and the bosonic $n$-particle state via the interaction $\eta$.
\end{rem}

\begin{rem}
The coupled coherent state from \prettyref{def:Weyl-op} satisfies the following well-known properties of the Weyl operator
(cf. for example \cite[Chapter 3]{Benedikter2016} or \cite[Appendix A]{Frank2017}) up to certain error terms:
  \begin{itemize}
    \item (shift property) For $\eta,\xi \in\bigoplus_{\kinG}l^{2}(\mathcal{I}_{k})$ it holds
      \begin{equation}
        c(\xi)W(\eta) \phi\otimes\Omega \simeq \sum_{\kinG} \sumak \overline{\xi_{\alpha}(k)} \eta_{\alpha}(k) \phi\otimes\Omega, \label{eq:heuristic-shift-prop}
      \end{equation}

      \item (expectation of the number operator) For $\eta \in\bigoplus_{\kinG}l^{2}(\mathcal{I}_{k})$ it holds
      \begin{equation}
        \langle W(\eta)\phi\otimes\Omega, \mathcal{N}W(\eta)\phi\otimes\Omega \rangle \simeq 2 \|\eta\|^2, \label{eq:heuristic-number-exp}
      \end{equation}

      \item (time derivative of the Weyl operator) For $\eta_{t}\in\bigoplus_{\kinG}l^{2}(\mathcal{I}_{k})$
       differentiable in $t$ with derivative $\dot{\eta_{t}}\in\bigoplus_{\kinG}l^{2}(\mathcal{I}_{k})$ it holds:
      for all $t\in\R$:
      \begin{equation}
        \partial_{t}W(\eta_{t}) \simeq \left(c^{\ast}(\dot{\eta}_{t})-c(\dot{\eta}_{t})+i\textnormal{Im}\langle\dot{\eta}_{t},\eta_{t}\rangle\right)W(\eta_{t}), \label{eq:heuristic-Weyl-deriv}
      \end{equation}
  \end{itemize}
We give rigorous statements on the error terms and proofs of the approximate properties in \prettyref{lem:Approximate-shift}, \prettyref{prop:expectation-number-W},
\prettyref{lem:derivative-of-Weyl-op} of \prettyref{sec:Coheren-Proof}.
\end{rem}

Consider now the following state for all times $t\in\R$
\begin{align}
\psi_{t} & \coloneqq e^{iP(t)}W(\eta_{t})\phi\otimes\Omega\\
\text{with }P(t) & =2\text{Im}(\nu_{t})-E_{N}^{\text{pw}}t-\text{Im}\int_{0}^{t}\d s\langle\dot{\eta}_{s},\eta_{s}\rangleka \label{eq:coherent-phase}
\end{align}
with the choices of 
\begin{align}
\big(\eta_{s}\big)_{\alpha}(k)\coloneqq & \big(\eta_{s}^{y}\big)_{\alpha}(k)\coloneqq\frac{e^{-is\epsilon_{\alpha}(k)}-1}{\epsilon_{\alpha}(k)}(h_{y})_{\alpha}(k)=\frac{e^{-is\epsilon_{\alpha}(k)}-1}{\epsilon_{\alpha}(k)}\lambda\hat{V}(k)n_{\alpha}(k)e^{iky},\label{eq:Coherent-state-variableI}\\
\nu_{s} \coloneqq & \sum_{\kinG} \sumak \frac{e^{-is\epsilon_{\alpha}(k)}+is\epsilon_{\alpha}(k)-1}{\epsilon_{\alpha}(k)^{2}}|(h_{y})_{\alpha}(k)|^{2}\label{eq:Coherent-state-variableII}
\end{align}
for all $\kinG,\alpha\in\mathcal{I}_{k}$. Due to $\big(\eta_{s}\big)_{\alpha}(k)=-ie^{-is\epsilon_{\alpha}(k)/2}\frac{\text{sin}\left(\epsilon_{\alpha}(k)s/2\right)}{\epsilon_{\alpha}(k)/2}(h_{y})_{\alpha}(k)$
and \prettyref{lem:Approx-n_alpha(k)^2} the norm is bounded for all
$s\in\R$
\begin{align}
\|\eta_{s}\| & \equiv\bigg(\sum_{\kinG}\sumak|\big(\eta_{s}\big)_{\alpha}(k)|^{2}\bigg)^{1/2}\leq\min\left\{ \sqrt{\pi}\|\hat{V}(\cdot)^{1/2}\|_{2} \lambda \kF s,\sqrt{2\pi}\|\hat{V}\|_{2} \lambda \log(4\kF s+2)\right\} \label{eq:eta-bound-simple} 
\end{align}
as shown later in \prettyref{lem:eta-bounds} and similarly for $| \nu_s |$.

In particular, it holds $(\eta_{0},\nu_{0})=(0,0)$ and $\lim_{\epsilon_{\alpha}(k)\to0}(\eta_{s},\nu_{s})=(-is h_y,0)$
and therefore $\psi_{0}=\phi\otimes\Omega$. Thus, as mentioned before,
$\eta\equiv\eta^{y}$ as defined in \eqref{eq:Coherent-state-variableI}
corresponds to an interaction term with a bounded multiplication operator
acting on $L^{2}(\Lambda,\d y)$. The state $\psi_{t}=e^{iP(t)}W(\eta_{t})\phi\otimes\Omega$
can therefore be seen as a time-dependent coupled coherent state. 

The following theorem states that $\psi_{t}$ approximately corresponds
to the effective time evolution generated by $\Heff$.
\begin{thm}[Effective coherent dynamics] \label{thm:Effective-coherent-dynamics}
Consider the initial state 
$\psi=\phi\otimes\Omega\in\domH$ with $\sum_{i=1}^{3}\big(\|\partial_{y_{i}}\phi\|+\|\partial_{y_{i}}^{2}\phi\|\big)\leq c<\infty$
for a constant $c>0$. Under the assumptions of \prettyref{thm:Main-thm}, there exists a constant $C>0$
and a function $Q:\R_{\geq0}\to\R_{\ge0}$ monotonically increasing
with $Q(0)=0$ such that for all $t\geq0$ 
\[
\|e^{-i\mathbb{H}^{\text{eff}}t}\psi-e^{iP(t)}W(\eta_{t})\psi\|\leq C Q(\lambda \kF t)\max\{\kF^{-\frac{2}{15}},c\beta t\}
\]
with $P(t)$ given by \eqref{eq:coherent-phase}, $\eta_{t}$ given
by \eqref{eq:coherent-state-error-I}, $C$ and $Q$ depending only on the interaction $V$.
\end{thm}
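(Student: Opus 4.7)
The strategy is a Duhamel--Gr\"onwall argument showing that the ansatz $\Psi_t := e^{iP(t)}W(\eta_t)(\phi\otimes\Omega)$ nearly solves the effective Schr\"odinger equation $i\partial_t\Psi = \Heff\Psi$. Since $\eta_0 = 0$ and $P(0) = 0$, we have $\Psi_0 = \psi$. Using unitarity of $e^{-i\Heff t}$ together with Duhamel's formula, it suffices to establish a pointwise-in-$s$ bound
\begin{equation}
 \bigl\|(\,i\partial_s - \Heff)\Psi_s\bigr\| \le \textnormal{error}(s)
\end{equation}
and integrate.

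To compute the defect I would first differentiate $\Psi_s$ using the approximate Weyl derivative \eqref{eq:heuristic-Weyl-deriv} (made rigorous in \prettyref{lem:derivative-of-Weyl-op}) and then move the resulting $c^*(\dot\eta_s)$ and $c(\dot\eta_s)$ past $W(\eta_s)$ via the shift property \eqref{eq:heuristic-shift-prop} (\prettyref{lem:Approximate-shift}); since $\cak(\phi\otimes\Omega)=0$, all annihilation contributions collapse to scalars. On the other side, $\mathbb{D}_{\textnormal{B}}W(\eta_s)(\phi\otimes\Omega)$ and $\Phi(h_y)W(\eta_s)(\phi\otimes\Omega)$ are reduced by the same shift to $W(\eta_s)\bigl[c^*(\epsilon\eta_s+h_y)+\sigma(s)\bigr](\phi\otimes\Omega)$ with an explicit scalar $\sigma(s) = \sum_{\kinG}\sumak \epsilon_\alpha(k)|(\eta_s)_\alpha(k)|^2 + 2\,\textnormal{Re}\langle h_y,\eta_s\rangleka$. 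The explicit form of $\eta_s$ in \eqref{eq:Coherent-state-variableI} is precisely the solution of $i\dot\eta_s = \epsilon\eta_s + h_y$, and a short direct calculation (using that $\hat V$ and $n_\alpha$ are real) gives $\sigma(s)\equiv 0$, so the $c^*$-operator pieces of the defect cancel exactly. The remaining scalar requirement $\dot P(s) = \textnormal{Im}\langle\dot\eta_s,\eta_s\rangleka - E_N^{\textnormal{pw}}$ is satisfied by \eqref{eq:coherent-phase} because a direct differentiation of \eqref{eq:Coherent-state-variableII} gives $2\,\textnormal{Im}\,\dot\nu_s = 2\,\textnormal{Im}\langle\dot\eta_s,\eta_s\rangleka$.

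What is left are two error sources. (a) The approximate-CCR remainders in the Weyl derivative and shift lemmas contribute errors of order $\kF^{-2/15}$ for the optimized patch choice $M=N^{16/45}$, $\delta=2/15$; they appear pre-multiplied by the number operator on intermediate coherent states, whose expectation is controlled via \prettyref{prop:expectation-number-W} and \eqref{eq:eta-bound-simple} by a function of $\lambda\kF s$ that grows at most logarithmically. (b) The impurity-kinetic term $-\beta\Delta_y$ acts on the $y$-dependent Weyl operator $W(\eta_s^y)$; expanding
\begin{equation}
 \Delta_y\bigl[W(\eta_s^y)\phi\otimes\Omega\bigr] = W(\eta_s^y)(\Delta_y\phi)\otimes\Omega + 2\,\nabla_y W(\eta_s^y)\cdot\nabla_y\phi\otimes\Omega + \Delta_y W(\eta_s^y)\cdot\phi\otimes\Omega
\end{equation}
and observing that the $y$-dependence of $(\eta_s^y)_\alpha(k)$ sits only in the factor $e^{iky}$ (so $\partial_{y_i}\eta_s^y = ik_i\eta_s^y$ with $|k|$ uniformly bounded on $\supp\hat V$), each of the last two summands contributes $\mathcal O(\beta)$ per unit $s$ under the hypothesis $\sum_i(\|\partial_{y_i}\phi\| + \|\partial_{y_i}^2\phi\|)\le c$. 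Integrating in $s$ yields $\|\xi_t\| \le C\,Q(\lambda\kF t)\max\{\kF^{-2/15}, c\beta t\}$ as claimed.

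The main obstacle I foresee is the interplay between the $y$-derivative and the approximate CCR: pulling $\ccak$ through the $y$-dependent $W(\eta_s^y)$ and then differentiating in $y$ nests two approximate identities whose cross-terms must be controlled by sharp number-operator bounds on the intermediate states. A secondary subtlety is the near-degeneracy of the boson dispersion $\epsilon_\alpha(k) = 2\kF|k\cdot\omega_\alpha|$ close to the equator of the Fermi surface; this is precisely why the patch construction excludes a strip of width $N^{-\delta}$, and the resulting lower bound on $\epsilon_\alpha(k)$ keeps $\|\eta_s\|$ and $|\nu_s|$ from growing faster than logarithmically in $s$, so that the Gr\"onwall integration actually closes in the required time window $t\in\mathcal O(\kF^{-1}\lambda^{-1})$.
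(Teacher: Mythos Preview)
Your proposal is correct and follows essentially the same Duhamel strategy as the paper (sketched in \prettyref{rem:Heuristic-calculation}). The only organizational difference is that the paper keeps all $c^{\ast}$-operators to the left of $W(\eta_{s})$ and shifts only the annihilators, so that no term $\sum_{\kinG}\sumak\epsilon_{\alpha}(k)|(\eta_{s})_{\alpha}(k)|^{2}$ ever appears and the role of $\nu_{s}$ is simply to absorb the residual complex scalar $\langle h_{y},\eta_{s}\rangle-\langle i\dot\eta_{s},\eta_{s}\rangle$; you instead conjugate everything through $W$, which produces your additional scalar $\sigma(s)$ that you then verify cancels. Both bookkeeping schemes yield the identical exact cancellation via the ODE $i\dot\eta_{s}=\epsilon\eta_{s}+h_{y}$ and leave the same two residuals (approximate-CCR errors and the $h_{0}$-term). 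Two minor corrections: in your Laplacian expansion all three summands contribute to the $c\beta$ error, not only the last two, since the first has norm $\beta\|\Delta_{y}\phi\|\le c\beta$; and the argument is a plain Duhamel integral rather than a Gr\"onwall loop, because the number operator on $W(\eta_{s})\psi$ is controlled a~priori through $\|\eta_{s}\|^{2}$ via \prettyref{prop:expectation-number-W} and no self-referential term appears.
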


\begin{rem} \label{rem:coherent-bound}
Note that the upper bound above is indeed meaningful in the sense that
the bound is small for $t \in \mathcal{O} (\kF^{-1}\lambda^{-1})$ and 
$\beta \in o(\lambda \kF) $. The latter condition together with the upper bound for $\|\Delta_{y}W(\eta_{t})\phi\otimes\Omega\|$
from \prettyref{lem:Laplacian-estimate} ensures that the contribution from the kinetic energy term 
$h_0=-\beta \Delta_y$ of the impurity particle remains negligible on the relevant time scale. This seems to be
crucial in our approach since otherwise $h_0$ would generate non-trivial correlations
and make the coherent state form inapplicable. 
%Note that the function $Q$ is explicitly constructed in the proof.  
\end{rem}

\begin{rem}
Due to the explicit formulation of the coupled coherent
state provided in \eqref{eq:Coherent-state-variableI}, we can quantify
the number of collective excitations over time. More concretely, using
\eqref{eq:heuristic-number-exp} the term $\|\eta_{t}\|^2$, which
is calculated in \prettyref{lem:eta-bounds}, represents the expected
number of excitations generated by the interaction with the impurity.
Assuming that the impurity, together with its excitations, can
be interpreted as polaron-like quasi-particle, the quantity $\|\eta_{t}\|^2$
offers insight into the quasi-particle formation process. As displayed in \prettyref{fig:eta_t} the graph shows a parabolic growth followed by a logarithmic increase as qualitative feature.

\begin{figure}[h]  % Options: h (here), t (top), b (bottom), p (page of floats), ! (override)
  \centering
  \includegraphics[width=1\linewidth]{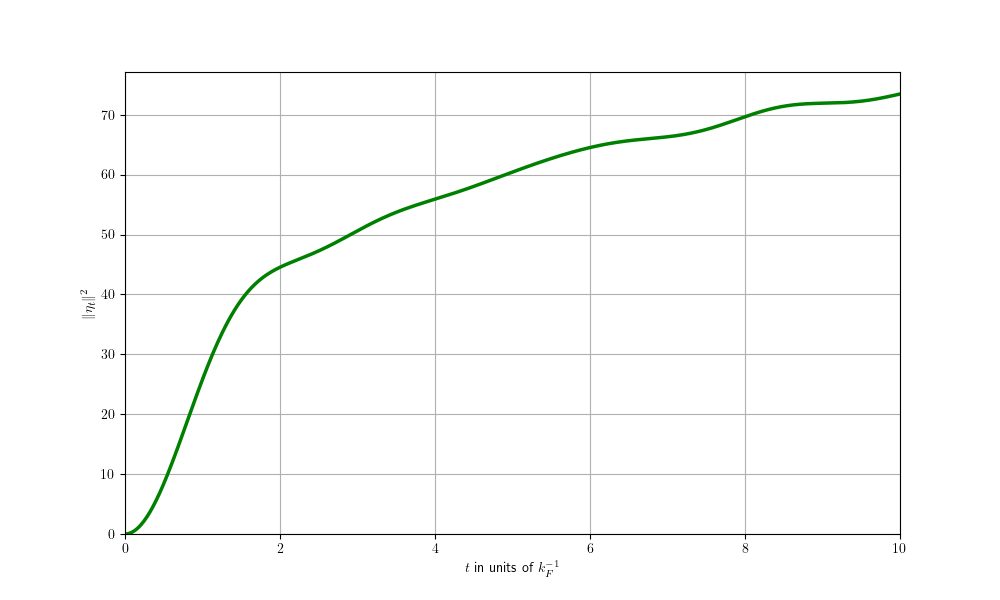}
  \caption{Plot of $\| \eta_t \|^2 $ using \prettyref{lem:eta-bounds} with constant $\hat V$  and $\lambda=1$. 
  As a qualitative feature one observes a parabolic growth followed by a logarithmic increase.}
  \label{fig:eta_t}
\end{figure}

\end{rem}

\begin{rem}
\label{rem:Heuristic-calculation}The proof is based on the following
observation: It holds by virtue of Duhamel's formula
\begin{align}
 & \|e^{-i\mathbb{H}^{\text{eff}}t}\psi-e^{iP(t)}W(\eta_{t})\psi\|\nonumber \\
 & =\|\psi-e^{i\mathbb{H}^{\text{eff}}t}e^{-iE_{N}^{\text{pw}}t}e^{i2\text{Im}(\nu_{t})}e^{-i\text{Im}\int_{0}^{t}\d s\langle\dot{\eta}_{s},\eta_{s}\rangle}W(\eta_{t})\psi\|\\
 & =\|\int_{0}^{t}\d s\ e^{i\mathbb{H}^{\text{eff}}s}e^{iP(s)}\left\{ \big(\mathbb{H}^{\text{eff}}-E_{N}^{\text{pw}}+2\text{Im}(\dot{\nu}_{s})-\text{Im}\langle\dot{\eta}_{s},\eta_{s}\rangle\big)W(\eta_{s})\psi-i\partial_{s}W(\eta_{s})\psi\right\} \|\\
 & \leq\int_{0}^{t}\d s\|\big(\mathbb{H}^{\text{eff}}-E_{N}^{\text{pw}}+2\text{Im}(\dot{\nu}_{s})-\text{Im}\langle\dot{\eta}_{s},\eta_{s}\rangle\big)W(\eta_{s})\psi-i\partial_{s}W(\eta_{s})\psi\|.
\end{align}

If the collective operators $\ccak$ and $\cak$ were exactly
bosonic, i.e. the CCR held without error, we could use the shift property
\eqref{eq:heuristic-shift-prop} of the Weyl operator to commute
$\cak$ to the vacuum $\Omega$ with the cost of some inner product
terms. In this case we would observe with the short-hand notation $c^{\ast}c(\epsilon)\coloneqq\sum_{\kinG}\sumak\epsilon_{\alpha}(k)\ccak\cak$
and by applying \eqref{eq:heuristic-Weyl-deriv} that 
\begin{align}
 & \big[\mathbb{H}^{\text{eff}}-E_{N}^{\text{pw}}+2\text{Im}(\dot{\nu}_{s})-i\left\{ c^{\ast}(\dot{\eta}_{s})-c(\dot{\eta}_{s})\right\} \big]W(\eta_{s})\psi\nonumber \\
= & \big[h_{0}+c^{\ast}c(\epsilon)+c^{\ast}(h_{y})+c(h_{y})+2\text{Im}(\dot{\nu}_{s})+c^{\ast}(-i\dot{\eta}_{s})-c(i\dot{\eta}_{s})\big]W(\eta_{s})\phi\otimes\Omega\nonumber \\
= & \big[h_{0}+c^{\ast}(\epsilon\eta_{s})+c^{\ast}(h_{y})+\langle h_{y},\eta_{s}\rangle+2\text{Im}(\dot{\nu}_{s})+c^{\ast}(-i\dot{\eta}_{s})-\langle i\dot{\eta}_{s},\eta_{s}\rangle\big]W(\eta_{s})\phi\otimes\Omega\nonumber \\
= & h_{0}W(\eta_{s})\phi\otimes\Omega\label{eq:exactly-bosonic-equality}
\end{align}
is exactly vanishing up to the kinetic energy term $h_{0}$ of the
impurity particle since $\eta_{t}$ as defined in \eqref{eq:Coherent-state-variableI}
solves the ODE $\epsilon\eta_{t}+h_{y}=i\dot{\eta}_{t}$ and $\nu_{t}$
as defined in \eqref{eq:Coherent-state-variableII} absorbs all scalar
terms. 
\end{rem}

\begin{rem}\label{rem:Fermi-blockade}
Note that as long as $\|\eta_t\|$ is of order 1, the coupled coherent state is different from a 
free decoupled dynamics with $\psi_t^{\textnormal{free}} \sim e^{i\beta \Delta_yt} \phi \otimes \Omega$.
Vice versa, if $\|\eta_t\| \in o(1)$ in terms of $\kF$ one can easily see with a Duhamel argument, analogue to the previous remark,
that $\|\psi_t^{\textnormal{free}}-e^{iP(t)}W(\eta_t)\psi\| \to 0$ for large $\kF$. This is because all the terms appearing in 
$\partial_t W(\eta_t)$ from \eqref{eq:heuristic-Weyl-deriv} can be bounded in terms of $\|\eta_t\|$.
Using \prettyref{eq:eta-bound-simple} we can identify the following cases:
\begin{itemize}
\item $\lambda = 1, t \in o(\kF^{-1})$: $\|\eta_t\| \in o(1)$ with respect to $\kF$ \\ 
The time scale is too short for forming excitations thus $\psi_t^{\textnormal{free}}$ 
is a good approximation. Note that this is compatible with \cite{mitrouskas2021effective} as mentioned in \prettyref{ssec:couplings}.
\item $\lambda =1, t \in \mathcal{O}(\kF^{-1}) $: $\|\eta_t\| \in \mathcal{O}(1)$ with respect to $\kF$ \\ 
On this time scale exitations of the Fermi gas become important and thus the free decoupled evolution is not a good approximation anymore.
We refer to the proof of \prettyref{cor:Corollary} on
how to show rigorously that the derived effective description is relevant on this time scale.
\item $\lambda \in o(1), 0 \leq t \in \mathcal{O} (\kF^{-1}\lambda^{-1})$: $\|\eta_t\| \in o(1)$ with respect to $\kF$ \\ 
Also here, $\psi_t^{\textnormal{free}}$ is a good description. Note that the number of excitations grows only logarithmically in time.
Thus for any $\lambda \in o(1)$ on gets $\|\eta_t\| \in o(1)$. In the case of $\lambda \in o(1)$
we expect this result of free decoupling to hold for even longer times, possibly of order 1.
In order to show this expected behavior, we expect a method different from ours to be more suitable, for example
a perturbative expansion in the spirit of \cite{mitrouskas2021effective}
to higher orders to exploit the small coupling.
\end{itemize}
\end{rem}

Now by virtue of the previous statement we are able to show that the
linear coupling term of the effective Hamiltonian is essential for
the effective description and cannot be neglected on an approximate
level for $\lambda=1$ and $t\in \mathcal O(\kF^{-1})$. 

Let 
\begin{equation}
\widetilde{\mathbb{H}^{\text{eff}}}\coloneqq\Heff-c^{\ast}(h_{y})-c(h_{y})\equiv\beta(-\Delta_{y})+\sum_{\kinG}\sumak\epsilon_{\alpha}(k)\ccak\cak+E_{N}^{\text{pw}}
\end{equation}
be the effective Hamiltonian without the linear coupling.
\begin{cor}\label{cor:Corollary}
Under the assumptions of \prettyref{thm:Effective-coherent-dynamics} with $\lambda =1$ and a $T>0$, there
exists a monotonically increasing function $C:\R_{\geq0}\to\R_{\geq0}$ only depending on $V$
such that for all $t\in[0,T]$
\[
\|R^{\ast}e^{-i\mathbb{H}t}R\psi-e^{-i\widetilde{\mathbb{H}^{\textnormal{eff}}}t}\psi\|\geq C(t)
-\mathcal{O}\left(\max\{\kF^{-\frac{2}{15}},\beta\kF^{-1}\}\right).
\]
In particular it holds $C(t)\in\mathcal{O}(1)$ with respect to $\kF$ for
all $t \in \mathcal O(\kF^{-1})$.
\end{cor}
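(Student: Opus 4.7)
The plan is a reverse triangle inequality anchored at the coupled coherent state $\psi_t^{\textnormal{coh}} \coloneqq e^{iP(t)}W(\eta_t)\phi\otimes\Omega$:
\begin{equation*}
\|R^{\ast}e^{-i\mathbb{H}t}R\psi - e^{-i\widetilde{\mathbb{H}^{\textnormal{eff}}}t}\psi\| \geq \|\psi_t^{\textnormal{coh}} - e^{-i\widetilde{\mathbb{H}^{\textnormal{eff}}}t}\psi\| - \|R^{\ast}e^{-i\mathbb{H}t}R\psi - \psi_t^{\textnormal{coh}}\|.
\end{equation*}
The subtracted term is controlled by another triangle inequality through $e^{-i\mathbb{H}^{\textnormal{eff}}t}\psi$ together with \prettyref{thm:Main-thm} and \prettyref{thm:Effective-coherent-dynamics}; for $\lambda=1$ and $t\in\mathcal{O}(\kF^{-1})$ the factors $e^{C\kF t}-1$ and $Q(\lambda\kF t)$ stay of order one, leaving a composite error of order $\max\{\kF^{-2/15},\beta\kF^{-1}\}$.

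For the positive term I would exploit the key structural fact that $\widetilde{\mathbb{H}^{\textnormal{eff}}}$ preserves the fermionic vacuum: $\mathbb{D}_{\textnormal{B}}\Omega = 0$ and $-\beta\Delta_y$ acts trivially on Fock space, so
\begin{equation*}
e^{-i\widetilde{\mathbb{H}^{\textnormal{eff}}}t}\phi\otimes\Omega = e^{-iE_N^{\textnormal{pw}}t}(e^{i\beta\Delta_y t}\phi)\otimes\Omega
\end{equation*}
lies entirely in the vacuum sector. Since both vectors are normalized, writing $P_{>0}$ for the projection onto the orthogonal complement of the Fock vacuum,
\begin{equation*}
\|\psi_t^{\textnormal{coh}} - e^{-i\widetilde{\mathbb{H}^{\textnormal{eff}}}t}\psi\|^2 \geq \|P_{>0}W(\eta_t)\phi\otimes\Omega\|^2 = 1 - \int_\Lambda|\phi(y)|^2\,|\langle\Omega,W(\eta_t^y)\Omega\rangle|^2\,\d y.
\end{equation*}

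It remains to upper-bound the almost-bosonic vacuum overlap and lower-bound $\|\eta_t\|$. In the exactly bosonic case the Baker-Campbell-Hausdorff formula gives $|\langle\Omega,W_{\textnormal{bos}}(\eta)\Omega\rangle|^2 = e^{-\|\eta\|^2}$; I would reproduce this in the almost-bosonic setting by a Duhamel computation on $s\mapsto\langle\Omega,e^{s(c^{\ast}(\eta_t^y)-c(\eta_t^y))}\Omega\rangle$, iterating the approximate shift property \prettyref{lem:Approximate-shift} and bounding the CCR defect $\mathcal{E}_\alpha(k,k')$ via the particle number, to conclude $|\langle\Omega,W(\eta_t^y)\Omega\rangle|^2 \leq e^{-\|\eta_t\|^2}+o(1)$ uniformly in $y$ (the norm $\|\eta_t^y\|$ is $y$-independent because the phase $e^{iky}$ does not affect it). For the lower bound on $\|\eta_t\|$, substituting the leading-order asymptotics $n_\alpha(k)^2 \sim 4\pi\kF^2|k\cdot\hat\omega_\alpha|/M$ into \prettyref{lem:eta-bounds} and reading the patch sum as a Riemann sum over $S^2$ yields, at $\lambda=1$ and $t=s\kF^{-1}$ with fixed $s>0$,
\begin{equation*}
\|\eta_t\|^2\ \xrightarrow[\kF\to\infty]{}\ 4\pi\sum_{k\in\Gamma}\frac{|\hat V(k)|^2}{|k|}\int_0^{s|k|}\frac{\sin^2 v}{v}\,\d v > 0,
\end{equation*}
which is manifestly monotone in $s$. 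Taking $C(t)$ to be a monotone non-negative minorant of $\sqrt{1-e^{-\|\eta_t\|^2}}$ and absorbing the oscillatory $o(1)$ into the error term yields the claim, with $C(t) \in \mathcal{O}(1)$ for $t\in\mathcal{O}(\kF^{-1})$.

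The main technical obstacle is the Duhamel control of the vacuum-to-vacuum matrix element of the almost-bosonic Weyl operator: the iteration generates intermediate states with growing fermion number, and the CCR defect contributions must be kept summable by exploiting the $1/(n_\alpha(k)n_\alpha(k'))$ suppression in the defect bound together with the compact support of $\hat V$. Everything else is bookkeeping: the triangle-inequality setup, the vacuum-preservation of $\widetilde{\mathbb{H}^{\textnormal{eff}}}$, and the Riemann-sum limit of $\|\eta_t\|^2$ are all elementary given the machinery already developed.
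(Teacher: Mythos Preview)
Your overall architecture is correct and matches the paper's approach: the reverse triangle inequality anchored at the coupled coherent state, the control of the subtracted piece via \prettyref{thm:Main-thm} and \prettyref{thm:Effective-coherent-dynamics}, and the key structural observation that $\widetilde{\mathbb{H}^{\textnormal{eff}}}$ preserves the vacuum sector so that $e^{-i\widetilde{\mathbb{H}^{\textnormal{eff}}}t}\psi$ remains a pure impurity state tensored with $\Omega$. Your Riemann-sum evaluation of the limiting $\|\eta_t\|^2$ is likewise exactly what \prettyref{lem:eta-bounds} together with \prettyref{lem:Approx-n_alpha(k)^2} delivers.

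The one place where you take a different route is the lower bound on $\|W(\eta_t)\psi - e^{-i\widetilde{\mathbb{H}^{\textnormal{eff}}}t}\psi\|$. You propose a fresh Duhamel iteration on $s\mapsto\langle\Omega,e^{sB}\Omega\rangle$ to recover the Gaussian vacuum overlap up to CCR defects; this is valid but, as you note, it is the main technical cost of your argument. The paper instead recycles \prettyref{prop:expectation-number-W}: since $\mathcal{N}e^{-i\widetilde{\mathbb{H}^{\textnormal{eff}}}t}\psi=0$ and $\mathcal{N}=\mathcal{N}P_{>0}$, Cauchy--Schwarz gives
\[
\|P_{>0}W(\eta_t)\psi\|\ \geq\ \frac{\langle W(\eta_t)\psi,\mathcal{N}W(\eta_t)\psi\rangle}{\|\mathcal{N}W(\eta_t)\psi\|}\ \simeq\ \frac{2\|\eta_t\|^2}{\|\mathcal{N}W(\eta_t)\psi\|},
\]
and the denominator is already controlled by the number-operator bounds on almost-bosonic coherent states used throughout Section~\ref{sec:Coheren-Proof}. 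This avoids re-deriving overlap estimates and explains why \prettyref{prop:expectation-number-W} is singled out in Remark~3.3 even though it is not needed for \prettyref{thm:Effective-coherent-dynamics} itself. Both routes arrive at the same conclusion; yours is more self-contained, the paper's is shorter because it reuses machinery already in place.
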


\section{Proof of \prettyref{thm:Main-thm}}

In all of our estimates, we need to control the number operator $\mathcal{N}$
acting on the time evolved state $\psi_{t}=e^{-i\mathbb{H}^{\textnormal{eff}}t}\phi\otimes\Omega$.
The following statement shows that the effective time evolution preserves
the order of magnitude of the number of excitations: 
\begin{prop}[Effective time evolution of the number operator]
\label{prop:Effective-time-evolution-number-op}There exists a constant
$C>0$ only depending on $V$ such that it holds for all $t\in\R,n\in\mathbb{N}$
and $\psi\in\domH$
\[
\langle e^{-i\mathbb{H}^{\textnormal{eff}}t}\psi,(\mathcal{N}+1)^{n}e^{-i\mathbb{H}^{\textnormal{eff}}t}\psi\rangle\leq e^{nC\lambda\kF t}\langle\psi,(\mathcal{N}+3)^{n}\psi\rangle.
\]
\end{prop}

% \begin{rem}
% In the semiclassical regime it holds for $\lambda\equiv\kF^{-2}\sim N^{-\frac{3}{2}}$
% that $\lambda\kF\sim\hbar$ and therefore the exponential bound
% in 
% \[
% \langle e^{-i\mathbb{H}^{\text{eff}}t/\hbar}\psi,(\mathcal{N}+1)^{n}e^{-i\mathbb{H}^{\text{eff}}t/\hbar}\psi\rangle\leq e^{nC\lambda\kFt/\hbar}\langle\psi,(\mathcal{N}+3)^{n}\psi\rangle
% \]
% is of order 1 for times $t\in\mathcal{O}(1)$.
% \end{rem}

\begin{proof}
We want to use Grönwall's lemma and therefore estimate the derivative
\begin{align}
 & \bigg|i\partial_{t}\langle e^{-i\mathbb{H}^{\text{eff}}t}\psi,(\mathcal{N}+3)^{n}e^{-i\mathbb{H}^{\text{eff}}t}\psi\rangle\bigg|\nonumber \\
\leq & \bigg|\langle e^{-i\mathbb{H}^{\text{eff}}t}\psi,\sum_{j=0}^{n-1}(\mathcal{N}+3)^{j}[\mathcal{N},\mathbb{H}^{\text{eff}}](\mathcal{N}+3)^{n-j-1}e^{-i\mathbb{H}^{\text{eff}}t}\psi\rangle\bigg|\nonumber \\
= & \bigg|2\sum_{\kinG}\sumak\phiak\sum_{j=0}^{n-1}\langle e^{-i\mathbb{H}^{\text{eff}}t}\psi,(\mathcal{N}+3)^{j}\big(\ccak-\cak\big)(\mathcal{N}+3)^{n-j-1}e^{-i\mathbb{H}^{\text{eff}}t}\psi\rangle\bigg|
\end{align}

We split the difference of $\big(\ccak-\cak\big)$ and consider the
term with $\ccak$ first. Insert here $\id=(\mathcal{N}+1)^{\frac{n}{2}-1-j}(\mathcal{N}+1)^{j+1-\frac{n}{2}}$
between $(\mathcal{N}+3)^{j}$ and $\ccak$ and use the commutation
$\mathcal{N}\ccak=\ccak(\mathcal{N}+2)$ to obtain 
\begin{equation}
(\mathcal{N}+3)^{j}\ccak(\mathcal{N}+3)^{n-j-1}=(\mathcal{N}+3)^{j}(\mathcal{N}+1)^{\frac{n}{2}-1-j}\ccak(\mathcal{N}+3)^{\frac{n}{2}}.
\end{equation}

We introduce the notation $\xi_{j}\coloneqq(\mathcal{N}+1)^{\frac{n}{2}-1-j}(\mathcal{N}+3)^{j}e^{-i\mathbb{H}^{\text{eff}}t}\psi$
and $\tilde{\xi}\coloneqq(\mathcal{N}+3)^{\frac{n}{2}}e^{-i\mathbb{H}^{\text{eff}}t}\psi$
 to estimate
\begin{align}
 & \bigg|\sum_{\kinG}\sumak\phiak\sum_{j=0}^{n-1}\langle\xi_{j},\ccak\tilde{\xi}\rangle\bigg|\nonumber \\
\leq & \sum_{\kinG}\sumak|\phiak|\sum_{j=0}^{n-1}\|\cak\xi_{j}\|\ \|\tilde{\xi}\|\nonumber \\
\leq & C\lambda\sum_{\kinG}|\hat{V}(k)|\bigg(\sumak n_{\alpha}(k)^{2}\bigg)^{1/2}\sum_{j=0}^{n-1}\bigg(\sumak\|\cak\xi_{j}\|^{2}\bigg)^{1/2}\|\tilde{\xi}\|\nonumber \\
\leq & C\lambda\kF\sum_{\kinG}|\hat{V}(k)|\sum_{j=0}^{n-1}\|\mathcal{N}^{1/2}\xi_{j}\|\tilde{\xi}\|\nonumber \\
\leq & C\lambda\kF\|\hat{V}\|_{1}\sum_{j=0}^{n-1}\|(\mathcal{N}+1)\xi_{j}\|\ \|\tilde{\xi}\|\nonumber \\
\leq & Cn\lambda\kF\|\hat{V}\|_{1}\langle e^{-i\mathbb{H}^{\text{eff}}t}\psi,(\mathcal{N}+3)^{n}e^{-i\mathbb{H}^{\text{eff}}t}\psi\rangle
\end{align}
where we used \prettyref{lem:Pair-operator-bounds} and \prettyref{lem:Approx-n_alpha(k)^2-wo-sum}
in the fourth line and the operator inequality $\mathcal{N}^{1/2}\leq(\mathcal{N}+1)$
in the fifth line. Note that $|k|<C$ for all $\kinG$ since $\Gamma \subset \supp \hat{V}$ and $\hat V$ has bounded support by assumption.

The second term with $\cak$ can be treated by inserting $\id=(\mathcal{N}+1)^{\frac{n}{2}-j}(\mathcal{N}+1)^{j-\frac{n}{2}}$
and using the commutation $\cak\mathcal{N}=(\mathcal{N}+2)\cak$.
\begin{equation}
(\mathcal{N}+3)^{j}\cak(\mathcal{N}+3)^{n-j-1}=(\mathcal{N}+3)^{\frac{n}{2}}\cak(\mathcal{N}+1)^{j-\frac{n}{2}}(\mathcal{N}+3)^{n-j-1}.
\end{equation}

We introduce the notation $\chi_{j}\coloneqq(\mathcal{N}+1)^{j-\frac{n}{2}}(\mathcal{N}+3)^{n-j-1}e^{-i\mathbb{H}^{\text{eff}}t}\psi$
\begin{align}
 & \bigg|\sum_{\kinG}\sumak\phiak\sum_{j=0}^{n-1}\langle\tilde{\xi},\cak\chi_{j}\rangle\bigg|\nonumber \\
\leq & \sum_{\kinG}\sumak|\phiak|\sum_{j=0}^{n-1}\|\cak\chi_{j}\|\ \|\tilde{\xi}\|\nonumber \\
\leq & C\lambda\sum_{\kinG}|\hat{V}(k)|\bigg(\sumak n_{\alpha}(k)^{2}\bigg)^{1/2}\sum_{j=0}^{n-1}\bigg(\sum_{\alpha}\|\cak\chi_{j}\|^{2}\bigg)^{1/2}\|\tilde{\xi}\|\nonumber \\
\leq & C\lambda\kF\sum_{\kinG}|\hat{V}(k)|\sum_{j=0}^{n-1}\|\mathcal{N}^{1/2}\chi_{j}\|\tilde{\xi}\|\nonumber \\
\leq & C\lambda\kF\|\hat{V}\|_{1}\sum_{j=0}^{n-1}\|(\mathcal{N}+1)\chi_{j}\|\ \|\tilde{\xi}\|\nonumber \\
\leq & Cn\lambda\kF\|\hat{V}\|_{1}\langle e^{-i\mathbb{H}^{\text{eff}}t}\psi,(\mathcal{N}+3)^{n}e^{-i\mathbb{H}^{\text{eff}}t}\psi\rangle.
\end{align}
Altogether it follows with the Grönwall's lemma that 
\begin{equation}
\langle e^{-i\mathbb{H}^{\text{eff}}t}\psi,(\mathcal{N}+3)^{n}e^{-i\mathbb{H}^{\text{eff}}t}\psi\rangle\leq\exp\big(Cn\lambda\kF\|\hat{V}\|_{1}t\big)\langle\psi,(\mathcal{N}+3)^{n}\psi\rangle
\end{equation}
which is the desired result since $\|\hat{V}\|_{1}<C$.
\end{proof}
The following statement shows that the fermionic kinetic energy term
\prettyref{eq:H_0-def} can be approximated by the almost-bosonic
kinetic energy term.
\begin{prop}[Approximation of the kinetic energy]
\label{prop:Approx-kinetic-energy}There exists a constant $C>0$ only depending on $V$
such that it holds for all $t\in\mathbb{R}$ and $\psi\in\domH$  
\begin{align}
 & \left|\|(\mathbb{H}_{0}-\mathbb{D}_{B})e^{-i\mathbb{H}^{\textnormal{eff}}t}\psi\|-\|(\mathbb{H}_{0}-\mathbb{D}_{B})\psi\|\right|\nonumber \\
 & \leq C(1+\lambda^{-1}) N^{\frac{1}{3}}\left(e^{C\lambda\kF t}-1\right)\left(M^{-\frac{1}{2}}\|(\mathcal{N}+3)\psi\|+\kF MN^{-1+\delta}\|(\mathcal{N}+3)^{2}\psi\|\right)\label{eq:Approx-kinetic-energy}
\end{align}
using \prettyref{prop:Effective-time-evolution-number-op}.
\end{prop}

\begin{rem}
Note that in the case of $\psi\equiv\phi\otimes\Omega$ it holds $(\mathbb{H}_{0}-\mathbb{D}_{B})\psi=0$
and $\|(\mathcal{N}+3)^{2}\psi\|=9\leq C$. Also note that $\left(e^{C\lambda\kF t}-1\right)=\lambda\kF t+\mathcal{O}((\lambda\kF t)^{2})$. 
\end{rem}

\begin{proof}
It holds 
\begin{align}
 & \left|\partial_{t}\|(\mathbb{H}_{0}-\mathbb{D}_{B})e^{-i\mathbb{H}^{\text{eff}}t}\psi\|^{2}\right|\nonumber \\
 & =\left|\partial_{t}\langle e^{-i\mathbb{H}^{\text{eff}}t}\psi,(\mathbb{H}_{0}-\mathbb{D}_{B})^{2}e^{-i\mathbb{H}^{\text{eff}}t}\psi\rangle\right|\nonumber \\
 & =\left|\langle e^{-i\mathbb{H}^{\text{eff}}t}\psi,[(\mathbb{H}_{0}-\mathbb{D}_{B})^{2},\mathbb{H}^{\text{eff}}]e^{-i\mathbb{H}^{\text{eff}}t}\psi\rangle\right|\nonumber \\
 & =\left|\langle e^{-i\mathbb{H}^{\text{eff}}t}\psi,\left((\mathbb{H}_{0}-\mathbb{D}_{B})[(\mathbb{H}_{0}-\mathbb{D}_{B}),\mathbb{H}^{\text{eff}}]+[(\mathbb{H}_{0}-\mathbb{D}_{B}),\mathbb{H}^{\text{eff}}](\mathbb{H}_{0}-\mathbb{D}_{B})\right)e^{-i\mathbb{H}^{\text{eff}}t}\psi\rangle\right|\nonumber \\
 & \leq2\left|\langle e^{-i\mathbb{H}^{\text{eff}}t}\psi,(\mathbb{H}_{0}-\mathbb{D}_{B})[(\mathbb{H}_{0}-\mathbb{D}_{B}),\mathbb{H}^{\text{eff}}]e^{-i\mathbb{H}^{\text{eff}}t}\psi\rangle\right| \\
\end{align}

and make use of 
\begin{align}
[(\mathbb{H}_{0}-\mathbb{D}_{B}),\ccak] & \eqqcolon\mathfrak{E}_{\alpha}^{\text{lin}}(k)^{\ast}-\Ebos_{\alpha}(k)^{\ast}\eqqcolon\mathfrak{E}_{\alpha}(k)^{\ast},\\{}
[(\mathbb{H}_{0}-\mathbb{D}_{B}),\cak] & =-\mathfrak{E}_{\alpha}(k)
\end{align}
to arrive at 
\begin{align}
[(\mathbb{H}_{0}-\mathbb{D}_{B}),\mathbb{H}^{\text{eff}}] & =[(\mathbb{H}_{0}-\mathbb{D}_{B}),c^{\ast}c(\epsilon)+c^{\ast}(h_{y})+c(h_{y})]\\
 & =\langle\mathfrak{E},c(\epsilon)\rangleka-\langle c(\epsilon),\mathfrak{E}\rangleka+\langle\mathfrak{E},h_{y}\rangleka-\langle h_{y},\mathfrak{E}\rangleka
\end{align}
with the short notation $\langle A,B\rangleka\coloneqq\sum_{\kinG}\sumak A_{\alpha}^{\ast}(k)B_{\alpha}(k)$.

Bounds for the error terms are readily provided in \prettyref{lem:Elin-bound}
and \prettyref{lem:Ebos-bound} of the appendix: 
\begin{align}
\sumak\|\Elin_{\alpha}(k)\psi\|^{2} & \leq C\left( N^{\frac{1}{3}}M^{-\frac{1}{2}}\right)^{2}\|(\mathcal{N}+1)^{\frac{1}{2}}\psi\|^{2},\\
\sumak\|\mathfrak{E}_{\alpha}^{\text{B}}(k)\psi\|^{2} & \leq C\left(\kF MN^{-\frac{2}{3}+\delta}\right)^{2}\|(\mathcal{N}+1)^{\frac{3}{2}}\psi\|^{2}.
\end{align}

Furthermore use the bounds $\sumak\ccak\cak\leq\mathcal{N}$, $\epsilon_{\alpha}(k)\leq C\kF$
and that $\Ebos_{\alpha}(k),\Elin_{\alpha}(k),\cak$ all annihilate
exactly two fermions to estimate 
\begin{align}
 & \langle e^{-i\mathbb{H}^{\text{eff}}t}\psi,(\mathbb{H}_{0}-\mathbb{D}_{B})\langle c(\epsilon),\Elin-\Ebos\rangleka e^{-i\mathbb{H}^{\text{eff}}t}\psi\rangle\nonumber \\
\leq & \sum_{\kinG}\sumak\left|\langle\epsilon_{\alpha}(k)\cak(\mathcal{N}+1)^{-1/2}(\mathbb{H}_{0}-\mathbb{D}_{B})e^{-i\mathbb{H}^{\text{eff}}t}\psi,\Elin_{\alpha}(k)(\mathcal{N}+1)^{1/2}e^{-i\mathbb{H}^{\text{eff}}t}\psi\rangle\right|\nonumber \\
 & +\sum_{\kinG}\sumak\left|\langle\epsilon_{\alpha}(k)\cak(\mathcal{N}+1)^{-1/2}(\mathbb{H}_{0}-\mathbb{D}_{B})e^{-i\mathbb{H}^{\text{eff}}t}\psi,\Ebos_{\alpha}(k)(\mathcal{N}+1)^{1/2}e^{-i\mathbb{H}^{\text{eff}}t}\psi\rangle\right|\\
\leq & \sum_{\kinG}\bigg(\sumak\|\epsilon_{\alpha}(k)\cak(\mathcal{N}+1)^{-1/2}(\mathbb{H}_{0}-\mathbb{D}_{B})e^{-i\mathbb{H}^{\text{eff}}t}\psi\|^{2}\bigg)^{1/2}\nonumber \\
 & \quad\times\left( \bigg(\sumak\|\Elin_{\alpha}(k)(\mathcal{N}+1)^{1/2}e^{-i\mathbb{H}^{\text{eff}}t}\psi\|^{2}\bigg)^{1/2}+\bigg(\sumak\|\Ebos_{\alpha}(k)(\mathcal{N}+1)^{1/2}e^{-i\mathbb{H}^{\text{eff}}t}\psi\|^{2}\bigg)^{1/2}\right) \\
\leq & C\kF\|(\mathbb{H}_{0}-\mathbb{D}_{B})e^{-i\mathbb{H}^{\text{eff}}t}\psi\|\left(  N^{\frac{1}{3}}M^{-\frac{1}{2}}\|(\mathcal{N}+1)e^{-i\mathbb{H}^{\text{eff}}t}\psi\|+\kF MN^{-\frac{2}{3}+\delta}\|(\mathcal{N}+1)^{2}e^{-i\mathbb{H}^{\text{eff}}t}\psi\|\right) \\
\leq & C^{2}\kF N^{\frac{1}{3}}\|(\mathbb{H}_{0}-\mathbb{D}_{B})e^{-i\mathbb{H}^{\text{eff}}t}\psi\|\left(M^{-\frac{1}{2}}\|(\mathcal{N}+1)e^{-i\mathbb{H}^{\text{eff}}t}\psi\|+\kF MN^{-1+\delta}\|(\mathcal{N}+1)^{2}e^{-i\mathbb{H}^{\text{eff}}t}\psi\|\right).
\end{align}

The term $\langle c(\epsilon),\mathfrak{E}\rangleka$
can be treated analogously. The other terms can be estimated similarly using $|\phiak|\leq C\lambda|\hat{V}(k)||n_{\alpha}(k)|$
and Cauchy-Schwarz inequality with \eqref{lem:Approx-n_alpha(k)^2-wo-sum}
\begin{align}
 & \langle e^{-i\mathbb{H}^{\text{eff}}t}\psi,(\mathbb{H}_{0}-\mathbb{D}_{B})\langle\Elin-\Ebos,h_{y}\rangleka e^{-i\mathbb{H}^{\text{eff}}t}\psi\rangle\nonumber \\
\leq & C\lambda\sum_{\kinG}|\hat{V}(k)|\sumak\left|\langle n_{\alpha}(k)(\mathbb{H}_{0}-\mathbb{D}_{B})e^{-i\mathbb{H}^{\text{eff}}t}\psi,\Elin_{\alpha}(k)e^{-i\mathbb{H}^{\text{eff}}t}\psi\rangle\right|\nonumber \\
 & +C\lambda\sum_{\kinG}|\hat{V}(k)|\sumak\left|\langle n_{\alpha}(k)(\mathbb{H}_{0}-\mathbb{D}_{B})e^{-i\mathbb{H}^{\text{eff}}t}\psi,\Ebos_{\alpha}(k)e^{-i\mathbb{H}^{\text{eff}}t}\psi\rangle\right|\\
\leq & C\lambda\kF\|(\mathbb{H}_{0}-\mathbb{D}_{B})e^{-i\mathbb{H}^{\text{eff}}t}\psi\|\sum_{\kinG}|\hat{V}(k)|\Bigg\{\bigg(\sumak\|\Elin_{\alpha}(k)e^{-i\mathbb{H}^{\text{eff}}t}\psi\|^{2}\bigg)^{1/2}\\
 & \hphantom{C\lambda\kF\|(\mathbb{H}_{0}-\mathbb{D}_{B})e^{-i\mathbb{H}^{\text{eff}}t}\psi\|\sum_{\kinG}|\hat{V}(k)|\bigg\{}+\bigg(\sumak\|\Ebos_{\alpha}(k)e^{-i\mathbb{H}^{\text{eff}}t}\psi\|^{2}\bigg)^{1/2}\Bigg\}\\
\leq & C\lambda\|\hat{V}\|_{1}\kF\|(\mathbb{H}_{0}-\mathbb{D}_{B})e^{-i\mathbb{H}^{\text{eff}}t}\psi\|\bigg\{ M^{-\frac{1}{2}}N^{\frac{1}{3}}\|(\mathcal{N}+1)^{\frac{1}{2}}e^{-i\mathbb{H}^{\text{eff}}t}\psi\|\\
 & \hphantom{C\lambda\|\hat{V}\|_{1}\kF\|(\mathbb{H}_{0}-\mathbb{D}_{B})e^{-i\mathbb{H}^{\text{eff}}t}\psi\|\bigg\{}+\kF MN^{-\frac{2}{3}+\delta}\|(\mathcal{N}+1)^{\frac{3}{2}}e^{-i\mathbb{H}^{\text{eff}}t}\psi\|\bigg\}\nonumber \\
\leq & C\lambda\kF M^{-\frac{1}{2}}N^{\frac{1}{3}}\|(\mathbb{H}_{0}-\mathbb{D}_{B})e^{-i\mathbb{H}^{\text{eff}}t}\psi\|\bigg\{\|(\mathcal{N}+1)^{\frac{1}{2}}e^{-i\mathbb{H}^{\text{eff}}t}\psi\|\\
 & \hphantom{C\lambda\kF M^{-\frac{1}{2}} N^{\frac{1}{3}}\|(\mathbb{H}_{0}-\mathbb{D}_{B})e^{-i\mathbb{H}^{\text{eff}}t}\psi\|\bigg\{}+\kF M^{\frac{3}{2}}N^{-1+\delta}\|(\mathcal{N}+1)^{\frac{3}{2}}e^{-i\mathbb{H}^{\text{eff}}t}\psi\|\bigg\}.
\end{align}

Thus we derive 
\begin{align}
 & \partial_{t}\|(\mathbb{H}_{0}-\mathbb{D}_{B})e^{-i\mathbb{H}^{\text{eff}}t}\psi\|\nonumber \\
\leq & C\lambda\kF M^{-\frac{1}{2}}N^{\frac{1}{3}}\left(\|(\mathcal{N}+1)e^{-i\mathbb{H}^{\text{eff}}t}\psi\|+\kF M^{\frac{3}{2}}N^{-1+\delta}\|(\mathcal{N}+1)^{2}e^{-i\mathbb{H}^{\text{eff}}t}\psi\|\right)\nonumber \\
 & +C\lambda\kF N^{\frac{1}{3}}\left(\|(\mathcal{N}+1)^{\frac{1}{2}}e^{-i\mathbb{H}^{\text{eff}}t}\psi\|+\kF M^{\frac{3}{2}}N^{-1+\delta}\|(\mathcal{N}+1)^{\frac{3}{2}}e^{-i\mathbb{H}^{\text{eff}}t}\psi\|\right)\\
\leq & C^{2}\kF N^{\frac{1}{3}}\left(e^{C\lambda\kF t}\|(\mathcal{N}+3)\psi\|+\kF M^{\frac{3}{2}}N^{-1+\delta}e^{2C\lambda\kF t}\|(\mathcal{N}+3)^{2}\psi\|\right)\nonumber \\
 & +C\lambda\kF N^{\frac{1}{3}}\left(e^{\frac{1}{2}C\lambda\kF t}\|(\mathcal{N}+3)^{\frac{1}{2}}\psi\|+\kF M^{\frac{3}{2}}N^{-1+\delta}e^{\frac{3}{2}C\lambda\kF t}\|(\mathcal{N}+1)^{\frac{3}{2}}\psi\|\right)
\end{align}
where we used the Grönwall bound from \prettyref{prop:Effective-time-evolution-number-op} in the last inequality.
Now integrating over $t$ yields
\begin{align}
 & \|(\mathbb{H}_{0}-\mathbb{D}_{B})e^{-i\mathbb{H}^{\text{eff}}t}\psi\|-\|(\mathbb{H}_{0}-\mathbb{D}_{B})\psi\|\nonumber \\
\leq & C^{2}\lambda^{-1}M^{-\frac{1}{2}}N^{\frac{1}{3}}\left(e^{C\lambda\kF t}-1\right)\left(\|(\mathcal{N}+3)\psi\|+\kF M^{\frac{3}{2}}N^{-1+\delta}\|(\mathcal{N}+3)^{2}\psi\|\right)\nonumber \\
 & +C M^{-\frac{1}{2}}N^{\frac{1}{3}}\left(e^{C\lambda\kF t}-1\right)\left(\|(\mathcal{N}+3)^{\frac{1}{2}}\psi\|+\kF M^{\frac{3}{2}}N^{-1+\delta}\|(\mathcal{N}+1)^{\frac{3}{2}}\psi\|\right)
\end{align}
which corresponds to the desired result.
\end{proof}
We are now ready to give the proof of the main theorem.
\begin{proof}[Proof of of \prettyref{thm:Main-thm}]
 We employ Duhamel's formula for $t\mapsto e^{i\mathbb{H}t}Re^{-i\mathbb{H}^{\text{eff}}t}$:

\begin{align}
\|R^{\ast}e^{-i\mathbb{H}t}R\psi-e^{-i\mathbb{H}^{\text{eff}}t}\psi\| & =\|R\psi-e^{i\mathbb{H}t}Re^{-i\mathbb{H}^{\text{eff}}t}\psi\|\\
 & =\|\int_{0}^{t}\d s\ e^{i\mathbb{H}s}\big(\mathbb{H}R-R\mathbb{H}^{\text{eff}}\big)e^{-i\mathbb{H}^{\text{eff}}s}\psi\|\\
 & \leq\int_{0}^{t}\d s\ \|(R^{\ast}\mathbb{H}R-\mathbb{H}^{\text{eff}})e^{-i\mathbb{H}^{\text{eff}}s}\psi\|
\end{align}
where we used that $R$ is unitary.

From our considerations in \eqref{eq:ph-transformed-H} it follows
that 

\begin{align}
R^{\ast}\mathbb{H}R-\mathbb{H}^{\textnormal{eff}} & =-\beta\Delta_{y}+\mathbb{H}_{0}+b^{\ast}(\tilde{h}_{y})+b(\tilde{h}_{y})+E_{N}^{\text{pw}}+\mathcal{E}-\mathbb{H}^{\textnormal{eff}}\nonumber \\
 & =\mathbb{H}_{0}-\mathbb{D}_{B}+b^{\ast}(\tilde{h}_{y})-c^{\ast}(h_{y})+b(\tilde{h}_{y})-c(h_{y})+\mathcal{E}.
\end{align}
The interaction terms can be approximated in the sense of \prettyref{lem:Approx-of-patch-operators}:
\begin{align*}
\|\left(b^{\ast}(\tilde{h}_{y})-c^{\ast}(h_{y})\right)\psi\| & \leq\lambda\sum_{\kinG}|\hat{V}(k)|\ \|\big(b(k)-\sumak n_{\alpha}(k)\cak\big)\psi\|\\
 & \leq C\lambda N^{\frac{1}{3}}\|\hat{V}\|_{1}(N^{-\frac{\delta}{2}}+N^{-\frac{1}{6}}M^{\frac{1}{4}})\|(\mathcal{N}+1)^{\frac{1}{2}}\psi\|.
\end{align*}

Therefore by combining the bounds from \prettyref{prop:Approx-kinetic-energy}
with $(\mathbb{H}_{0}-\mathbb{D}_{B})\psi=0$, \prettyref{lem:Estimate-non-bosonizable}
with $\|\hat{V}\|_{1}<C$ and \prettyref{prop:Effective-time-evolution-number-op}
we obtain
\begin{align}
 & \|R^{\ast}e^{-i\mathbb{H}t}R\psi-e^{-i\mathbb{H}^{\text{eff}}t}\psi\|\nonumber \\
 & \leq\int_{0}^{t}\d s\ \|(\mathbb{H}_{0}-\mathbb{D}_{B})e^{-i\mathbb{H}^{\text{eff}}s}\psi\|+\int_{0}^{t}\d s\ \|\mathcal{E}e^{-i\mathbb{H}^{\text{eff}}s}\psi\|\nonumber \\
 & \quad+\int_{0}^{t}\d s\ \|\left(b^{\ast}(\tilde{h}_{y})-c^{\ast}(h_{y})\right)e^{-i\mathbb{H}^{\text{eff}}s}\psi\|+\int_{0}^{t}\d s\ \|\left(b(\tilde{h}_{y})-c(h_{y})\right)e^{-i\mathbb{H}^{\text{eff}}s}\psi\|\nonumber \\
 & \leq C(1+\lambda^{-1}) N^{\frac{1}{3}}\left(M^{-\frac{1}{2}}\|(\mathcal{N}+3)\psi\|+\kF MN^{-1+\delta}\|(\mathcal{N}+3)^{2}\psi\|\right)\int_{0}^{t}\d s\left(e^{C\lambda\kF t}-1\right)\nonumber \\
 & \quad+C\lambda\int_{0}^{t}\d s\ \|\mathcal{N}e^{-i\mathbb{H}^{\text{eff}}s}\psi\|\nonumber \\
 & \quad+C\lambda N^{\frac{1}{3}}(N^{-\frac{\delta}{2}}+N^{-\frac{1}{6}}M^{\frac{1}{4}})\int_{0}^{t}\d s\ \|(\mathcal{N}+1)^{\frac{1}{2}}e^{-i\mathbb{H}^{\text{eff}}s}\psi\|\nonumber \\
 & \leq C(1+\lambda^{-1})\lambda^{-1}N^{\frac{1}{3}}\left(\kF^{-1}M^{-\frac{1}{2}}\|(\mathcal{N}+3)\psi\|+MN^{-1+\delta}\|(\mathcal{N}+3)^{2}\psi\|\right)\left(e^{C\lambda\kF t}-\lambda\kF t-1\right)\nonumber \\
 & \quad+C\kF^{-1}\|(\mathcal{N}+3)\psi\|\left(e^{C\lambda\kF t}-1\right)\nonumber \\
 & \quad+C (N^{-\frac{\delta}{2}}+N^{-\frac{1}{6}}M^{\frac{1}{4}})\|(\mathcal{N}+3)^{\frac{1}{2}}\psi\|\left(e^{C\lambda\kF t}-1\right)\nonumber \\
 & \leq C\tilde{C}\|(\mathcal{N}+3)^{2}\psi\|\left(e^{C\lambda\kF t}-1\right).
\end{align}

The prefactor is given by
\begin{equation}
\tilde{C} =\max\left\{ (1+\lambda^{-1})\lambda^{-1}\kF^{-1}N^{\frac{1}{3}}\left(M^{-\frac{1}{2}}+\kF MN^{-1+\delta}\right),\kF^{-1}, (N^{-\frac{\delta}{2}}+N^{-\frac{1}{6}}M^{\frac{1}{4}})\right\}
\end{equation}
where we optimize over $M$ and $\delta$ with  $\lambda\geq \kF^{-\frac{1}{6}}$
to obtain the desired result. 
\end{proof}

\section{\label{sec:Coheren-Proof}Proof of \prettyref{thm:Effective-coherent-dynamics} }

\paragraph*{Properties of almost-bosonic coherent states} We first show the following rigorous properties of the Weyl operator which was introduced in \prettyref{def:Weyl-op}:
\begin{lem}[Approximate shift property]
\label{lem:Approximate-shift}Let $\eta,\xi\in\bigoplus_{\kinG}l^{2}(\mathcal{I}_{k})$
and $W_{\sigma}(\eta)\coloneqq e^{\sigma B}\coloneqq e^{\sigma c^{\ast}(\eta)-\sigma c(\eta)}$
for all $\sigma\in[0,1]$, then it holds 
\begin{align*}
W_{\sigma}(\eta)^{\ast}c(\xi)W_{\sigma}(\eta) & =c(\xi)+\sigma\langle\xi,\eta\rangle+\langle\xi,\mr{\sigma}\rangleka,\\
W_{\sigma}(\eta)^{\ast}c^{\ast}(\xi)W_{\sigma}(\eta) & =c^{\ast}(\xi)+\sigma\langle\eta,\xi\rangle+\langle\mr{\sigma},\xi\rangleka,
\end{align*}
with
$\langle\cdot, \cdot \rangleka: \bigoplus_{\kinG}l^{2}(\mathcal{I}_{k}) \times \bigoplus_{\kinG}l^{2}(\mathcal{I}_{k}) \to \mathbb{C}$
given by
\begin{equation}
\langle\xi,\mr{\sigma}\rangleka\coloneqq\sum_{\kinG}\sumak\xi_{\alpha}(k)^{\ast}\mr{\sigma}_{\alpha}(k)
\end{equation}
and a $\sigma$-dependent error term $\mr{\sigma}_{\alpha}(k)\coloneqq\int_{0}^{\sigma}\d\tau\ e^{-\tau B}\left(\sum_{\linG}\eta_{\alpha}(l)\mathcal{E}_{\alpha}(l,k)\right)e^{\tau B}$.
\end{lem}

\begin{rem}
We will give an estimate for $\mathcal{R}$ to show that this
term corresponds indeed to a small error. Note that it holds $\mathcal{E}_{\alpha}(k,l)=\mathcal{E}_{\alpha}(l,k)^{\ast}$
for all $l,\kinG$ and $\alpha\in\mathcal{I}_{k}\cap\mathcal{I}_{l}$
from \prettyref{lem:CAR-error} and therefore 
\begin{align}
\mr{\sigma}_{\gamma}(l)^{\ast}\xi_{\gamma}(l) & =\int_{0}^{\sigma}\d\tau\ e^{-\tau B}\left(\sum_{\kinG}\mathcal{E}_{\gamma}(l,k)\overline{\eta_{\gamma}(k)}\xi_{\gamma}(l)\right)e^{\tau B},\\
\overline{\xi_{\gamma}(l)}\mr{\sigma}_{\gamma}(l) & =\int_{0}^{\sigma}\d\tau\ e^{-\tau B}\left(\sum_{\kinG}\overline{\xi_{\gamma}(l)}\eta_{\gamma}(k)\mathcal{E}_{\gamma}(k,l)\right)e^{\tau B}.
\end{align}

For $\xi=\eta$ the above equations coincide, i.e.
\begin{equation}
\langle\eta,\mr{\sigma}\rangleka=\langle\mr{\sigma},\eta\rangleka
\end{equation}
from which it follows immediately that $\big(c^{\ast}(\eta)-c(\eta)\big)W_{\sigma}(\eta)=W_{\sigma}(\eta)\big(c^{\ast}(\eta)-c(\eta)\big)$,
i.e. $[B,W_{\sigma}(\eta)]=0$.
\end{rem}

\begin{proof}
We observe that  $W_{\sigma}(\eta)= e^{\sigma B}$
defines a strongly continuous one-parameter semigroup. Thus we can 
define a derivative and make use of Duhamel's formula of the form
\begin{equation}
  e^{-\sigma B}\cgl e^{\sigma B}  = \cgl + \int_{0}^{\sigma}\d\tau\ e^{-\tau B}[\cgl,B]e^{\tau B}.  \\
\end{equation}
The interested reader is referred to \cite{Pazy1983,Engel2006} where definitions and properties of operator 
derivatives are discussed.
The desired statement follows with the CCR
as stated in \eqref{eq:approx-CAR}
\begin{align}
[\cgl,e^{\sigma B}] & =e^{\sigma B}\int_{0}^{\sigma}\d\tau\ e^{-\tau B}[\cgl,B]e^{\tau B}\nonumber \\
 & =e^{\sigma B}\int_{0}^{\sigma}\d\tau\ e^{-\tau B}\left(\eta_{\gamma}(l)+\sum_{\kinG}\eta_{\gamma}(k)\mathcal{E}_{\gamma}(k,l)\right)e^{\tau B}\nonumber \\
 & =\sigma\eta_{\gamma}(l)e^{\sigma B}+e^{\sigma B}\mr{\sigma}_{\gamma}(l)
\end{align}
Since $c(\xi)\equiv\sum_{\kinG}\sumak\overline{\xi_{\alpha}(k)}\cak$
is linear the result follows from the above identity.
\end{proof}
The following statement shows that the number of particles in the state $W(\eta)\phi\otimes\Omega$ corresponds to a random variable with expectation approximately
being $2\|\eta\|^{2}$.
\begin{prop}[Expectation of the number operator]
\label{prop:expectation-number-W} Let $\zeta=\phi\otimes\Omega\in\domH$,
then it holds for all $\eta\in\bigoplus_{\kinG}l^{2}(\mathcal{I}_{k})$
\[
\langle W(\eta)\zeta,\mathcal{N}W(\eta)\zeta\rangle=2\|\eta\|^{2}+4\int_{0}^{1}\d\sigma\langle\zeta,\langle\eta,\mr{\sigma}\rangleka\zeta\rangle.
\]
\end{prop}

\begin{proof}
Using $W^{\ast}W=\id$ yields for all $\zeta\in\domH$ with $\|\zeta\|=1$
\begin{equation}
\langle W(\eta)\zeta,\mathcal{N}W(\eta)\zeta\rangle=\langle W(\eta)\zeta,[\mathcal{N},W(\eta)]\zeta\rangle+\langle\zeta,\mathcal{N}\zeta\rangle.\label{eq:coherent-exp-1}
\end{equation}
We use Duhamel's formula to calculate 
\begin{align}
[\mathcal{N},e^{B}] & =e^{B}\int_{0}^{1}\d\tau\ e^{-\tau B}[\mathcal{N},B]e^{\tau B}\nonumber \\
 & =2e^{B}\int_{0}^{1}\d\tau\ e^{-\tau B}\left(c^{\ast}(\eta)+c(\eta)\right)e^{\tau B}\nonumber \\
 & =2e^{B}\int_{0}^{1}\d\tau\ e^{-\tau B}\left(B+2c(\eta)\right)e^{\tau B}\nonumber \\
 & =2e^{B}B+4e^{B}\int_{0}^{1}\d\tau\ e^{-\tau B}c(\eta)e^{\tau B}\label{eq:coherent-exp-2}
\end{align}
where we used \eqref{eq:number-op-vs-ccak}. Therefore
\begin{align}
\langle W(\eta)\zeta,[\mathcal{N},W(\eta)]\zeta\rangle & =2\langle\zeta,B\zeta\rangle+4\int_{0}^{1}\d\tau\ \langle e^{\tau B}\zeta,c(\eta)e^{\tau B}\zeta\rangle\nonumber \\
 & =2\langle\zeta,B\zeta\rangle+2\|\eta\|^{2}+4\langle\zeta,c(\eta)\zeta\rangle+4\int_{0}^{1}\d\tau\langle\zeta,\langle\eta,\mr{\tau}\rangleka\zeta\rangle\label{eq:coherent-exp-3}
\end{align}
where we used the shift property \prettyref{lem:Approximate-shift}
and that $e^{\tau B}$ is unitary
\begin{align}
\langle e^{\tau B}\zeta,c(\eta)e^{\tau B}\zeta\rangle & =\langle e^{\tau B}\zeta,[c(\eta),e^{\tau B}]\zeta\rangle+\langle\zeta,c(\eta)\zeta\rangle\nonumber \\
 & =\tau\|\eta\|^{2}+\langle\zeta,\langle\eta,\mr{1}\rangleka\zeta\rangle+\langle\zeta,c(\eta)\zeta\rangle.
\end{align}
Inserting \eqref{eq:coherent-exp-3} and \eqref{eq:coherent-exp-2}
into \eqref{eq:coherent-exp-1} we obtain
\begin{equation}
\langle W(\eta)\zeta,\mathcal{N}W(\eta)\zeta\rangle=2\|\eta\|^{2}+2\langle\zeta,\left(c^{\ast}(\eta)+c(\eta)\right)\zeta\rangle+\langle\zeta,\mathcal{N}\zeta\rangle+4\int_{0}^{1}\d\tau\langle\zeta,\langle\eta,\mr{\tau}\rangleka\zeta\rangle.
\end{equation}
The desired result holds for $\zeta=\phi\otimes\Omega$ since
$c(\eta)\phi\otimes\Omega=0$.

\end{proof}
For later purposes, we can bound the expectation of the number operator
in the following way:
\begin{prop}[Stability of the number operator]
\label{prop:Stability-number-op}Let $\eta\in\bigoplus_{\kinG}l^{2}(\mathcal{I}_{k})$.
There exists a constant $C>0$ such that it holds for all $\tau\in[-1,1],n\in\mathbb{N}$
and $\zeta\in\domH$
\[
\langle e^{\tau B}\zeta,(\mathcal{N}+1)^{n}e^{\tau B}\zeta\rangle\leq e^{C\|\eta\|n|\tau|}\langle\zeta,(\mathcal{N}+3)^{n}\zeta\rangle.
\]
\end{prop}

\begin{proof}
The proof works analogously to the proof of \prettyref{prop:Effective-time-evolution-number-op}
with a Grönwall argument and $B$ instead of $\Heff$ which is given
later.  Note that 
\begin{equation}
[\mathcal{N},B]=[\mathcal{N},c^{\ast}(\eta)-c(\eta)]=2\sum_{\kinG}\sumak\eta_{\alpha}(k)\big(\ccak+\cak\big)
\end{equation}
where we again used \eqref{eq:number-op-vs-ccak}. The result is than
obtained by using the same estimates with $\|\eta\|$ taking the role
of $\|h_{y}\|$.
\end{proof}
%We will later consider time-dependent functions $\eta_{t}$ which
%are differentiable in $t$ with $\eta_{t},\dot{\eta_{t}}\in\bigoplus_{\kinG}l^{2}(\mathcal{I}_{k})$
%for each $t\in\R$. In this context, we are interested in the time
%derivative of the almost-bosonic Weyl operator $W(\eta_{t})$ which
%can be calculated by the following statement: 
\begin{lem}
\label{lem:derivative-of-Weyl-op}Let $\eta_{t}\in\bigoplus_{\kinG}l^{2}(\mathcal{I}_{k})$
be differentiable in $t$ with derivative $\dot{\eta_{t}}\in\bigoplus_{\kinG}l^{2}(\mathcal{I}_{k})$
for all $t\in\R$. Then it holds for all $t\in\R$ 
\[
\partial_{t}W(\eta_{t})=\left(c^{\ast}(\dot{\eta}_{t})-c(\dot{\eta}_{t})+i\textnormal{Im}\langle\dot{\eta}_{t},\eta_{t}\rangle\right)W(\eta_{t})+2i\int_{0}^{1}\d\tau\ W_{(1-\tau)}(\eta_{t})\textnormal{Im}\langle\dot{\eta}_{t},\mr{1-\tau}\rangleka W_{\tau}(\eta_{t})
\]
with the shorthand notation $\textnormal{Im}\langle A,B\rangleka\coloneqq-\frac{i}{2}\sum_{\kinG}\sumak(A_{\alpha}^{\ast}(k)B_{\alpha}(k)-B_{\alpha}^{\ast}(k)A_{\alpha}(k))$
.
\end{lem}

\begin{proof}
For arbitrary $s\in\R$ it holds
\begin{align}
W(\eta_{s})^{\ast}\partial_{s}W(\eta_{s}) & =\left. e^{-\tau B_{s}}\partial_{s}e^{\tau B_{s}}\right|_{\tau=0}^{\tau=1}=\int_{0}^{1}\d\tau\ \partial_{\tau}\left(e^{-\tau B_{s}}\partial_{s}e^{\tau B_{s}}\right)\nonumber \\
 & =\int_{0}^{1}\d\tau\left( -B_{s}e^{-\tau B_{s}}\partial_{s}e^{\tau B_{s}}+e^{-\tau B_{s}}\partial_{s}\partial_{\tau}e^{\tau B_{s}}\right) \\
 & =\int_{0}^{1}\d\tau\left( -B_{s}e^{-\tau B_{s}}\partial_{s}e^{\tau B_{s}}+e^{-\tau B_{s}}\partial_{s}\left(B_{s}e^{\tau B_{s}}\right)\right) \\
 & =\int_{0}^{1}\d\tau\left( -B_{s}e^{-\tau B_{s}}\partial_{s}e^{\tau B_{s}}+e^{-\tau B_{s}}\left(\partial_{s}B_{s}\right)e^{\tau B_{s}}+e^{-\tau B_{s}}B_{s}\partial_{s}e^{\tau B_{s}}\right) \\
 & =\int_{0}^{1}\d\tau\ e^{-\tau B_{s}}\left(\partial_{s}B_{s}\right)e^{\tau B_{s}}.
\end{align}
Thus 
\begin{align}
\partial_{s}W(\eta_{s}) & =\int_{0}^{1}\d\tau\ e^{(1-\tau)B_{s}}\left(\partial_{s}B_{s}\right)e^{\tau B_{s}}\nonumber \\
 & =\int_{0}^{1}\d\tau\ \left(\partial_{s}B_{s}\right)e^{(1-\tau)B_{s}}e^{\tau B_{s}}+\int_{0}^{1}\d\tau\ \left[e^{(1-\tau)B_{s}},\partial_{s}B_{s}\right]e^{\tau B_{s}}\\
 & =\left(\partial_{s}B_{s}\right)e^{B_{s}}+\int_{0}^{1}\d\tau\ \left[e^{(1-\tau)B_{s}},\partial_{s}B_{s}\right]e^{\tau B_{s}}.
\end{align}
With 
\begin{equation}
\partial_{s}B_{s}=\partial_{s}\left\{ c^{\ast}(\eta_{s})-c(\eta_{s})\right\} =c^{\ast}(\dot{\eta}_{s})-c(\dot{\eta}_{s})
\end{equation}
from the linearity of $c(\eta_{s})$ it follows
\begin{align}
 & \left[e^{(1-\tau)B_{s}},\partial_{s}B_{s}\right]e^{\tau B_{s}}\nonumber \\
 & =\left( [W_{(1-\tau)}(\eta_{s}),c^{\ast}(\dot{\eta}_{s})]-[W_{(1-\tau)}(\eta_{s}),c(\dot{\eta}_{s})]\right) W_{\tau}(\eta_{s})\\
 & =W_{(1-\tau)}(\eta_{s})\bigg((1-\tau)\langle\dot{\eta}_{s},\eta_{s}\rangle-(1-\tau)\langle\eta_{s},\dot{\eta}_{s}\rangle\\
 & \hphantom{=W_{(1-\tau)}(\eta_{s})\bigg\{(1-\tau)\langle\dot{\eta}_{s},\eta_{s}\rangle}+\langle\dot{\eta}_{s},\mr{1-\tau}\rangleka-\langle\mr{1-\tau},\dot{\eta}_{s}\rangleka\bigg) W_{\tau}(\eta_{s})\\
 & =W(\eta_{s})(1-\tau)2i\text{Im}\langle\dot{\eta}_{s},\eta_{s}\rangle+2iW_{(1-\tau)}(\eta_{s})\text{Im}\langle\dot{\eta}_{s},\mr{1-\tau}\rangleka W_{\tau}(\eta_{s}).
\end{align}
Inserting the above identity yields
\begin{equation}
\partial_{s}W(\eta_{s})=\left(c^{\ast}(\dot{\eta}_{s})-c(\dot{\eta}_{s})+i\text{Im}\langle\dot{\eta}_{s},\eta_{s}\rangle\right)W(\eta_{s})+2i\int_{0}^{1}\d\tau\ W_{(1-\tau)}(\eta_{s})\text{Im}\langle\dot{\eta}_{s},\mr{1-\tau}\rangleka W_{\tau}(\eta_{s}).
\end{equation}
\end{proof}

\paragraph*{Proof of the main theorem} First, we collect some useful observations on the function $\eta_{s}$
in the form of the following two lemmata. We postpone the proofs
to the end of the section in order to concentrate on presenting the proof of the main result.

\begin{lem}
\label{lem:eta-bounds}Let $\eta_{s}$ be defined as in \eqref{eq:Coherent-state-variableI}
for $N^{2\delta}\ll M\ll N^{\frac{2}{3}-2\delta}$, then it holds
for all $s\in\R$
\begin{align*}
\|\eta_{s}\|^{2} & =\pi\lambda^{2}\sum_{\kinG}\frac{\hat{V}(k)^{2}}{|k|}\left( \log(2\kF|k|s)-\textnormal{Ci}(2\kF|k|s)+\gamma\right) \\
 & \hphantom{\frac{\pi\lambda^{2}}{4^{2}}\sum_{\kinG}\frac{\hat{V}(k)^{2}}{|k|}}\times\left\{ 1+g(\kF|k|s)\ \mathcal{O}\left(M^{\frac{1}{2}}N^{-\frac{1}{3}+\delta}+N^{-\delta}\right)\right\} .
\end{align*}
where $\gamma$ is the Euler-Mascheroni constant and $g:\R\to\R_{\geq0}$ is a function independent of $\kF$ and
monotonically increasing.

Furthermore, define $f:\R\times\R\to\R$ by 
\begin{equation*}
% (y,x)\mapsto f_{y}(x)\coloneqq\min\{e^{\sqrt{\pi}\|\hat{V}(\cdot)^{1/2}\|_{2} yx},e^{\sqrt{2\pi}\|\hat{V}\|_{2} y(\log(4x+2)+1)}\}.
 (y,x)\mapsto f_{y}(x)\coloneqq\min\{e^{\sqrt{\pi}\|\hat{V}(\cdot)^{1/2}\|_{2} yx}, e^{\sqrt{2\pi}\|\hat{V}\|_{2} (\log(18)+\frac{1}{9}) y} e^{\frac{\sqrt{8\pi}}{9}\|\hat{V}\|_{2} yx}\}.
\end{equation*}
Then there exists a $C>0$ independent of $\kF$ such that for $c_{0}>0$
and $\kF$ sufficiently large 
\begin{align}
\|\eta_{s}\| & \leq  \log\left(f_{1}( \lambda\kF s)\right), \label{eq:eta-bound} \\
e^{c_{0}\|\eta_{s}\|} & \leq  f_{c_{0}  }( \lambda \kF s). \label{eq:exp-eta-bound}
\end{align}
\end{lem}

\begin{rem}
Note that $f_{y}$ is for all $y\geq0$ monotonically increasing with
$f_{y}(0)=1$. 
% The same bounds hold also for $\eta_{s/\hbar}$ in the semiclassical regime with $\lambda=\kF^{-2}$.
\end{rem}

The previous statement is useful when combined with the following estimate:

\begin{lem} \label{lem:eta-bounds2}
There exists a constant $C>0$ only depending on $V$ such that it holds for all $s\in\R$, $n\in\mathbb{N}$, $\psi\in\mathcal{F}$
\begin{enumerate}
\item $\sum_{\kinG}\||k|^{n}\eta_{s}(k)\|_{l^{2}}\leq C\|\eta_{s}\|$,
\item $\langle\eta_{s},|k|^{n}\eta_{s}\rangle\leq C\|\eta_{s}\|^{2}$,
\item $\|c^{\ast}(|k|^{n}\eta_{s})\psi\|\leq C\|\eta_{s}\|\ \|(\mathcal{N}+1)^{1/2}\psi\|$.
\end{enumerate}
\end{lem}

We will now give the proof of the second main theorem.
\begin{proof}[Proof of \prettyref{thm:Effective-coherent-dynamics}]
 We use the approach as sketched in \prettyref{rem:Heuristic-calculation}.
Since the bosonic property holds only with an error, the equality
\eqref{eq:exactly-bosonic-equality} holds only approximately:
\begin{align}
\|e^{-i\mathbb{H}^{\text{eff}}t}\psi-e^{iP(t)}W(\eta_{t})\psi\| 
& =\|\psi-e^{i\mathbb{H}^{\text{eff}}t}e^{-iE_{N}^{\text{pw}}t}e^{i2\text{Im}(\nu_{t})}e^{-i\text{Im}\int_{0}^{t}\d s\langle\dot{\eta}_{s},\eta_{s}\rangle}W(\eta_{t})\psi\| \nonumber \\ 
& \leq\int_{0}^{t}\d s\|\big(\mathbb{H}^{\text{eff}}-E_{N}^{\text{pw}}+2\text{Im}(\dot{\nu}_{s})-\text{Im}\langle\dot{\eta}_{s},\eta_{s}\rangle\big)W(\eta_{s})\psi-i\partial_{s}W(\eta_{s})\psi\| \nonumber \\
& \leq \int_{0}^{t}\d s\|h_{0}W(\eta_{s})\phi\otimes\Omega\|+\| \text{Error}_{1}\|+\| \text{Error}_{2}\|.\label{eq:almost-bosonic-inequality}
\end{align}

We will first estimate the error terms and then subsequently treat
the $h_{0}$ term in a separate lemma. We give an explicit expression
for the first error term using \prettyref{lem:Approximate-shift}
on the approximate shift property applied to $\cak$ in the $c^{\ast}c(\epsilon),c(h_{y})$
and $c(i\dot{\eta}_{s})$ terms. Thus, the error of \eqref{eq:exactly-bosonic-equality}
is given by 
\begin{equation}
\text{Error}_{1}\coloneqq\int_{0}^{t}\d s\sum_{\kinG}\sumak\left( \epsilon_{\alpha}(k)\ccak+(1-e^{is\epsilon_{\alpha}(k)})\overline{\phiak}\right) W(\eta_{s})\mr{1}_{\alpha}(k)\psi.
\end{equation}
The second error term is given by \prettyref{lem:derivative-of-Weyl-op}
on the time derivative of the almost-bosonic Weyl operator and therefore
\begin{align}
\text{Error}_{2}\coloneqq- & 2i\int_{0}^{t}\d s\int_{0}^{1}\d\tau\ W_{1-\tau}(\eta_{s})\text{Im}\langle\dot{\eta}_{s},\mr{1-\tau}\rangleka W_{\tau}(\eta_{s})\psi\nonumber \\
\equiv & 2\int_{0}^{t}\d s\int_{0}^{1}\d\tau e^{(1-\tau)B}\bigg(\sum_{\kinG}\sumak e^{is\epsilon_{\alpha}(k)}\overline{\phiak}\mr{1-\tau}_{\alpha}(k)\nonumber \\
 & \hphantom{2\int_{0}^{t}\d s\int_{0}^{1}\d\tau}-\sum_{\kinG}\sumak\mr{1-\tau}_{\alpha}(k)^{\ast}e^{-is\epsilon_{\alpha}(k)}\phiak\bigg) e^{\tau B}\psi.
\end{align}
Firstly, we show that the term $\mr{\lambda}_{\alpha}(k)\psi=\int_{0}^{\sigma}\d\tau\ e^{-\tau B}\left(\sum_{\linG}\eta_{\alpha}(l)\mathcal{E}_{\alpha}(l,k)\right)e^{\tau B}\psi$
as defined in \prettyref{lem:Approximate-shift} constitutes indeed
a small error.  We estimate 
\begin{align}
 & \sum_{\linG}\sum_{\gamma\in\mathcal{I}_{l}}\|\mr{1}_{\gamma}(l)\psi\|^{2}\nonumber \\
 & \leq\sum_{\linG}\sum_{\gamma\in\mathcal{I}_{l}\cap\mathcal{I}_{k}}\bigg(\sum_{\kinG}|\eta_{\gamma}(k)|\int_{0}^{1}\d\tau\|\mathcal{E}_{\gamma}(k,l)e^{\tau B}\psi\|\bigg)^{2}\nonumber \\
 & \leq C\left(MN^{-\frac{2}{3}+\delta}(e^{C\|\eta_{s}\|}-1)\|(\mathcal{N}+3)\psi\|\right)^{2}\label{eq:coherent-state-error-I}
\end{align}
where we used $\sumak\bigg(\sum_{\kinG}|\eta_{\alpha}(k)|\bigg)^{2}\leq\sum_{k',\kinG}\|\eta(k)\|_{l^{2}}\|\eta(k')\|_{l^{2}}\leq C\|\eta\|^{2}$
by \prettyref{lem:eta-bounds} and 
\begin{align}
 & \int_{0}^{1}\d\tau\|\mathcal{E}_{\gamma}(k,l)e^{\tau B}\psi\|\\
 & \leq\int_{0}^{1}\d\tau\langle e^{\tau B}\psi,|\mathcal{E}_{\gamma}(k,l)|^{2}e^{\tau B}\psi\rangle^{1/2}\leq CMN^{-\frac{2}{3}+\delta}\int_{0}^{1}\d\tau\langle e^{\tau B}\psi,\mathcal{N}^{2}e^{\tau B}\psi\rangle^{1/2}\\
 & \leq CMN^{-\frac{2}{3}+\delta}\int_{0}^{1}\d\tau e^{C\|\eta_{s}\|\tau}\|(\mathcal{N}+3)\psi\|\leq C\|\eta_{s}\|^{-1}(e^{C\|\eta_{s}\|}-1)MN^{-\frac{2}{3}+\delta}\|(\mathcal{N}+3)\psi\|
\end{align}
which follows from $e^{\tau B}$ is unitary in the first inequality,
\prettyref{lem:CAR-error} in the second inequality and \prettyref{prop:Stability-number-op}
in the third inequality. 

Secondly, we estimate
\begin{align}
 & \sum_{\kinG}\sumak\|\ccak e^{B}\mr{1}_{\alpha}(k)\psi\|\nonumber \\
 & \leq\sum_{\kinG}\sumak\int_{0}^{1}\d\tau\|\ccak e^{(1-\tau)B}\left(\sum_{\linG}\eta_{\alpha}(l)\mathcal{E}_{\alpha}(l,k)\right)e^{\tau B}\psi\|\nonumber \\
 & \leq\int_{0}^{1}\d\tau\sum_{k,\linG}\sum_{\alpha\in\mathcal{I}_{k}\cap\mathcal{I}_{l}}|\eta_{\alpha}(l)|\ \|\ccak\mathcal{E}_{\alpha}(l,k)e^{\tau B}\psi\|\nonumber \\
 & \quad+\int_{0}^{1}\d\tau\sum_{k,\linG}\sum_{\alpha\in\mathcal{I}_{k}\cap\mathcal{I}_{l}}|\eta_{\alpha}(l)|\ \|[\ccak,e^{(1-\tau)B}]\mathcal{E}_{\alpha}(l,k)e^{\tau B}\psi\|\nonumber \\
 & \leq C\|\eta_{s}\|MN^{-\frac{2}{3}+\delta}\int_{0}^{1}\d\tau\bigg(\|(\mathcal{N}+1)^{\frac{3}{2}}e^{\tau B}\psi\|^{2}\bigg)^{1/2}+C\|\eta_{s}\|^{2}MN^{-\frac{2}{3}+\delta}\int_{0}^{1}\d\tau\ (1-\tau)\|\mathcal{N}e^{\tau B}\psi\|\nonumber \\
 & \quad+\sum_{\kinG}\|\eta_{s}\|\int_{0}^{1}\d\tau\bigg(\sum_{\linG}\sum_{\alpha\in\mathcal{I}_{k}\cap\mathcal{I}_{l}}\|\mr{1}_{\alpha}(k)\mathcal{E}_{\alpha}(l,k)e^{\tau B}\psi\|^{2}\bigg)^{1/2}\nonumber \\
 & \leq CMN^{-\frac{2}{3}+\delta}(e^{C\|\eta_{s}\|}-1)\|(\mathcal{N}+3)^{\frac{3}{2}}\psi\|+CMN^{-\frac{2}{3}+\delta}(e^{C\|\eta_{s}\|}-C\|\eta_{s}\|-1)\|(\mathcal{N}+3)\psi\|\nonumber \\
 & \quad+CM^{\frac{3}{2}}N^{-\frac{2}{3}+\delta}(e^{C\|\eta_{s}\|}-1)^{2}\|(\mathcal{N}+3)^{2}\psi\|\nonumber \\
 & \leq CM^{\frac{3}{2}}N^{-\frac{2}{3}+\delta}(e^{C\|\eta_{s}\|}-1)\|(\mathcal{N}+3)^{2}\psi\|.\label{eq:coherent-state-error-II}
\end{align}
where we used $[\ccak,e^{\sigma B}]=-\lambda\eta_{\alpha}(k)e^{\sigma B}-e^{\sigma B}\mr{1}_{\alpha}(k)$
from \prettyref{lem:Approximate-shift}, the Cauchy-Schwarz inequality
for the $\alpha$-summation, \prettyref{lem:CAR-error} in the third
inequality and in the fourth inequality we used \prettyref{prop:Stability-number-op}
and \eqref{eq:coherent-state-error-I}.

In total by combining \eqref{eq:coherent-state-error-I} and \eqref{eq:coherent-state-error-II}
we end up with the following estimate 
\begin{align}
\|\text{Error}_{1}\| & \leq\int_{0}^{t}\d s\sum_{\kinG}\sumak\|\left\{ \epsilon_{\alpha}(k)\ccak+(1-e^{is\epsilon_{\alpha}(k)})\overline{\phiak}\right\} e^{B}\mr{1}_{\alpha}(k)\psi\|\nonumber \\
 & \leq C\kF\int_{0}^{t}\d s\sum_{\kinG}\sumak\|\ccak e^{B}\mr{1}_{\alpha}(k)\psi\|\nonumber \\
 & \quad+C\lambda\int_{0}^{t}\d s\sum_{\kinG}|\hat{V}(k)|\sumak\|n_{\alpha}(k)e^{B}\mr{1}_{\alpha}(k)\psi\|\nonumber \\
 & \leq C\kF\int_{0}^{t}\d s\sum_{\kinG}\sumak\|\ccak e^{B}\mr{1}_{\alpha}(k)\psi\|\nonumber \\
 & \quad+C\lambda\kF\int_{0}^{t}\d s\|\hat{V}\|_{2}\bigg(\sum_{\kinG}\sumak\|\mr{1}_{\alpha}(k)\psi\|^{2}\bigg)^{1/2}\nonumber \\
 & \leq C\kF MN^{-\frac{2}{3}+\delta}\int_{0}^{t}\d s\ (e^{C\|\eta_{s}\|}-1)\|(\mathcal{N}+3)^{2}\psi\|\nonumber \\
 & \quad+C\lambda\kF MN^{-\frac{2}{3}+\delta}\int_{0}^{t}\d s\ (e^{C\|\eta_{s}\|}-1)\|(\mathcal{N}+3)\psi\|\nonumber \\
 & \leq C\ (f_{C}( \lambda\kF t)-1)(\lambda+1)\kF tMN^{-\frac{2}{3}+\delta}\|(\mathcal{N}+3)^{2}\psi\|.\label{eq:Error-estimate}
\end{align}

where we used \eqref{lem:Approx-n_alpha(k)^2} and $e^{B}$ unitary
in the third inequality and \prettyref{lem:eta-bounds} in the last
line. Using $\psi=\phi\otimes\Omega$
we obtain the desired bound. 

Similarly, we obtain an estimate for the second error term using Cauchy-Schwarz,
\eqref{eq:coherent-state-error-I} and \prettyref{prop:Stability-number-op}
\begin{align}
 & \|\text{Error}_{2}\|\nonumber \\
&\leq  2\int_{0}^{t}\d s\int_{0}^{1}\d\tau\sum_{\kinG}\sumak\bigg(\|\overline{\phiak}\mr{1-\tau}_{\alpha}(k)e^{\tau B}\psi\|+\|\mr{1-\tau}_{\alpha}(k)^{\ast}\phiak e^{\tau B}\psi\|\bigg)\nonumber \\
&\leq  2\lambda\int_{0}^{t}\d s\int_{0}^{1}\d\tau\sum_{\kinG}\sumak|\hat{V}(k)n_{\alpha}(k)|\bigg(\|\mr{1-\tau}_{\alpha}(k)e^{\tau B}\psi\|+\|\mr{1-\tau}_{\alpha}(k)^{\ast}e^{\tau B}\psi\|\bigg)\nonumber \\
&\leq  C\int_{0}^{t}\d s\int_{0}^{1}\d\tau\ \lambda\kF\|\hat{V}\|_{2}\bigg(\bigg(\sum_{\kinG}\sumak\|\mr{1-\tau}_{\alpha}(k)e^{\tau B}\psi\|^{2}\bigg)^{1/2}+\bigg(\sum_{\kinG}\sumak\|\mr{1-\tau}_{\alpha}(k)^{\ast}e^{\tau B}\psi\|^{2}\bigg)^{1/2}\bigg)\nonumber \\
&\leq  C\lambda\kF MN^{-\frac{2}{3}+\delta}\int_{0}^{t}\d s\int_{0}^{1}\d\tau\ (e^{C\|\eta_{s}\|(1-\tau)}-1)\|(\mathcal{N}+3)e^{\tau B}\psi\|\nonumber \\
&\leq  C\lambda\kF MN^{-\frac{2}{3}+\delta}\int_{0}^{t}\d s\int_{0}^{1}\d\tau\ (e^{C\|\eta_{s}\|}-e^{C\|\eta_{s}\|\tau})\|(\mathcal{N}+5)\psi\|\nonumber \\
&\leq  C(f_{C}( \lambda\kF t)-1)\lambda\kF tMN^{-\frac{2}{3}+\delta}\|(\mathcal{N}+5)\psi\|.\label{eq:Error2-estimate}
\end{align}
Again, by using $\psi=\phi\otimes\Omega$
we obtain the desired bound. 

With the subsequent \prettyref{lem:Laplacian-estimate}, we can conclude
with a bound on $h_{0}=-\beta \Delta_y $ of the form
\begin{align}
 & \int_{0}^{t}\d s\|h_{0}W(\eta_{s})\phi\otimes\Omega\|\leq C\beta\int_{0}^{t}\d s\left\{ (\|\eta_{s}\|+\|\eta_{s}\|^{4})(e^{C\|\eta_{s}\|}+1)\right\} \nonumber \\
 & \leq C\beta t\left\{ \log\left[f_{1}( \lambda\kF t)\right]+\log\left[f_{1}( \lambda\kF t)\right]^{4}\right\} \left\{ f_{C}(\lambda \kF t)+1\right\} \label{eq:h_0-bound}
\end{align}
where we used \eqref{eq:eta-bound} and \eqref{eq:exp-eta-bound} in the second inequality.
Together together with $\eqref{eq:Error-estimate}$ and \eqref{eq:Error2-estimate}
inserted in \eqref{eq:almost-bosonic-inequality}, we obtain the desired
result.
\end{proof}
\begin{lem}
\label{lem:Laplacian-estimate}Under the assumptions of \prettyref{thm:Effective-coherent-dynamics},
it holds that for all $t\geq0$ 
\[
\|\Delta_{y}W(\eta_{t})\phi\otimes\Omega\|\leq C(\|\eta_{t}\|+\|\eta_{t}\|^{4})(e^{C\|\eta_{t}\|}+1)
\]
for $C>0$ independent of $\kF$.
\end{lem}

\begin{proof}[Proof of \prettyref{lem:Laplacian-estimate}]
We explicitly calculate the action of the Laplacian on the coupled
coherent state $W(\eta_{s})\psi=e^{B}\psi$ i.e.
\begin{equation}
-\Delta_{y}W(\eta_{s})\psi=-\big(\Delta_{y}W(\eta_{s})\big)\psi-2\beta\nabla_{y}W(\eta_{s})\cdot\nabla_{y}\psi-W(\eta_{s})\Delta_{y}\psi.\label{eq:Laplacian-action}
\end{equation}
In total we expect, that all terms can be bounded by assumption on
the initial condition on $\phi$. We will first focus on the term
$\Delta_{y}W(\eta_{s})$. Recall that it holds
\begin{equation}
W(\eta_{s})^{\ast}\partial_{y_{i}}W(\eta_{s}) =\int_{0}^{1}\d\tau\ e^{-\tau B}\left(\partial_{y_{i}}B_{s}\right)e^{\tau B}
\end{equation}
due to the same calculation as in the proof of \prettyref{lem:derivative-of-Weyl-op}.
With 
\begin{align}
\partial_{y_{i}}B_{s} & =  \partial_{y_{i}}\left\{ c^{\ast}(\eta_{s})-c(\eta_{s})\right\} =c^{\ast}(\partial_{y_{i}}\eta_{s})-c(\partial_{y_{i}}\eta_{s}),\\
\partial_{y_{i}}\eta_{s} & =  \frac{e^{-is\epsilon_{\alpha}(k)}-1}{\epsilon_{\alpha}(k)}\lambda\hat{V}(k)n_{\alpha}(k)ik_{i}e^{iky}=ik_{i}\eta_{s}
\end{align}
it follows analogously to \prettyref{lem:derivative-of-Weyl-op} that
\begin{align}
\partial_{y_{i}}W(\eta_{s}) & =\int_{0}^{1}\d\tau\ e^{(1-\tau)B_{s}}\left(\partial_{y_{i}}B_{s}\right)e^{\tau B_{s}}\nonumber \\
 & =\left(c^{\ast}(ik_{i}\eta_{s})-c(ik_{i}\eta_{s})+i\text{Im}\langle\eta_{s},ik_{i}\eta_{s}\rangle\right)W(\eta_{s})\nonumber \\
 & \hphantom{=}+2i\int_{0}^{1}\d\tau\ W_{1-\tau}(\eta_{s})\text{Im}\langle ik_{i}\eta_{s},\mr{1-\tau}\rangleka W_{\tau}(\eta_{s}).
\end{align}
And repeating the differentiation with 
\begin{align}
\partial_{y_{i}}W_{\tau}(\eta_{s}) & =\tau\left(c^{\ast}(ik_{i}\eta_{s})-c(ik_{i}\eta_{s})+i\tau\text{Im}\langle\eta_{s},ik_{i}\eta_{s}\rangle\right)W_{\tau}(\eta_{s})\nonumber \\
 & \hphantom{=}+2i\int_{0}^{\tau}\d\sigma\ W_{1-\sigma}(\eta_{s})\text{Im}\langle ik_{i}\eta_{s},\mr{\tau-\sigma}\rangleka W_{\sigma}(\eta_{s})
\end{align}
  yields 

\begin{align}
 & \Delta W(\eta_{s})=\sum_{i=1}^{3}\partial_{y_{i}}^{2}W(\eta_{s})\nonumber \\
= & \left(c^{\ast}(k^{2}\eta_{s})-c(k^{2}\eta_{s})+i\text{Im}\langle\eta_{s},k^{2}\eta_{s}\rangle\right)W(\eta_{s})+2i\int_{0}^{1}\d\tau\ W_{1-\tau}(\eta_{s})\text{Im}\langle k^{2}\eta_{s},\mr{1-\tau}\rangleka W_{\tau}(\eta_{s})\nonumber \\
 & +\sum_{i=1}^{3}\left(c^{\ast}(ik_{i}\eta_{s})-c(ik_{i}\eta_{s})+i\text{Im}\langle\eta_{s},ik_{i}\eta_{s}\rangle\right)\partial_{y_{i}}W(\eta_{s})\nonumber \\
 & +2i\sum_{i=1}^{3}\int_{0}^{1}\d\tau\ \text{Im}\langle ik_{i}\eta_{s},\partial_{y_{i}}\left\{ W_{1-\tau}(\eta_{s})\mr{1-\tau}_{\alpha}(k)W_{\tau}(\eta_{s})\right\} \rangleka\\
= & \left(c^{\ast}(k^{2}\eta_{s})-c(k^{2}\eta_{s})+i\text{Im}\langle\eta_{s},k^{2}\eta_{s}\rangle\right)W(\eta_{s})+2i\int_{0}^{1}\d\tau\ \text{Im}\langle k^{2}\eta_{s},\mr{1-\tau}\rangleka W_{\tau}(\eta_{s})\nonumber \\
 & +\sum_{i=1}^{3}\left(c^{\ast}(ik_{i}\eta_{s})-c(ik_{i}\eta_{s})+i\text{Im}\langle\eta_{s},ik_{i}\eta_{s}\rangle\right)\times\nonumber \\
 & \hphantom{-\sum_{i=1}^{3}}\times\left\{ \left(c^{\ast}(ik_{i}\eta_{s})-c(ik_{i}\eta_{s})+i\text{Im}\langle\eta_{s},ik_{i}\eta_{s}\rangle\right)W(\eta_{s})+2i\int_{0}^{1}\d\tau\ \text{Im}\langle ik_{i}\eta_{s},\mr{1-\tau}\rangleka W_{\tau}(\eta_{s})\right\} \nonumber \\
 & +2i\sum_{i=1}^{3}\int_{0}^{1}\d\tau\ \text{Im}\langle ik_{i}\eta_{s},\partial_{y_{i}}\left\{ W_{1-\tau}(\eta_{s})\mr{1-\tau}_{\alpha}(k)W_{\tau}(\eta_{s})\right\} \rangleka\\
\eqqcolon & I_{1}+I_{2}+I_{3}+I_{4}\nonumber 
\end{align}
with 
\begin{align}
I_{1} & \coloneqq\left(c^{\ast}(k^{2}\eta_{s})-c(k^{2}\eta_{s})+i\text{Im}\langle\eta_{s},k^{2}\eta_{s}\rangle\right)W(\eta_{s})\nonumber \\
 & \hphantom{\coloneqq}-\sum_{i=1}^{3}\left(c^{\ast}(ik_{i}\eta_{s})-c(ik_{i}\eta_{s})+i\text{Im}\langle\eta_{s},ik_{i}\eta_{s}\rangle\right)\left(c^{\ast}(ik_{i}\eta_{s})-c(ik_{i}\eta_{s})+i\text{Im}\langle\eta_{s},ik_{i}\eta_{s}\rangle\right)W(\eta_{s}),\\
I_{2} & \coloneqq2i\int_{0}^{1}\d\tau\ W_{1-\tau}(\eta_{s})\text{Im}\langle k^{2}\eta_{s},\mr{1-\tau}\rangleka W_{\tau}(\eta_{s}),\\
I_{3} & \coloneqq2\sum_{i=1}^{3}\left(c^{\ast}(ik_{i}\eta_{s})-c(ik_{i}\eta_{s})+i\text{Im}\langle\eta_{s},ik_{i}\eta_{s}\rangle\right)\int_{0}^{1}\d\tau\ iW_{1-\tau}(\eta_{s})\text{Im}\langle ik_{i}\eta_{s},\mr{1-\tau}\rangleka W_{\tau}(\eta_{s}),\\
I_{4} & \coloneqq2i\sum_{i=1}^{3}\int_{0}^{1}\d\tau\ \text{Im}\langle ik_{i}\eta_{s},\partial_{y_{i}}\left\{ W_{1-\tau}(\eta_{s})\mr{1-\tau}_{\alpha}(k)W_{\tau}(\eta_{s})\right\} \rangleka.
\end{align}

We show that each term can be bounded here by a constant at most of
order $1$.

For $I_{1}$, we can treat all $c^{\ast}(\cdots)$ and $c(\cdots)$
terms with \prettyref{lem:eta-bounds} and \prettyref{lem:eta-bounds2}. Furthermore we use that $\cak\mathcal{N}=(\mathcal{N}+2)\cak$
to estimate
\begin{align}
\|I_{1}\psi\| & \leq C\|\eta_{s}\|\ \|(\mathcal{N}+1)^{1/2}W(\eta_{s})\psi\|+C(\|\eta_{s}\|^{2}+\|\eta_{s}\|^{4})\|W(\eta_{s})\psi\|+C\|\eta_{s}\|^{2}\|(\mathcal{N}+3)W(\eta_{s})\psi\|\nonumber \\
 & \leq C(\|\eta_{s}\|+\|\eta_{s}\|^{2})e^{C\|\eta_{s}\|}\|(\mathcal{N}+5)\psi\|+C(\|\eta_{s}\|^{2}+\|\eta_{s}\|^{4})\label{eq:I1-bound}
\end{align}
where we used \prettyref{prop:Stability-number-op}.

For $I_{2}$, we use a similar approach to \eqref{eq:coherent-state-error-I}
to obtain 
\begin{align}
\|I_{2}\psi\| & \leq C\|\eta_{s}\|MN^{-\frac{2}{3}+\delta}\int_{0}^{1}\d\tau\ (e^{C\|\eta_{s}\|(1-\tau)}-1)\|(\mathcal{N}+3)W_{\tau}(\eta_{s})\psi\|\nonumber \\
 & \leq C\|\eta_{s}\|MN^{-\frac{2}{3}+\delta}\int_{0}^{1}\d\tau\ (e^{C\|\eta_{s}\|}e^{(\tilde{C}-C)\|\eta_{s}\|\tau}-e^{\tilde{C}\|\eta_{s}\|\tau})\|(\mathcal{N}+5)\psi\|\nonumber \\
 & \leq CMN^{-\frac{2}{3}+\delta}e^{C\|\eta_{s}\|}(e^{C\|\eta_{s}\|}-1)\|(\mathcal{N}+5)\psi\|.\label{eq:I2-bound}
\end{align}

For $I_{3}$, we first observe that for $n\in\mathbb{N}$ and $\mu\in [0,1]$
\begin{align}
\|\mathcal{N}^{n}\mr{\mu}_{\beta}(k)\psi\| & \leq\int_{0}^{\mu}\d\tau\ \|\mathcal{N}^{n}e^{-\tau B}\left(\sum_{\linG}\eta_{\beta}(l)\mathcal{E}_{\beta}(l,k)\right)e^{\tau B}\psi\|\nonumber \\
 & \leq\int_{0}^{\mu}\d\tau\ e^{C\|\eta_{s}\|\tau}\sum_{\linG}|\eta_{\beta}(l)|\ \|\mathcal{E}_{\beta}(l,k)(\mathcal{N}+3)^{n}e^{\tau B}\psi\|\nonumber \\
 & \leq C\int_{0}^{\mu}\d\tau\ e^{C\|\eta_{s}\|\tau}\sum_{\linG}|\eta_{\beta}(l)|\ MN^{-\frac{2}{3}+\delta}\|\mathcal{N}(\mathcal{N}+3)^{n}e^{\tau B}\psi\|\nonumber \\
 & \leq C\|\eta_{s}\|^{-1}(e^{C\|\eta_{s}\|\mu}-1)MN^{-\frac{2}{3}+\delta}\sum_{\linG}|\eta_{\beta}(l)|\ \|(\mathcal{N}+5)^{n+1}\psi\|.\label{eq:NR-bound}
\end{align}
We use a similar approach to \eqref{eq:coherent-state-error-II} and
insert the above inequality \eqref{eq:NR-bound} to obtain
\begin{align}
\|I_{3}\psi\| & \leq C\|\eta_{s}\|\sum_{\kinG}\sumak\int_{0}^{1}\d\tau\ \|\left(\|\eta_{s}\|(\mathcal{N}+1)^{1/2}+C\|\eta_{s}\|^{2}\right)W_{1-\tau}(\eta_{s})\mr{1-\tau}_{\alpha}(k)W_{\tau}(\eta_{s})\psi\|\nonumber \\
 & \leq C\|\eta_{s}\|^{2}\sum_{\kinG}\sumak\int_{0}^{1}\d\tau\ \|\mathcal{N}W_{1-\tau}(\eta_{s})\mr{1-\tau}_{\alpha}(k)W_{\tau}(\eta_{s})\psi\|+C\|\eta_{s}\|^{2}\|I_{2}\psi\|\nonumber \\
 & \leq C\|\eta_{s}\|^{2}\sum_{\kinG}\sumak\int_{0}^{1}\d\tau\ e^{C\|\eta_{s}\|(1-\tau)}\|\mathcal{N}\mr{1-\tau}_{\alpha}(k)W_{\tau}(\eta_{s})\psi\|+C\|\eta_{s}\|^{2}\|I_{2}\psi\|\nonumber \\
 & \leq C\|\eta_{s}\|^{2}MN^{-\frac{2}{3}+\delta}\int_{0}^{1}\d\tau\ e^{C\|\eta_{s}\|}(e^{C\|\eta_{s}\|(1-\tau)}-1)\|(\mathcal{N}+5)^{2}\psi\|+C\|\eta_{s}\|^{2}\|I_{2}\psi\|\nonumber \\
 & \leq C\|\eta_{s}\|MN^{-\frac{2}{3}+\delta}e^{C\|\eta_{s}\|}(e^{C\|\eta_{s}\|}-1)\|(\mathcal{N}+5)^{2}\psi\|\nonumber \\
 & \quad+CMN^{-\frac{2}{3}+\delta}\|\eta_{s}\|^{2}e^{C\|\eta_{s}\|}(e^{C\|\eta_{s}\|}-1)\|(\mathcal{N}+5)\psi\|\nonumber \\
 & \leq CMN^{-\frac{2}{3}+\delta}(\|\eta_{s}\|+\|\eta_{s}\|^{2})e^{C\|\eta_{s}\|}(e^{C\|\eta_{s}\|}-1)\|(\mathcal{N}+5)^{2}\psi\|.\label{eq:I3-bound}
\end{align}

For $I_{4}$, we first calculate 
\begin{align}
 & \partial_{y_{i}}\left\{ W_{1-\tau}(\eta_{s})\mr{1-\tau}_{\alpha}(k)W_{\tau}(\eta_{s})\right\} \nonumber \\
 & =\int_{\tau}^{1}\d\sigma\ \partial_{y_{i}}e^{(1-\sigma)B}\left(\sum_{\linG}\eta_{\alpha}(l)\mathcal{E}_{\alpha}(l,k)\right)e^{\sigma B}+\int_{\tau}^{1}\d\sigma\ e^{(1-\sigma)B}\left(\sum_{\linG}ik_{i}\eta_{\alpha}(l)\mathcal{E}_{\alpha}(l,k)\right)e^{\sigma B}\nonumber \\
 & \hphantom{=}+\int_{\tau}^{1}\d\sigma\ e^{(1-\sigma)B}\left(\sum_{\linG}\eta_{\alpha}(l)\mathcal{E}_{\alpha}(l,k)\right)\partial_{y_{i}}e^{\sigma B}\\
 & \eqqcolon I_{4,1}+I_{4,2}+I_{4,3}+I_{4,4}+I_{4,5}\nonumber 
\end{align}
with 
\begin{align}
I_{4,1} & =\int_{\tau}^{1}\d\sigma\ (1-\sigma)\left(c^{\ast}(ik_{i}\eta_{s})-c(ik_{i}\eta_{s})+i(1-\sigma)\text{Im}\langle\eta_{s},ik_{i}\eta_{s}\rangle\right)\nonumber \\
 & \hphantom{=\int_{\tau}^{1}\d\sigma\ (1-\sigma)}\times e^{(1-\sigma)B}\left(\sum_{\linG}\eta_{\alpha}(l)\mathcal{E}_{\alpha}(l,k)\right)e^{\sigma B},\\
I_{4,2} & =2i\int_{\tau}^{1}\d\sigma\int_{0}^{1-\sigma}\d a\ e^{(1-a)B}\text{Im}\langle ik_{i}\eta_{s},\mr{1-\sigma-a}\rangleka e^{aB}\left(\sum_{\linG}\eta_{\alpha}(l)\mathcal{E}_{\alpha}(l,k)\right)e^{\sigma B},\\
I_{4,3} & =\int_{\tau}^{1}\d\sigma\ e^{(1-\sigma)B}\left(\sum_{\linG}ik_{i}\eta_{\alpha}(l)\mathcal{E}_{\alpha}(l,k)\right)e^{\sigma B},\\
I_{4,4} & =\int_{\tau}^{1}\d\sigma\ e^{(1-\sigma)B}\left(\sum_{\linG}\eta_{\alpha}(l)\mathcal{E}_{\alpha}(l,k)\right)\sigma\left(c^{\ast}(ik_{i}\eta_{s})-c(ik_{i}\eta_{s})+i\sigma\text{Im}\langle\eta_{s},ik_{i}\eta_{s}\rangle\right)e^{\sigma B},\\
I_{4,5} & =2i\int_{\tau}^{1}\d\sigma\ e^{(1-\sigma)B}\left(\sum_{\linG}\eta_{\alpha}(l)\mathcal{E}_{\alpha}(l,k)\right)\int_{0}^{\sigma}\d a\ e^{(1-a)B}\text{Im}\langle ik_{i}\eta_{s},\mr{\sigma-a}\rangleka e^{aB}.
\end{align}

We approach each term similarly to \eqref{eq:coherent-state-error-I}.

For $I_{4,1}$, it holds 
\begin{align}
 & \|I_{4,1}\psi\|\nonumber \\
 & \leq C\sum_{\linG}|\eta_{\alpha}(l)|\int_{\tau}^{1}\d\sigma\ (1-\sigma)\left(\|\eta_{s}\|\ \|(\mathcal{N}+1)e^{(1-\sigma)B}\mathcal{E}_{\alpha}(l,k)e^{\sigma B}\psi\|+C\|\eta_{s}\|^{2}\|\mathcal{E}_{\alpha}(l,k)e^{\sigma B}\psi\|\right)\nonumber \\
 & \leq C\|\eta_{s}\|\sum_{\linG}|\eta_{\alpha}(l)|\int_{\tau}^{1}\d\sigma\ (1-\sigma)\left(e^{C\|\eta_{s}\|(1-\sigma)}\|(\mathcal{N}+3)\mathcal{E}_{\alpha}(l,k)e^{\sigma B}\psi\|+C\|\eta_{s}\|\,\|\mathcal{E}_{\alpha}(l,k)e^{\sigma B}\psi\|\right)\nonumber \\
 & \leq C\|\eta_{s}\|MN^{-\frac{2}{3}+\delta}\sum_{\linG}|\eta_{\alpha}(l)|\int_{\tau}^{1}\d\sigma\ \left((1-\sigma)e^{C\|\eta_{s}\|(1-\sigma)}+C\|\eta_{s}\|\right)\|(\mathcal{N}+3)^{2}e^{\sigma B}\psi\|\nonumber \\
 & \leq C\|\eta_{s}\|MN^{-\frac{2}{3}+\delta}\sum_{\linG}|\eta_{\alpha}(l)|\int_{\tau}^{1}\d\sigma\ \left((1-\sigma)e^{C\|\eta_{s}\|(1-\sigma)}+C\|\eta_{s}\|\right)e^{\tilde{C}\|\eta_{s}\|\sigma}\|(\mathcal{N}+5)^{2}\psi\|\\
 & \leq CMN^{-\frac{2}{3}+\delta}\sum_{\linG}|\eta_{\alpha}(l)|\left((1-\tau)e^{C\|\eta_{s}\|(1-\tau)}+C\|\eta_{s}\|e^{C\|\eta_{s}\|}\right)\|(\mathcal{N}+5)^{2}\psi\|
\end{align}
where we used \prettyref{lem:Pair-operator-bounds}, \prettyref{lem:eta-bounds} and \prettyref{lem:eta-bounds2}
in the first inequality, \prettyref{lem:CAR-error} in the third inequality
and \prettyref{prop:Stability-number-op} in the second and forth
inequality. Therefore using $\int_{0}^{1}(1-\tau)e^{y(1-\tau)}\d\tau=y^{-2}\left((y-1)e^{y}+1\right)$
yields
\begin{align}
 & \int_{0}^{1}\d\tau\ \|\eta_{s}\|\sqrt{\sum_{\kinG}\sum_{\alpha\in\mathcal{I}_{k}\cap\mathcal{I}_{l}}\|I_{4,1}\psi\|^{2}}\\
 & \leq CMN^{-\frac{2}{3}+\delta}\left((\|\eta_{s}\|^{3}+\|\eta_{s}\|-1)e^{C\|\eta_{s}\|}+1\right)\|(\mathcal{N}+5)^{2}\psi\|
\end{align}
where we used $\sumal\bigg(\sum_{\linG}|\eta_{\alpha}(l)|\bigg)^{2}\leq\sum_{l',\linG}\|\eta(l)\|_{l^{2}}\|\eta(l')\|_{l^{2}}\leq C\|\eta\|^{2}$
by \prettyref{lem:eta-bounds} and \prettyref{lem:eta-bounds2}.

For $I_{4,2}$, it holds
\begin{align}
 & \|I_{4,2}\psi\|\nonumber \\
 & \leq C\|\eta_{s}\|MN^{-\frac{2}{3}+\delta}\sum_{\linG}|\eta_{\alpha}(l)|\int_{\tau}^{1}\d\sigma\int_{0}^{1-\sigma}\d a\ \|(\mathcal{N}+3)e^{aB}\mathcal{E}_{\alpha}(l,k)e^{\sigma B}\psi\|\nonumber \\
 & \leq C\|\eta_{s}\|\left(MN^{-\frac{2}{3}+\delta}\right)^{2}\sum_{\linG}|\eta_{\alpha}(l)|\int_{\tau}^{1}\d\sigma\int_{0}^{1-\sigma}\d a\ e^{C\|\eta_{s}\|a}\|(\mathcal{N}+5)^{2}e^{\sigma B}\psi\|\nonumber \\
 & \leq C\left(MN^{-\frac{2}{3}+\delta}\right)^{2}\sum_{\linG}|\eta_{\alpha}(l)|\int_{\tau}^{1}\d\sigma(e^{C\|\eta_{s}\|(1-\sigma)}-1)e^{\tilde{C}\|\eta_{s}\|\sigma}\|(\mathcal{N}+8)^{2}\psi\|\nonumber \\
 & \leq C\|\eta_{s}\|^{-1}\left(MN^{-\frac{2}{3}+\delta}\right)^{2}\sum_{\linG}|\eta_{\alpha}(l)|e^{C\|\eta_{s}\|(1-\tau)}\|(\mathcal{N}+8)^{2}\psi\|
\end{align}
where we used Cauchy-Schwarz with \eqref{eq:coherent-state-error-I}, \prettyref{lem:eta-bounds} and \prettyref{lem:eta-bounds2} in the first inequality, \prettyref{prop:Stability-number-op}
and \prettyref{lem:CAR-error} in the second inequality. Therefore
it follows
\begin{equation}
 \int_{0}^{1}\d\tau\ \|\eta_{s}\|\sqrt{\sum_{\kinG}\sum_{\alpha\in\mathcal{I}_{k}\cap\mathcal{I}_{l}}\|I_{4,2}\psi\|^{2}}\nonumber \leq CM^{2}N^{-\frac{4}{3}+2\delta}(e^{C\|\eta_{s}\|}-1)\|(\mathcal{N}+8)^{2}\psi\|.
\end{equation}

The term $I_{4,3}$ is estimated similarly to \eqref{eq:coherent-state-error-I}
by
\begin{equation}
\|I_{4,3}\psi\|\leq C\|\eta_{s}\|^{-1}MN^{-\frac{2}{3}+\delta}\sum_{\linG}|\eta_{\alpha}(l)|(e^{C\|\eta_{s}\|}-e^{C\|\eta_{s}\|\tau})\|(\mathcal{N}+3)\psi\|
\end{equation}
and therefore
\begin{align}
 & \int_{0}^{1}\d\tau\ \|\eta_{s}\|\sqrt{\sum_{\kinG}\sum_{\alpha\in\mathcal{I}_{k}\cap\mathcal{I}_{l}}\|I_{4,3}\psi\|^{2}}\leq CMN^{-\frac{2}{3}+\delta}\|(\mathcal{N}+3)\psi\|.
\end{align}

Similarly for $I_{4,4}$, we estimate 
\begin{align}
 & \|I_{4,4}\psi\|\nonumber \\
 & \leq CMN^{-\frac{2}{3}+\delta}\sum_{\linG}|\eta_{\alpha}(l)|\int_{\tau}^{1}\d\sigma\ \sigma\|\mathcal{N}\left(c^{\ast}(ik_{i}\eta_{s})-c(ik_{i}\eta_{s})+i\sigma\text{Im}\langle\eta_{s},ik_{i}\eta_{s}\rangle\right)e^{\sigma B}\psi\|\nonumber \\
 & \leq C(\|\eta_{s}\|+\|\eta_{s}\|^{2})MN^{-\frac{2}{3}+\delta}\sum_{\linG}|\eta_{\alpha}(l)|\int_{\tau}^{1}\d\sigma\ \sigma\|(\mathcal{N}+1)^{2}e^{\sigma B}\psi\|\nonumber \\
 & \leq C(\|\eta_{s}\|^{-1}+1)MN^{-\frac{2}{3}+\delta}\sum_{\linG}|\eta_{\alpha}(l)|\ \left(e^{C\|\eta_{s}\|}+(1-C\|\eta_{s}\|\tau)e^{C\|\eta_{s}\|\tau}\right)\|(\mathcal{N}+3)^{2}\psi\|\nonumber \\
 & \leq C(\|\eta_{s}\|^{-1}+1)MN^{-\frac{2}{3}+\delta}\sum_{\linG}|\eta_{\alpha}(l)|\ e^{C\|\eta_{s}\|}\|(\mathcal{N}+3)^{2}\psi\|
\end{align}
using $\int_{\tau}^{1}\d\sigma\ \sigma e^{y\sigma}=y^{-2}\left((y-1)e^{y}-(\tau y-1)e^{y\tau}\right)$
and therefore
\begin{align}
 & \int_{0}^{1}\d\tau\ \|\eta_{s}\|\sqrt{\sum_{\kinG}\sum_{\alpha\in\mathcal{I}_{k}\cap\mathcal{I}_{l}}\|I_{4,4}\psi\|^{2}}\nonumber \\
 & \leq CMN^{-\frac{2}{3}+\delta}(\|\eta_{s}\|+\|\eta_{s}\|^{2})e^{C\|\eta_{s}\|}\|(\mathcal{N}+1)^{2}\psi\|.
\end{align}

For $I_{4,5}$, we estimate
\begin{align}
\|I_{4,5}\psi\| & =\|2i\int_{\tau}^{1}\d\sigma\ e^{(1-\sigma)B}\left(\sum_{\linG}\eta_{\alpha}(l)\mathcal{E}_{\alpha}(l,k)\right)\int_{0}^{\sigma}\d a\ e^{(1-a)B}\text{Im}\langle ik_{i}\eta_{s},\mr{\sigma-a}\rangleka e^{aB}\psi\|\nonumber \\
 & \leq CMN^{-\frac{2}{3}+\delta}\sum_{\linG}|\eta_{\alpha}(l)|\int_{\tau}^{1}\d\sigma\int_{0}^{\sigma}\d a\ \|\mathcal{N}e^{(1-a)B}\text{Im}\langle ik_{i}\eta_{s},\mr{\sigma-a}\rangleka e^{aB}\psi\|\nonumber \\
 & \leq C\|\eta_{s}\|MN^{-\frac{2}{3}+\delta}\sum_{\linG}|\eta_{\alpha}(l)|\int_{\tau}^{1}\d\sigma\int_{0}^{\sigma}\d a\ e^{C\|\eta_{s}\|(1-a)}\bigg(\sumak\|(\mathcal{N}+3)\mr{\sigma-a}_{\alpha}(k)e^{aB}\psi\|^{2}\bigg)^{1/2}\nonumber \\
 & \leq C\|\eta_{s}\|M^{2}N^{-\frac{4}{3}+2\delta}\sum_{\linG}|\eta_{\alpha}(l)|\int_{\tau}^{1}\d\sigma\int_{0}^{\sigma}\d a\ e^{C\|\eta_{s}\|(1-a)}(e^{C\|\eta_{s}\|(\sigma-a)}-1)\|(\mathcal{N}+8)^{2}e^{aB}\psi\|\nonumber \\
 & \leq C\|\eta_{s}\|M^{2}N^{-\frac{4}{3}+2\delta}\sum_{\linG}|\eta_{\alpha}(l)|\int_{\tau}^{1}\d\sigma\int_{0}^{\sigma}\d a\ e^{C\|\eta_{s}\|}e^{C'\|\eta_{s}\|\sigma}e^{-C''\|\eta_{s}\|a}\|(\mathcal{N}+10)^{2}\psi\|\nonumber \\
 & \leq C\|\eta_{s}\|^{-1}M^{2}N^{-\frac{4}{3}+2\delta}\sum_{\linG}|\eta_{\alpha}(l)|\ (e^{C\|\eta_{s}\|}-e^{C\|\eta_{s}\|\tau})\|(\mathcal{N}+10)^{2}\psi\|
\end{align}
where we used \eqref{eq:NR-bound} in the third inequality. Therefore
we obtain
\begin{equation}
\int_{0}^{1}\d\tau\ \|\eta_{s}\|\sqrt{\sum_{\kinG}\sum_{\alpha\in\mathcal{I}_{k}\cap\mathcal{I}_{l}}\|I_{4,5}\psi\|^{2}}\leq CM^{2}N^{-\frac{4}{3}+2\delta}\|(\mathcal{N}+10)^{2}\psi\|.
\end{equation}

Taking the five bounds together we finally obtain
\begin{align}
\|I_{4}\psi\| & =\|2i\sum_{i=1}^{3}\int_{0}^{1}\d\tau\ \text{Im}\langle ik_{i}\eta_{s},\partial_{y_{i}}\left\{ W_{1-\tau}(\eta_{s})\mr{1-\tau}_{\alpha}(k)W_{\tau}(\eta_{s})\right\} \rangleka\psi\|\nonumber \\
 & \leq C\int_{0}^{1}\d\tau\sum_{n=1}^{5}\|\eta_{s}\|\bigg(\sum_{\kinG}\sum_{\alpha\in\mathcal{I}_{k}\cap\mathcal{I}_{l}}\|I_{4,n}\psi\|^{2}\bigg)^{1/2}\nonumber \\
 & \leq CMN^{-\frac{2}{3}+\delta}\left((\|\eta_{s}\|^{3}+\|\eta_{s}\|-1)e^{C\|\eta_{s}\|}+1\right)\|(\mathcal{N}+8)^{2}\psi\|\label{eq:I4-bound}.
\end{align}

Combining \eqref{eq:I1-bound}, \eqref{eq:I2-bound}, \eqref{eq:I3-bound}
and \eqref{eq:I4-bound} with $\psi\equiv\phi\otimes\Omega$ yields
the desired final result of
\begin{equation}
\|\big(\Delta_{y}W(\eta_{s})\big)\psi\|\leq C(\|\eta_{s}\|+\|\eta_{s}\|^{4})(e^{C\|\eta_{s}\|}+1).
\end{equation}

Similarly to \eqref{eq:I1-bound} and \eqref{eq:I2-bound}, we further
estimate the second term of \eqref{eq:Laplacian-action}
\begin{align}
 & \|\nabla_{y}W(\eta_{s})\cdot\nabla_{y}\psi\|\nonumber \\
& =  \|\sum_{i=1}^{3}\Big(\left(c^{\ast}(ik_{i}\eta_{s})-c(ik_{i}\eta_{s})+i\text{Im}\langle\eta_{s},ik_{i}\eta_{s}\rangle\right)W(\eta_{s})\nonumber \\
 & \hphantom{\|\sum_{i=1}^{3}\Big(}  +2i\int_{0}^{1}\d\tau\ \text{Im}\langle ik_{i}\eta_{s},\mr{1-\tau}\rangleka W_{\tau}(\eta_{s})\Big)\partial_{y_{i}}\psi\|\nonumber \\
&\leq  C\left((\|\eta_{s}\|+\|\eta_{s}\|^{2})e^{C\|\eta_{s}\|}+(\|\eta_{s}\|^{2}+\|\eta_{s}\|^{4})\right)\sum_{i=1}^{3}\|\partial_{y_{i}}\phi\|\nonumber \\
& \hphantom{\|\sum_{i=1}^{3}\Big(}  +CMN^{-\frac{2}{3}+\delta}e^{C\|\eta_{s}\|}(e^{C\|\eta_{s}\|}-1)\sum_{i=1}^{3}\|\partial_{y_{i}}\phi\| \\
&\leq C(\|\eta_{s}\|+\|\eta_{s}\|^{4})(e^{C\|\eta_{s}\|}+1)\sum_{i=1}^{3}\|\partial_{y_{i}}\phi\|.
\end{align}
Therefore in total for all $t\in\R$
\begin{align}
 & \|\Delta_{y}W(\eta_{t})\phi\otimes\Omega\|\\
 & \leq C\left\{ (\|\eta_{s}\|+\|\eta_{s}\|^{4})(e^{C\|\eta_{s}\|}+1)+\sum_{i=1}^{3}\big((\|\eta_{s}\|+\|\eta_{s}\|^{4})(e^{C\|\eta_{s}\|}+1)\|\partial_{y_{i}}\phi\|+\|\partial_{y_{i}}^{2}\phi\|\big)\right\} .
\end{align}
\end{proof}

\paragraph*{Proofs of properties of $\eta$}
\begin{proof}[Proof of \prettyref{lem:eta-bounds}]
  Recall that by definition it holds
  \begin{equation}
  \|\eta_{s}\|^{2}=\sum_{\kinG}\sumak|\big(\eta_{s}\big)_{\alpha}(k)|^{2}=\lambda^{2}\sum_{\kinG}|\hat{V}(k)|^{2}\underbrace{\sumak\bigg|n_{\alpha}(k)\frac{\sin\big(\epsilon_{\alpha}(k)s/2\big)}{\epsilon_{\alpha}(k)/2}\bigg|^{2}}_{\eqqcolon S_{k}}\label{eq:Sk}
  \end{equation}
  with $\epsilon_{\alpha}(k)=2 k_{F}|k\cdot\omega_{\alpha}|$.
  We will first approximate $S_{k}$ and then give a upper bound. 
  
  Firstly, we approximate the $\alpha$-sum $S_{k}$ by an integral
  by identifying 
  \begin{equation}
  n_{\alpha}(k)^{2}=\kF^{2}|k|\sigma(p_{\alpha})u_{\alpha}(k)^{2}\left(1+\mathcal{O}(M^{\frac{1}{2}}N^{-\frac{1}{3}+\delta})\right)\label{eq:recall-n_alpha(k)}
  \end{equation}
  with $\cos\theta_{\alpha}\coloneqq|\hat{k}\cdot\hat{\omega}_{\alpha}|\equiv u_{\alpha}(k)^{2}$
  analogously to \prettyref{lem:Approx-n_alpha(k)^2-wo-sum}. Thus,
  we calculate the half-sphere integral as approximation for $S_{k}$
  \begin{align}
  S_{k} & =\frac{1}{|k|}\int_{0}^{2\pi}\d\varphi\int_{0}^{\pi/2}\d\theta\frac{\sin(\kF|k|s\cos\theta)^{2}}{\cos\theta}\sin\theta+\mathcal{E}\nonumber \\
   & =\frac{2\pi}{|k|}\int_{0}^{1}\d u\frac{\sin(\kF|k|su)^{2}}{u}+\mathcal{E}\nonumber \\
   & =\frac{\pi}{|k|}\left\{ \log(2\kF|k|s)-\text{Ci}(2\kF|k|s)+\gamma\right\} +\mathcal{E}
  \end{align}
  with total error $\mathcal{E}=\mathcal{E}_{1}+\mathcal{E}_{2}+\mathcal{E}_{3}$
  consisting three terms: the error $\mathcal{E}_{1}$ from \prettyref{eq:recall-n_alpha(k)},
  the error $\mathcal{E}_{2}$ from approximating the discrete variables
  $\theta_{\alpha}$ by continuous $\theta$ given by
  \begin{align}
  \mathcal{E}_{2} & =\left|\int_{p_{\alpha}}\d\sigma\ \frac{\sin(\kF|k|s\cos\theta)^{2}}{\cos\theta}-\sigma(p_{\alpha})\frac{\sin(\kF|k|s\cos\theta_{\alpha})^{2}}{\cos\theta_{\alpha}}\right|\nonumber \\
   & \leq\sup_{\hat{\omega}(\theta,\varphi)\in p_{\alpha}}\left|\frac{\d}{\d\theta}\frac{\sin(\kF|k|s\cos\theta)^{2}}{\cos\theta}\right|\sup_{(\theta,\varphi)\in p_{\alpha}}\left|\theta-\theta_{\alpha}\right|\sigma(p_{\alpha})\nonumber \\
   & \leq C(\kF|k|s)^{2}M^{-\frac{3}{2}}
  \end{align}
  and the error $\mathcal{E}_{3}$ from the patch construction which
  is given by
  \begin{equation}
  \mathcal{E}_{3}=\sup_{\hat{\omega}(\theta,\varphi)\in p_{\alpha}}\left|\frac{\sin(\kF|k|s\cos\theta)^{2}}{\cos\theta}\right|\left|\sigma\Big(\bigcup_{\alpha\in\mathcal{I}_{k}}p_{\alpha}\Big)-\sigma(S)\right|\leq C\kF|k|s\left(N^{-\delta}+M^{\frac{1}{2}}N^{-\frac{1}{3}}\right).
  \end{equation}
  Thus with the triangle inequality 
  \begin{align}
   & \left|S_{k}-\frac{\pi}{|k|}\left\{ \log(2\kF|k|s)-\text{Ci}(2\kF|k|s)+\gamma\right\} \right|\nonumber \\
   & \leq\frac{C}{|k|}\left\{ M^{\frac{1}{2}}N^{-\frac{1}{3}+\delta}+(\kF|k|s)^{2}M^{-\frac{3}{2}}+\kF|k|s\left(N^{-\delta}+M^{\frac{1}{2}}N^{-\frac{1}{3}}\right)\right\} .
  \end{align}

  We will now give an estimate for $S_{k}$ by approximating for any $m \in \R$
  \[
  \frac{\sin(m x)^{2}}{x}=\frac{1}{2}\frac{1-\cos(2m x)}{x}\leq\begin{cases}
  m^{2}x & \text{if } 2m x\in[0,\pi/2],\\
  \frac{1}{x} & \text{if } 2m x>\pi/2
  \end{cases}
  \]
  where we used $1-x^{2}/2\leq\cos(x)$ for all $x\in[0,\pi/2]$. Therefore
  \begin{align}
   & \frac{2\pi}{|k|}\int_{0}^{1}\d u\frac{\sin(\kF|k|su)^{2}}{u}\nonumber \\
   & \leq\frac{2\pi}{|k|}\int_{0}^{1}\d u\ \left\{ (\kF|k|s)^{2}u\ \chi(2\kF|k|su\leq\pi/2)+\frac{1}{u}\chi(2\kF|k|su>\pi/2)\right\} \nonumber \\
   & \leq\frac{2\pi}{|k|}\int_{0}^{1}\d u\ \left\{ (\kF|k|s)^{2}u\ \chi\Big(u\leq\frac{\pi}{4\kF|k|s}\Big)+\frac{1}{u}\ \chi\Big(u>\frac{\pi}{4\kF|k|s}\Big)\right\} \nonumber \\
   & =\frac{2\pi}{|k|}\left\{ \frac{1}{2}(\kF|k|s)^{2}\Big(\frac{\pi}{4\kF|k|s}\Big)^{2}-\log\Big(\frac{\pi}{4\kF|k|s}\Big)\right\} \ \chi\Big(1>\frac{\pi}{4\kF|k|s}\Big)\nonumber \\
   & \qquad+\frac{2\pi}{|k|}\frac{1}{2}(\kF|k|s)^{2}\ \chi\Big(1\leq\frac{\pi}{4\kF|k|s}\Big)\nonumber \\
   & =\frac{2\pi}{|k|}\left\{ \frac{\pi^{2}}{32}-\log(\pi)+\log(4\kF|k|s)\right\} \ \chi\Big(\kF|k|s>\frac{\pi}{4}\Big)\nonumber \\
   & \qquad+\frac{2\pi}{|k|}\frac{1}{2}(\kF|k|s)^{2}\ \chi\Big(\kF|k|s\leq\frac{\pi}{4}\Big)\nonumber \\
   & \leq\frac{2\pi}{|k|}\min\left\{ \frac{1}{2}(\kF|k|s)^{2},\log(4\kF|k|s+1)\right\} .
  \end{align}
  Note that $\log(4\kF|k|s+1)\leq\log(4\kF s+1)+\log|k|$
  using $\frac{1}{2}x^{2}\leq\log(4x+1)$ for all $x\leq2$ and $|k|\geq1$
  for all $k\in\mathbb{Z}_{\ast}^{3}$. Thus, it holds
  \begin{align}
  \|\eta_{s}\|^{2} & \leq2\pi \lambda^2 \sum_{k\in\mathbb{Z}_{\ast}^{3}}|\hat{V}(k)|^{2}\min\left\{ \frac{(\kF s)^{2}}{2}|k|,\frac{\log(4\kF|k|s+1)}{|k|}\right\} \nonumber \\
   & \leq2\pi \lambda^2 \min\left\{ \|\hat{V}(\cdot)^{1/2}\|_{2}^{2}(\kF s)^{2}/2,\|\hat{V}\|_{2}^{2}\left(\log(4\kF s+1)+1\right)\right\} .
  \end{align}
  
  Thus by using $\sqrt{x+y}\leq\sqrt{x}+\sqrt{y}$ and $\sqrt{\log(x+1)}\leq\log(x+2)$
  we obtain the desired
  \begin{align}
  \|\eta_{s}\| & \leq \lambda \min\left\{ b\kF s,\log(4\kF s+2)^{a}+a\right\} ,\\
  e^{c_{0}\|\eta_{s}\|} & \leq  \min\left\{ e^{b c_{0} \lambda \kF s},e^{a c_{0} \lambda}(4\kF s+2)^{a c_{0} \lambda}\right\} 
  \end{align}
  with $a=\sqrt{2\pi}\|\hat{V}\|_{2},b=\sqrt{\pi}\|\hat{V}(\cdot)^{1/2}\|_{2}.$
  \end{proof}

  \begin{proof}[Proof of \prettyref{lem:eta-bounds2}]
    For the inequalities, we observe that for $n\in\mathbb{N}_{0}$ it
    holds
    \begin{align}
    \sum_{\kinG}\||k|^{n}\eta_{s}(k)\|_{l^{2}} & \equiv\sum_{\kinG}|k|^{n}\Big(\sumak|(\eta_{s})_{\alpha}(k)|^{2}\Big)^{1/2}=\lambda\sum_{\kinG}|k|^{n}|\hat{V}(k)|\left(S_{k}\right)^{1/2}
    \end{align}
    with $S_k$ from \eqref{eq:Sk}
    Since by assumption $\|(\cdot)^{n}\hat{V}\|_{1}$ is bounded for each
    $n\in\mathbb{N}$, the first statement follows. The second statement
    simply follows from the same calculation and recalling that $\langle\eta_{s},|k|^{n}\eta_{s}\rangle\equiv\sum_{\kinG}\sumak|k|^{n}|\eta_{\alpha}(k)|^{2}$.
    The third statement follows from \prettyref{lem:Pair-operator-bounds}
    and 
    \begin{equation}
    \|c^{\ast}(|k|^{n}\eta_{s})\psi\|\leq\sum_{\kinG}\|\sumak|k|^{n}\eta_{\alpha}(k)\ccak\psi\|\leq\sum_{\kinG}\||k|^{n}\eta_{s}(k)\|_{l^{2}}\|(\mathcal{N}+1)^{1/2}\psi\|.
    \end{equation}
    \end{proof}

\section{Proof of \prettyref{cor:Corollary}}
\begin{proof}[Proof of \prettyref{cor:Corollary}]
We observe that by the inverse triangle inequality it holds 
\begin{subequations}
\begin{align}
 & \|R^{\ast}e^{-i\mathbb{H}t}R\psi-e^{-i\widetilde{\mathbb{H}^{\text{eff}}}t}\psi\|\nonumber \\
 & \geq\|e^{iP(t)}W(\eta_{t})\psi-e^{-i\widetilde{\mathbb{H}^{\text{eff}}}t}\psi\|\label{eq:inverse-triangle-a}\\
 & \hphantom{\geq}-\|e^{-i\mathbb{H}^{\text{eff}}t}\psi-e^{iP(t)}W(\eta_{t})\psi\|-\|R^{\ast}e^{-i\mathbb{H}t}R\psi-e^{-i\mathbb{H}^{\text{eff}}t}\psi\|.\label{eq:inverse-triangle-b}
\end{align}
\end{subequations}
The second line \eqref{eq:inverse-triangle-b} is to be bounded from
below by bounds of order $o(1)$ from \prettyref{thm:Main-thm} and
\prettyref{thm:Effective-coherent-dynamics}. Therefore it is sufficient
to show that the first line \eqref{eq:inverse-triangle-a} is large.

The first line \eqref{eq:inverse-triangle-a} can be explicitly estimated
by the same approach as in the proof of \prettyref{thm:Effective-coherent-dynamics}.
That is use Duhamel's formula, commute all $\cak$-operators with
$W(\eta_{s})$ and collect the error terms via \prettyref{lem:Approximate-shift}:
\begin{align}
 & \bigg(e^{i\widetilde{\mathbb{H}^{\text{eff}}}t}e^{iP(t)}W(\eta_{t})-1\bigg)\psi\nonumber \\
 & =i\int_{0}^{t}\d s\ e^{i\widetilde{\mathbb{H}^{\text{eff}}}s}e^{iP(s)}\left( \big(\widetilde{\mathbb{H}^{\text{eff}}}-E_{N}^{\text{pw}}+2\text{Im}(\dot{\nu}_{s})-\text{Im}\langle\eta_{s},\dot{\eta}_{s}\rangle\big)W(\eta_{s})\psi-i\partial_{s}W(\eta_{s})\psi\right) \\
 & =i\int_{0}^{t}\d s\ e^{i\widetilde{\mathbb{H}^{\text{eff}}}s}e^{iP(s)}\bigg(\big(h_{0}-c^{\ast}(h_{y})-c(h_{y})\big)W(\eta_{s})\psi+\text{Error}\bigg)\\
 & =i\int_{0}^{t}\d s\ e^{i\widetilde{\mathbb{H}^{\text{eff}}}s}e^{iP(s)}\bigg(-W(\eta_{s})\left(c^{\ast}(h_{y})+\langle\eta_{s},h_{y}\rangle+\langle h_{y},\eta_{s}\rangle\right)\psi+h_{0}W(\eta_{s})\psi+\widetilde{\text{Error}}\bigg)
\end{align}
with $\int_{0}^{t}\d s\ \text{Error}=\text{Error}_{1}+\text{Error}_{2}$
from \eqref{eq:almost-bosonic-inequality} and where we also commuted
$c^{\ast}(h_{y})$ and $c(h_{y})$ with $W(\eta_{s})$ such that the
new error term is of the form 
\begin{align}
\widetilde{\text{Error}} & \coloneqq\sum_{\kinG}\sumak\left(\epsilon_{\alpha}(k)\ccak-e^{is\epsilon_{\alpha}(k)}\right)\overline{\phiak}W(\eta_{s})\mr{1}_{\alpha}(k)\psi\nonumber \\
 & \hphantom{\coloneqq}-\sum_{\kinG}\sumak W(\eta_{s})\mr{1}_{\alpha}(k)^{\ast}\phiak\psi\nonumber \\
 & \hphantom{\coloneqq}-2i\int_{0}^{1}\d\tau\ W_{1-\tau}(\eta_{s})\text{Im}\langle\dot{\eta}_{s},\mr{1-\tau}\rangleka W_{\tau}(\eta_{s})\psi.
\end{align}
Note that $\widetilde{\text{Error}}$ is $s$-dependent even though
we did not include the dependence explicitly in the notation.

We employ the Cauchy-Schwarz inequality, insert our previous finding
with the triangle inequality and use that $e^{-i\widetilde{\mathbb{H}^{\text{eff}}}s}$
is unitary to estimate
\begin{align}
 & \|e^{iP(t)}W(\eta_{t})\psi-e^{-i\widetilde{\mathbb{H}^{\text{eff}}}t}\psi\|\nonumber \\
 & =\|\big(e^{i\widetilde{\mathbb{H}^{\text{eff}}}t}e^{iP(t)}W(\eta_{t})-1\big)\psi\|\geq\left|\langle\psi,\big(1-e^{i\widetilde{\mathbb{H}^{\text{eff}}}t}e^{iP(t)}W(\eta_{t})\big)\psi\rangle\right|\nonumber \\
 & =\left|\int_{0}^{t}\d s\langle e^{-iP(s)}e^{-i\widetilde{\mathbb{H}^{\text{eff}}}s}\psi,\big(\widetilde{\mathbb{H}^{\text{eff}}}-E_{N}^{\text{pw}}+2\text{Im}(\dot{\nu}_{s})-\text{Im}\langle\eta_{s},\dot{\eta}_{s}\rangle\big)W(\eta_{s})\psi-i\partial_{s}W(\eta_{s})\psi\rangle\right|\nonumber \\
 & \geq\left|\int_{0}^{t}\d s\langle\psi,e^{i\widetilde{\mathbb{H}^{\text{eff}}}s}e^{iP(s)}W(\eta_{s})\left(c^{\ast}(h_{y})+2\text{Re}\langle h_{y},\eta_{s}\rangle\right)\psi\rangle\right|\nonumber \\
 & \hspace{4cm}-\int_{0}^{t}\d s\left|\langle e^{-iP(s)}e^{-i\widetilde{\mathbb{H}^{\text{eff}}}s}\psi,\widetilde{\text{Error}}+h_{0}W(\eta_{s})\psi\rangle\right|\\
 & \geq\left|\int_{0}^{t}\d s\langle\psi,\big(1+e^{i\widetilde{\mathbb{H}^{\text{eff}}}s}e^{iP(s)}W(\eta_{s})-1\big)\left(c^{\ast}(h_{y})+2\text{Re}\langle h_{y},\eta_{s}\rangle\right)\psi\rangle\right|\nonumber \\
 & \hspace{4cm}-\int_{0}^{t}\d s\left\{ \|\widetilde{\text{Error}}\|+\|h_{0}W(\eta_{s})\psi\|\right\} \\
 & \geq\left|\int_{0}^{t}\d s\left\{ \langle\psi,c^{\ast}(h_{y})\psi\rangle+2\text{Re}\langle h_{y},\eta_{s}\rangle\right\} \right|\nonumber \\
 & \hphantom{\geq}\qquad-\left|\int_{0}^{t}\d s\langle\big(e^{i\widetilde{\mathbb{H}^{\text{eff}}}s}e^{iP(s)}W(\eta_{s})-1\big)^{\ast}\psi,\left(c^{\ast}(h_{y})+2\text{Re}\langle h_{y},\eta_{s}\rangle\right)\psi\rangle\right|\nonumber \\
 & \hphantom{\geq}\qquad\qquad-\int_{0}^{t}\d s\left\{ \|\widetilde{\text{Error}}\|+\|h_{0}W(\eta_{s})\psi\|\right\} \\
 & \geq2\left|\int_{0}^{t}\d s\ \text{Re}\langle h_{y},\eta_{s}\rangle\right|-\big(\|c^{\ast}(h_{y})\psi\|+2\sup_{s\in[0,t]}|\text{Re}\langle h_{y},\eta_{s}\rangle|\big)\int_{0}^{t}\d s\|\big(e^{i\widetilde{\mathbb{H}^{\text{eff}}}s}e^{iP(s)}W(\eta_{s})-1\big)\psi\|\nonumber \\
 & \hphantom{\geq}\qquad\qquad-t\sup_{s\in[0,t]}\left\{ \|\widetilde{\text{Error}}\|+\|h_{0}W(\eta_{s})\psi\|\right\} \label{eq:coherent-lower-bound1}
\end{align}
where we used $\langle\psi,c^{\ast}(h_{y})\psi\rangle=0$ since $c(h_{y})\phi\otimes\Omega=0$. 

The three parts of the above inequality \eqref{eq:coherent-lower-bound1}
can be bounded in the following: 

Firstly, the error term can be bounded with \eqref{eq:h_0-bound}
and analogously to \eqref{eq:Error-estimate} and \eqref{eq:Error2-estimate}:
\begin{align}
 & t \sup_{s\in[0,t]}\left\{ \|\widetilde{\text{Error}}\|+\|h_{0}W(\eta_{s})\psi\|\right\} \nonumber \\
 & \leq C\lambda\kF t MN^{-\frac{2}{3}+\delta}(f_{C}( \lambda\kF t)-1)+C\beta\left(\log\left[f_{1}( \lambda\kF t)\right]+\log\left[f_{1}( \lambda\kF t)\right]^{4}\right)\left(f_{C}( \lambda\kF t)+1\right)\nonumber \\
 & \leq CQ_{V}(\kF t)\max\{\lambda\kF t MN^{-\frac{2}{3}+\delta},\beta t\} \nonumber \\
 & \leq C \max\{MN^{-\frac{2}{3}+\delta},\beta\kF^{-1}\lambda^{-1}\}  \eqqcolon d\label{eq:d-coherent}
\end{align}
where we used $0\leq t \lesssim \mathcal{O}(\kF^{-1}\lambda^{-1})$ and defined the error variable $d$.

Secondly, the prefactor of the integral term is bounded by \prettyref{lem:Pair-operator-bounds}
\begin{align}
 & \|c^{\ast}(h_{y})\psi\|+2\sup_{s\in[0,t]}|\text{Re}\langle h_{y},\eta_{s}\rangle|\nonumber \\
 & \leq\|h_{y}\|\|(\mathcal{N}+1)^{1/2}\psi\|+2\sup_{s\in[0,t]}\sum_{\kinG}\sumak\frac{1-\cos(s\epsilon_{\alpha}(k))}{\epsilon_{\alpha}(k)}|\phiak|^{2}\\
 & \leq\|h_{y}\|+\frac{2}{\kF}\|h_{y}\|^{2}\leq C\lambda\kF\big(\|\hat{V}\|+\lambda\|\hat{V}\|^{2}\big)\eqqcolon \theta\label{eq:alpha-coherent}
\end{align}
where we introduced the variable $\theta$.

Thirdly, we want to show that the remaining term $|\int_{0}^{t}\d s\,\text{Re}\langle h_{y},\eta_{s}\rangle|$
is large: 
\begin{align}
\left|\int_{0}^{t}\d s\,\text{Re}\langle h_{y},\eta_{s}\rangle\right| & =\left|\int_{0}^{t}\d s\ \text{Re}\sum_{\kinG}\sumak\frac{e^{-is\epsilon_{\alpha}(k)}-1}{\epsilon_{\alpha}(k)}|\phiak|^{2}\right|\nonumber \\
 & =\left|\int_{0}^{t}\d s\ \text{Re}\sum_{\kinG}\lambda^{2}\hat{V}(k)^{2}\sumak\frac{e^{-is\epsilon_{\alpha}(k)}-1}{\epsilon_{\alpha}(k)}n_{\alpha}(k)^{2}\right|\nonumber \\
 & =\frac{\pi\lambda^{2}\kF^{2}}{\kF}\left|\int_{0}^{t}\d s\ \text{Re}\sum_{\kinG}\hat{V}(k)^{2}\frac{e^{-i2\kF|k|s}-1}{2\kF|k|s}\right|\left\{ 1+\mathcal{O}\left(M^{\frac{1}{2}}N^{-\frac{1}{3}+\delta}+N^{-\delta}\right)\right\} 
\end{align}
where we used that the sum over $\alpha$ is approximated similarly
to \eqref{lem:Approx-n_alpha(k)^2} by integrating over the half sphere
\begin{align}
\sumak\frac{e^{-is\epsilon_{\alpha}(k)}-1}{\epsilon_{\alpha}(k)}n_{\alpha}(k)^{2} & =2\pi\int_{0}^{\pi/2}\d\theta\ \frac{e^{-is2\kF|k|\cos\theta}-1}{2\kF|k|\cos\theta}\cos\theta\sin\theta\left\{ 1+\mathcal{O}\left(M^{\frac{1}{2}}N^{-\frac{1}{3}+\delta}+N^{-\delta}\right)\right\} \nonumber \\
 & =i\frac{2\pi\kF^{2}}{2\kF}\frac{e^{-i2\kF|k|s}-1}{2\kF|k|s}\left\{ 1+\mathcal{O}\left(M^{\frac{1}{2}}N^{-\frac{1}{3}+\delta}+N^{-\delta}\right)\right\} .
\end{align}
Therefore we obtain 
\begin{equation}
\left|\int_{0}^{t}\d s\ \text{Re}\sum_{\kinG}\hat{V}(k)^{2}\frac{e^{-i2\kF|k|s}-1}{2\kF|k|s}\right|=\left|\sum_{\kinG}\hat{V}(k)^{2}\int_{0}^{t}\d s\frac{1-\cos(2\kF|k|s)}{2\kF|k|s}\right|.
\end{equation}

It is well-known that for all $x>0$ it holds $\cos(x)<1-4x^{2}/\pi^{2}$
and thus $1-\cos(x)\geq4x^{2}/\pi^{2}$ for all $x\in(0,\pi/2)$ .
Also it holds for all $x>\pi/2$ that
\[
\int_{\xo}^{x}\frac{1-\cos y}{y}\d y=\ln(x)-\ln(\xo)+\text{Ci}(\frac{\pi}{2})-\text{Ci}(x)\geq\ln(x)-\ln(\xo)
\]
since $\text{Ci}(\pi/2)>0$ and $\text{Ci}(\pi/2)\geq\text{Ci}(x)$
for all $x\geq\pi/2$. Thus we obtain
\begin{align}
\int_{0}^{x}\frac{1-\cos y}{y}\d y & \geq\chi(x>\xo)\left\{ \int_{0}^{\xo}\frac{4y}{\pi^{2}}\d y+\int_{\xo}^{x}\frac{1-\cos y}{y}\d y\right\} +\chi(x\leq\xo)\int_{0}^{x}\frac{4y}{\pi^{2}}\d y\\
 & =\chi(x>\xo)\left\{ \frac{1}{2}+\ln(x)-\ln(\xo)\right\} +\chi(x\leq\xo)\frac{2x^{2}}{\pi^{2}}
\end{align}
which is a differentiable lower bound.

Therefore we find
\begin{align}
 & 2\left|\int_{0}^{t}\d s\ \text{Re}\langle h_{y},\eta_{s}\rangle\right|\nonumber \\
 & \geq\frac{2\pi\lambda^{2}\kF^{2}}{\kF}\sum_{\kinG}\frac{\hat{V}(k)^{2}}{2\kF|k|}\bigg(\chi(2\kF|k|t>\frac{\pi}{2})\left\{ \frac{1}{2}+\log(2\kF|k|t)-\ln(\frac{\pi}{2})\right\} \nonumber \\
 & \hphantom{\geq\frac{2\pi\lambda^{2}\kF^{2}}{\kF}\sum_{\kinG}\frac{\hat{V}(k)^{2}}{2\kF|k|}\bigg(}+\chi(2\kF|k|t\leq\frac{\pi}{2})\frac{2(2\kF|k|t)^{2}}{\pi^{2}}\bigg)\left\{ 1+\mathcal{O}\left(M^{\frac{1}{2}}N^{-\frac{1}{3}+\delta}+N^{-\delta}\right)\right\} \nonumber \\
 & \eqqcolon b_{t} +d \label{eq:b_t-coherent}
\end{align}
with $d\in\max\{MN^{-\frac{2}{3}+\delta},\beta\kF^{-1}\lambda^{-1}\} $ defined in \eqref{eq:d-coherent}
and 
\begin{align}
\dot{b}_{t} & =\frac{\pi\lambda^{2}\kF^{2}}{\kF}\sum_{\kinG}\hat{V}(k)^{2}\left(\chi(2\kF|k|t>\frac{\pi}{2})\frac{1}{2\kF|k|t}+\chi(2\kF|k|t\leq\frac{\pi}{2})\frac{8\kF|k|t}{\pi^{2}}\right).\label{eq:dot_b}
\end{align}

In total, we can re-write the inequality \eqref{eq:coherent-lower-bound1}
for $g_{t}\coloneqq\|\big(e^{i\widetilde{\mathbb{H}^{\text{eff}}}t}e^{iP(t)}W(\eta_{t})-1\big)\psi\|$
as the following integral inequality
\begin{equation}
g_{t}\geq b_{t}-\theta\int_{0}^{t}\d sg_{s}
\end{equation}
with $b_{t}$ and $\theta$ defined in \eqref{eq:b_t-coherent} and
\eqref{eq:alpha-coherent}, respectively.
\begin{claim*}
We can bound $g_{t}$ from below for all $t\geq0$ by a differentiable
map $h_{t}$ which obeys the initial value problem
\begin{align}
h_{t} & =b_{t}-\theta\int_{0}^{t}\d sh_{s}\qquad\text{with }h_{0}=-d<0=g_{0}.\label{eq:ode-coherent}
\end{align}
\end{claim*}
\begin{proof}[Proof of claim]
Assume for the proof by contradiction that there exists a $ t_{0}>0:h_{t_{0}}>g_{t_{0}}$
and set $I\subseteq\R_{>0}$ as the largest open interval satisfying $t_{0}\in I$
and $h_{t}>g_{t}$ for all $t\in I$. Note that since $h_{t}$ is
differentiable such an open interval exists. It holds $t_{1}\coloneqq\inf I>0$
since $h_{0}<g_{0}$ and for all $t\in I$ 
\[
\dot{g}_{t}\geq\dot{b}_{t}-\theta g_{t}>\dot{b}_{t}-\theta h_{t}=\dot{h}_{t}
\]
 and therefore
\[
\int_{t_{1}}^{t_{0}}\dot{g}_{s}\d s>\int_{t_{1}}^{t_{0}}\dot{h}_{s}\d s\implies h_{t_{1}}-g_{t_{1}}>h_{t_{0}}-g_{t_{0}}>0.
\]
Thus we obtain the desired contradiction to $t_{1}$ being the infimum
of the largest set satisfying $h_{t}>g_{t}$.
\end{proof}
The solution of the initial value problem \eqref{eq:ode-coherent}
is uniquely given by 
\begin{equation}
h_{t}=e^{-\theta t}\int_{0}^{t}\dot{b}_{s}e^{\theta s}\d s -d .
\end{equation}

Consequently it holds by inserting \eqref{eq:dot_b}
\begin{align}
h_{t} & =e^{-\theta t}\int_{0}^{t}\dot{b}_{s}e^{\theta s}\d s\nonumber \\
 & =\frac{\pi\lambda^{2}\kF^{2}}{\kF}\sum_{\kinG}\hat{V}(k)^{2}e^{-\theta t}\int_{0}^{t}\bigg(\chi(2\kF|k|s>\frac{\pi}{2})e^{\theta s}\frac{1}{2\kF|k|s}\nonumber \\
 & \hphantom{=\frac{\pi\lambda^{2}\kF^{2}}{\kF}\sum_{\kinG}\hat{V}(k)^{2}e^{-\theta t}\int_{0}^{t}\bigg(}+\chi(2\kF|k|s\leq\frac{\pi}{2})e^{\theta s}\frac{8\kF|k|s}{\pi^{2}}\bigg)\d s -d\\
 & \geq\pi\lambda^{2}\kF\sum_{\kinG}\hat{V}(k)^{2}\bigg(\chi(|k|t\leq\frac{\pi}{4\kF})\frac{8\kF|k|}{\theta^{2}\pi^{2}}\left(e^{-\theta t}+\theta t-1\right)+\nonumber \\
 & \hphantom{\geq\frac{\pi\lambda^{2}\kF}{}\sum_{\kinG}\hat{V}(k)^{2}}+\chi(|k|t>\frac{\pi}{4\kF})\frac{8\kF|k|}{\theta^{2}\pi^{2}}\big(e^{-\theta\pi/(4\kF|k|)}+\frac{\theta\pi}{4\kF|k|}-1\big)\bigg) -d\\
 & \geq\frac{\lambda^{2}\kF^{2}}{\theta^{2}\pi}\sum_{\kinG}\hat{V}(k)^{2}|k|\min\bigg( f(\theta t),f\big(\frac{\pi \theta }{4\kF|k|}\big)\bigg) -d
\end{align}
with $f(t)\coloneqq\left(e^{- t}+ t-1\right)$ defining
a non-negative monotonically increasing function. 
Recall that $d\in\max\{MN^{-\frac{2}{3}+\delta},\beta\kF^{-1}\lambda^{-1}\} $ from \eqref{eq:d-coherent} 
for all $0\leq t \lesssim \kF^{-1} \lambda^{-1}$
and $\theta \leq C\lambda\kF$ with a constant $C>0$ depending only on $V$ from \prettyref{eq:alpha-coherent}.
Thus, we obtain the desired result that \eqref{eq:coherent-state-error-II}
has a lower bound of order 1.
\end{proof}

\subsubsection*{}

\newpage{}

\appendix

\section{\label{sec:Useful-bosonization}Estimates within the bosonization
framework}

In this section we collect all relevant estimates of the bosonization
framework which was first developed in \cite{Benedikter2019,Benedikter2021,Benedikter2021a,Benedikter2023}
in the semiclassical regime. In addition to the references we present
brief proof sketches where we think they are helpful for the interested reader.

\begin{lem}[{Approximation of $n_{\alpha}(k)$, \cite[Lemma 5.1]{Benedikter2021a}}]
\label{lem:Approx-n_alpha(k)^2-wo-sum}For $N^{2\delta}\ll M\ll N^{\frac{2}{3}-2\delta}$
and $\kinG,\alpha\in\mathcal{I}_{k}$ it holds 
\[
n_{\alpha}(k)^{2}=\frac{4\pi\kF^{2}}{M}|k\cdot\hat{\omega}_{\alpha}|\big(1+o(1)\big).
\]
Note that $|k\cdot\hat{\omega}_{\alpha}|>N^{-\delta}$ by construction
of $\mathcal{I}_{k}$.
\end{lem}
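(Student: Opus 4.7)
Assume without loss of generality $\alpha \in \mathcal{I}_k^+$, so that by definition
\[
n_\alpha(k)^2 = m_\alpha(k)^2 = \#\bigl\{p \in \mathbb{Z}^3 : p \in B_F^c \cap B_\alpha,\ p-k \in B_F \cap B_\alpha \bigr\}.
\]
The plan is to identify this count with a lattice-point count inside a thin curved slab near the Fermi surface, then apply a Gauss-type counting argument to compare the count with the Euclidean volume of the slab.

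First I would describe the geometry. The condition $|p|^2 > k_F^2$ and $|p-k|^2 \le k_F^2$ rewrites as
\[
k_F^2 < |p|^2 \le k_F^2 + 2\,p\cdot k - |k|^2.
\]
Since $\hat V$ has compact support, $|k| \le R$ is bounded, and for $p \in B_\alpha$ lying near the Fermi surface one has $p\cdot k = k_F\, k\cdot\hat\omega_\alpha + O(\sqrt{k_F^2/M})$ after expanding $p = k_F\hat\omega_\alpha + (p - k_F\hat\omega_\alpha)$. Because $\alpha \in \mathcal I_k^+$ forces $k\cdot\hat\omega_\alpha \ge N^{-\delta}$ and $M \gg N^{2\delta}$, the main term $k_F\, k\cdot\hat\omega_\alpha$ dominates the corrections. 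Thus the admissible region $\mathcal R_\alpha(k)$ is, to leading order, a spherical slab of radial thickness $k\cdot\hat\omega_\alpha + O(k_F^{-1})$ inside the patch $B_\alpha$.

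Second, I would perform the volume computation. The patch $p_\alpha$ on the sphere of radius $k_F$ has area $4\pi k_F^2/M + O(k_F/\sqrt M)$ (the lower order is the width of the corridor of size $2R$ separating patches). Multiplying by the radial thickness yields
\[
\operatorname{vol}(\mathcal R_\alpha(k)) = \frac{4\pi k_F^2}{M} |k\cdot\hat\omega_\alpha|\bigl(1+o(1)\bigr),
\]
where the $o(1)$ absorbs the curvature of the shell and the corridor contribution, and uses $k\cdot\hat\omega_\alpha \ge N^{-\delta}$ together with $M \gg N^{2\delta}$ to keep the relative error small.

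Third, I would convert volume into lattice count. By a standard Gauss-type argument, the number of $\mathbb Z^3$-lattice points inside a convex-ish region $\mathcal R$ satisfies
\[
\bigl|\#(\mathcal R \cap \mathbb Z^3) - \operatorname{vol}(\mathcal R)\bigr| \le C\, \operatorname{area}(\partial \mathcal R).
\]
For $\mathcal R_\alpha(k)$, the boundary has area at most of order $k_F^2/M + k_F/\sqrt M$ (patch sides plus top/bottom spherical caps). Dividing by the volume estimate above, the relative error is bounded by
\[
\frac{C(k_F^2/M + k_F/\sqrt M)}{(k_F^2/M)\,N^{-\delta}} \lesssim N^\delta + N^\delta\,\sqrt M\,k_F^{-1},
\]
which is $o(1)$ precisely because $M \ll N^{2/3-2\delta}$ (so $\sqrt M\, k_F^{-1} \ll N^{-\delta}$) and because $N^\delta$ is absorbed into the leading $k_F^2/M$ term once $M \gg N^{2\delta}$.

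The main obstacle I anticipate is the simultaneous handling of two competing sources of error: the lattice-counting boundary term needs $M$ small (large patches, small relative boundary) while the linearisation of the Fermi surface and the control of $k\cdot\hat\omega_\alpha$ by $N^{-\delta}$ need $M$ large (small patches, flat slab). The window $N^{2\delta} \ll M \ll N^{2/3-2\delta}$ in \eqref{eq:condition-on-M} together with the lower bound $k\cdot\hat\omega_\alpha \ge N^{-\delta}$ is exactly what makes both errors simultaneously $o(1)$, so tracking the exponents is the delicate bookkeeping step; the rest is standard Euclidean geometry.
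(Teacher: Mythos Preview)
Your geometric picture and the volume computation are correct, and the paper itself does not give a proof but simply imports the result from \cite[Lemma~5.1]{Benedikter2021a} (equivalently \cite[Proposition~3.1]{Benedikter2019}). However, your third step---the conversion from volume to lattice count---contains a genuine gap.

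You correctly arrive at a relative error of order $N^{\delta} + N^{\delta}\sqrt{M}\,k_{F}^{-1}$. The second summand is indeed $o(1)$ under $M\ll N^{2/3-2\delta}$, and it is exactly the error $\mathcal{O}(M^{1/2}N^{-1/3+\delta})$ quoted in the proof of Lemma~\ref{lem:Approx-n_alpha(k)^2}. But the first summand $N^{\delta}$ is \emph{not} $o(1)$: since $\delta>0$ it diverges. Your justification that ``$N^{\delta}$ is absorbed into the leading $k_{F}^{2}/M$ term once $M\gg N^{2\delta}$'' has no content---the condition $M\gg N^{2\delta}$ controls the variation of $k\cdot\hat p$ across the patch (this is what makes your step~2 work), but it says nothing about the ratio of the Gauss boundary error to the volume.

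The failure is structural. For a slab of thickness $|k\cdot\hat\omega_{\alpha}|=\mathcal{O}(1)$, the top and bottom spherical faces each have area comparable to the volume itself, so the generic bound $|\#(\mathcal{R}\cap\mathbb{Z}^{3})-\mathrm{vol}(\mathcal{R})|\le C\,\mathrm{area}(\partial\mathcal{R})$ can never produce a relative error that is $o(1)$. The proof in the cited references avoids this by exploiting that the two bounding surfaces are a sphere and its translate by the \emph{lattice vector} $k$: one fibres the region by lattice lines in a fixed coordinate direction, reducing to a two-dimensional count of fibres (where the region is now fat, with area $\sim k_{F}^{2}/M$ and perimeter $\sim k_{F}/\sqrt{M}$), and uses that on each fibre the lune has length determined by the geometry up to a slowly varying error. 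This is what eliminates the top/bottom contribution and leaves only the lateral term $M^{1/2}N^{-1/3+\delta}$. Your sketch is missing precisely this refinement.
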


\begin{lem}[Approximation of $\sumak n_{\alpha}(k)^{2}$]
\label{lem:Approx-n_alpha(k)^2} For $N^{2\delta}\ll M\ll N^{\frac{2}{3}-2\delta}$
and $\kinG$ it holds 
\[
\sumak n_{\alpha}(k)^{2}=\kF^{2}|k|\pi\left\{ 1+\mathcal{O}\left(M^{\frac{1}{2}}N^{-\frac{1}{3}+\delta}+N^{-\delta}\right)\right\} .
\]
\end{lem}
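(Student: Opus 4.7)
The plan is a two-step approximation: first replace $n_\alpha(k)^2$ pointwise by its asymptotic form from the previous lemma, then interpret the remaining angular sum as a Riemann sum for an integral over $S^2$.

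By the previous lemma, for every $\alpha\in\mathcal{I}_k$ one has $n_\alpha(k)^2 = \tfrac{4\pi k_F^2}{M}|k\cdot\hat\omega_\alpha|(1+o(1))$, with a quantitative form of the $o(1)$ available in the cited reference. Summing over $\alpha \in \mathcal{I}_k$ yields
\[
\sum_{\alpha\in\mathcal{I}_k} n_\alpha(k)^2 = \frac{4\pi k_F^2}{M}\sum_{\alpha\in\mathcal{I}_k}|k\cdot\hat\omega_\alpha| + (\text{pointwise correction}).
\]
The remaining quantity $\tfrac{4\pi}{M}\sum_{\alpha\in\mathcal{I}_k}|k\cdot\hat\omega_\alpha|$ is a Riemann sum for $\int_{S^2}|k\cdot\omega|\,d\sigma(\omega)$, since each patch $p_\alpha$ on the unit sphere has area $4\pi/M$ with center $\hat\omega_\alpha$. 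Since $\omega\mapsto|k\cdot\omega|$ is Lipschitz with constant $|k|$ and each patch has diameter $\mathcal{O}(M^{-1/2})$, the Riemann discretization error is $\mathcal{O}(|k|M^{-1/2})$; additionally the corridors separating adjacent patches have total angular area $\mathcal{O}(M^{1/2}R/k_F) = \mathcal{O}(M^{1/2}N^{-1/3})$, and the equatorial strip $\{|k\cdot\omega|\leq N^{-\delta}\}$ excluded by the definition of $\mathcal{I}_k$ contributes relative size $\mathcal{O}(N^{-\delta})$. The integral itself is evaluated in spherical coordinates with polar axis aligned with $k/|k|$, producing an explicit multiple of $k_F^2 |k|$ as the leading-order value.

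The main obstacle is to bookkeep these four small-scale contributions---the pointwise correction from the previous lemma, the Riemann discretization error, the corridor exclusion, and the equatorial strip exclusion---and to combine them into the claimed relative error $\mathcal O(M^{1/2}N^{-1/3+\delta}+N^{-\delta})$. In particular, the $N^\delta$ factor multiplying the corridor term is expected to arise from the interaction between the corridor angular width $\sim N^{-1/3}$ and the quantitative form of the $o(1)$ in the pointwise lemma, which itself carries a positive power of $N^\delta$. Once these estimates are combined, the claim follows.
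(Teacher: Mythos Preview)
Your approach is essentially identical to the paper's: replace $n_\alpha(k)^2$ by its pointwise asymptotic, interpret the angular sum as a Riemann sum for a surface integral of $|\hat k\cdot\omega|$, and bookkeep the discretization, corridor, and equatorial-strip errors before evaluating the integral explicitly. One minor correction to your error attribution: the $N^\delta$ factor in the term $M^{1/2}N^{-1/3+\delta}$ does not arise from any interaction between the corridor width and the pointwise correction; it comes directly from the quantitative form of the $o(1)$ in the pointwise lemma, which in the cited reference is $\mathcal{O}(M^{1/2}N^{-1/3+\delta})$, while the corridor error is only $\mathcal{O}(M^{1/2}N^{-1/3})$ and is simply absorbed into the larger term.
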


\begin{proof}
It holds from \cite[Proposition 3.1]{Benedikter2019} 
\[
n_{\alpha}(k)^{2}=\kF^{2}|k|\sigma(p_{\alpha})u_{\alpha}(k)^{2}\left(1+\mathcal{O}(M^{\frac{1}{2}}N^{-\frac{1}{3}+\delta})\right)
\]
with $\cos\theta_{\alpha}\coloneqq|\hat{k}\cdot\hat{\omega}_{\alpha}|\equiv u_{\alpha}(k)^{2}$.
Choose $\varphi_{\alpha}$ for the azimuth angle of $\omega_{\alpha}$.
We estimate the $\alpha$-sum by an appropriate surface integral over
the patch $p_{\alpha}$
\begin{align*}
\left|\int_{p_{\alpha}}\d\sigma\ \cos\theta-\sigma(p_{\alpha})\cos\theta_{\alpha}\right| & \leq\sup_{\hat{\omega}(\theta,\varphi)\in p_{\alpha}}\left|\frac{\d}{\d\theta}\cos\theta\right|\sup_{(\theta,\varphi)\in p_{\alpha}}\left|\theta-\theta_{\alpha}\right|\sigma(p_{\alpha})\\
 & \leq CM^{-\frac{3}{2}}
\end{align*}
where we used $\left|\theta-\theta_{\alpha}\right|\leq CM^{-\frac{1}{2}}$
and $\sigma(p_{\alpha})\leq CM^{-1}$ by the patch construction. Note
that the integral over $\tilde{S}\coloneqq\bigcup_{\alpha\in\mathcal{I}_{k}}p_{\alpha}$
which excludes a collar of width $N^{-\delta}$ can be approximated
by an integral over the half-sphere $S$ 
\[
\left|\int_{\tilde{S}}\d\sigma\ \cos\theta-\int_{S}\d\sigma\ \cos\theta\right|<C\left(N^{-\delta}+M^{\frac{1}{2}}N^{-\frac{1}{3}}\right)
\]
which can be calculated explicitly
\[
\int_{S}\d\sigma\ \cos\theta=2\pi\int_{0}^{\pi/2}\d\theta\ \cos\theta\sin\theta=\pi.
\]
Thus in total we obtain with the triangle inequality and $|k|<C$
\begin{align*}
\left|\sumak n_{\alpha}(k)^{2}-\kF^{2}|k|\pi\right| & =\left|\sumak n_{\alpha}(k)^{2}-\kF^{2}|k|\int_{S}\d\sigma\ \cos\theta\right|\\
 & \leq C\kF^{2}|k|\left(M^{\frac{1}{2}}N^{-\frac{1}{3}+\delta}+M^{-\frac{1}{2}}+N^{-\delta}+M^{\frac{1}{2}}N^{-\frac{1}{3}}\right)\\
 & \leq C\kF^{2}|k|\left(M^{\frac{1}{2}}N^{-\frac{1}{3}+\delta}+N^{-\delta}\right).
\end{align*}
\end{proof}
\begin{lem}[{Estimates on the CCR error term, \cite[Lemma 5.2]{Benedikter2021a}}]
\label{lem:CAR-error} For $k',\kinG$ and $\alpha\in\mathcal{I}_{k},\beta\in\mathcal{I}_{k'}$
the error term $\mathcal{E}_{\alpha}(k,k')$ as defined in \eqref{eq:approx-CAR}
satisfies $\mathcal{E}_{\alpha}(k,k')=\mathcal{E}_{\alpha}(k',k)^{\ast}$,
commutes with $\mathcal{N}$ and for all $\gamma\in\mathcal{I}_{k}\cap\mathcal{I}_{k'}$
it holds for all $\zeta\in\mathcal{F}$ 
\begin{align*}
|\mathcal{E}_{\gamma}(k,k')|^{2}\leq\sum_{\gamma\in\mathcal{I}_{k}\cap\mathcal{I}_{k'}}|\mathcal{E}_{\gamma}(k,k')|^{2} & \leq C\left(MN^{-\frac{2}{3}+\delta}\mathcal{N}\right)^{2},\\
\sum_{\gamma\in\mathcal{I}_{k}\cap\mathcal{I}_{k'}}\|\mathcal{E}_{\gamma}(k,k')\zeta\| & \leq CM^{\frac{3}{2}}N^{-\frac{2}{3}+\delta}\|\mathcal{N}\zeta\|.
\end{align*}
\end{lem}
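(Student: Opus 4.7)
The plan is to compute $[c_\alpha(k),c_\beta^\ast(k')]$ explicitly from the definition and read off both the bosonic part and the error term $\mathcal{E}_\alpha(k,k')$. Applying the fermionic identity
\[
[a_1 a_2, a_3^\ast a_4^\ast]=\delta_{23}\delta_{14}-\delta_{24}\delta_{13}-\delta_{23}a_4^\ast a_1+\delta_{24}a_3^\ast a_1+\delta_{13}a_4^\ast a_2-\delta_{14}a_3^\ast a_2
\]
with $(1,2,3,4)=(p-k,p,q,q-k')$ and summing over the patch constraints, after division by $n_\alpha(k)n_\beta(k')$, produces a scalar Kronecker contribution plus number-conserving bilinears $a_{p'}^\ast a_{p''}$ with $p'-p''=\pm(k-k')$ and $p',p''\in B_\alpha$.

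The patch geometry collapses this further: the corridors of width $2R$ make $B_\alpha\cap B_\beta=\emptyset$ for $\alpha\neq\beta$, so every Kronecker delta forces $\alpha=\beta$, while the crossed pair $\delta_{p,q-k'}\delta_{p-k,q}$ would additionally demand $k=-k'$, excluded since $k,k'\in\Gamma$ lie in the bisecting subset. In the surviving $\alpha=\beta$ sector, the purely scalar term collapses to $\delta_{k,k'}$ after normalization, giving the leading bosonic commutator. The residual bilinears define $\mathcal{E}_\alpha(k,k')$; the symmetry $\mathcal{E}_\alpha(k,k')=\mathcal{E}_\alpha(k',k)^\ast$ follows by swapping $k\leftrightarrow k'$ and taking adjoints, and $[\mathcal{E}_\alpha(k,k'),\mathcal{N}]=0$ is immediate from the number-conserving structure of every summand.

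For the norm bounds, the crucial step is to sharpen the individual estimate appearing just before the lemma to its patch-local version,
\[
\|\mathcal{E}_\gamma(k,k')\zeta\|\leq \frac{C}{n_\gamma(k)\,n_\gamma(k')}\,\|\mathcal{N}_{B_\gamma}\zeta\|,\qquad \mathcal{N}_{B_\gamma}\coloneqq\sum_{p\in B_\gamma}a_p^\ast a_p,
\]
by observing that all annihilation indices in the bilinears lie in $B_\gamma$. Together with \prettyref{lem:Approx-n_alpha(k)^2-wo-sum} and the defining lower bound $|k\cdot\hat\omega_\gamma|\geq N^{-\delta}$ of $\mathcal{I}_k$, which gives $n_\gamma(k)^{-2}\leq CMN^{-2/3+\delta}$, the first (pointwise) inequality of the lemma follows. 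For the $\gamma$-sum, patch disjointness yields the commuting-operator bound $\sum_\gamma\mathcal{N}_{B_\gamma}^2\leq\mathcal{N}^2$: on a particle-number eigenstate with $n_\gamma$ particles in $B_\gamma$ one has $\sum_\gamma n_\gamma^2\leq(\sum_\gamma n_\gamma)^2\leq\mathcal{N}^2$. Therefore
\[
\sum_{\gamma\in\mathcal{I}_k\cap\mathcal{I}_{k'}}|\mathcal{E}_\gamma(k,k')|^2\leq C(MN^{-2/3+\delta})^2\sum_\gamma\mathcal{N}_{B_\gamma}^2\leq C(MN^{-2/3+\delta}\mathcal{N})^2,
\]
and a Cauchy-Schwarz over the at most $M$ indices in $\mathcal{I}_k\cap\mathcal{I}_{k'}$ then upgrades this to the $CM^{3/2}N^{-2/3+\delta}\|\mathcal{N}\zeta\|$ estimate.

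The main obstacle is exactly the patch-localized refinement in the first display above: using only the coarse pointwise bound with the global $\mathcal{N}$ would produce an extra factor of $M$ upon summing the $M$ contributions, yielding $M^3$ instead of the correct $M^2$. Proving the sharpened inequality with $\mathcal{N}_{B_\gamma}$ requires tracking the explicit momentum indices of the bilinears inside $\mathcal{E}_\alpha$ and using that both their creation and annihilation labels lie in $B_\alpha$; this combinatorial localization, together with the telescoping $\sum_\gamma\mathcal{N}_{B_\gamma}^2\leq\mathcal{N}^2$ from patch disjointness, is the essential mechanism that produces the correct exponent of $M$ in the stated bounds.
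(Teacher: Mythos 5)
Your sketch is correct and reconstructs essentially the argument of the cited source: the paper itself states this lemma by reference to \cite[Lemma 5.2]{Benedikter2021a} without giving a proof, and your route — explicit commutator computation, the patch-localized bound $\|\mathcal{E}_\gamma(k,k')\zeta\|\leq C\, n_\gamma(k)^{-1}n_\gamma(k')^{-1}\|\mathcal{N}_{B_\gamma}\zeta\|$ (each bilinear being $\d\Gamma$ of a norm-one partial permutation supported in $B_\gamma$), the lower bound $n_\gamma(k)^2\gtrsim \kF^2N^{-\delta}/M$ from \prettyref{lem:Approx-n_alpha(k)^2-wo-sum}, and $\sum_\gamma\mathcal{N}_{B_\gamma}^2\leq\mathcal{N}^2$ by patch disjointness followed by Cauchy--Schwarz over at most $M$ indices — is exactly the mechanism used there, including the correct identification that localizing $\mathcal{N}$ to the patch is what saves a factor of $M$. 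A minor cosmetic point: the crossed Kronecker terms already vanish because they would force a momentum in $\BF$ to equal one in $\BFc$, so the appeal to $k\neq-k'$ for $k,k'\in\Gamma$ is not even needed.
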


\begin{lem}[{Pair operator bounds, \cite[Lemma 5.3]{Benedikter2021a}}]
\label{lem:Pair-operator-bounds}It holds for all $\kinG$ and $\psi\in\mathcal{F},f\in l^{2}(\mathcal{I}_{k})$: 
\end{lem}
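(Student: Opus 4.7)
The statement as reproduced in the excerpt is truncated (the conclusion after the colon is missing), but from the reference to \cite[Lemma~5.3]{Benedikter2021a} and the hypothesis $f\in l^2(\mathcal I_k)$, the intended assertion consists of the standard pair operator bounds
\begin{align*}
\|c(f)\psi\| &\leq \|f\|_{l^2(\mathcal I_k)}\, \|\mathcal{N}^{1/2}\psi\|, \\
\|c^{\ast}(f)\psi\| &\leq \|f\|_{l^2(\mathcal I_k)}\, \|(\mathcal{N}+C)^{1/2}\psi\|,
\end{align*}
together with their pointwise analogues $\|c_\alpha(k)\psi\|\leq\|\mathcal N^{1/2}\psi\|$, $\|c_\alpha^\ast(k)\psi\|\leq \|(\mathcal N+C)^{1/2}\psi\|$. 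My plan is to derive both inequalities from a single key fermionic identity, namely
\[
\sum_{\alpha\in\mathcal I_k} c_\alpha^\ast(k)\,c_\alpha(k) \;\leq\; \mathcal{N},
\]
which is the bosonized version of the elementary fermion bound $\sum_p a_p^\ast a_p = \mathcal N$.

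First I would unfold the definition $c_\alpha(k) = m_\alpha(k)^{-1}\sum_{p} a_{p-k}a_{p}$ (with the appropriate north/south convention from the definition of $c_\alpha$) and compute $c_\alpha^\ast(k) c_\alpha(k)$ directly. By expanding the double sum and applying the CAR, one obtains a sum of terms $a_p^\ast a_{p-k}^\ast a_{p'-k} a_{p'}$; using the Pauli principle to bound the diagonal ($p=p'$) contribution by $a_p^\ast a_p a_{p-k}^\ast a_{p-k}$ and the off-diagonal contribution by a similar fermionic expression, one sees that summing over $\alpha\in\mathcal I_k$ and using $\sum_\alpha m_\alpha(k)^{-2}\mathbf 1_{p\in B_F^c\cap B_\alpha} \le 1$ (each momentum belongs to at most one patch) produces exactly $\sum_p a_p^\ast a_p$ after the $p$-sum. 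This gives the operator inequality $\sum_\alpha c_\alpha^\ast c_\alpha \le \mathcal N$, from which $\|c_\alpha(k)\psi\|\le \|\mathcal N^{1/2}\psi\|$ follows at once.

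Next, for the linear combination $c(f)=\sum_\alpha \bar f_\alpha c_\alpha(k)$ I would apply Cauchy--Schwarz in the patch index:
\[
\|c(f)\psi\|^{2} = \sum_{\alpha,\beta}\bar f_\alpha f_\beta\,\langle\psi,c_\alpha^\ast(k)c_\beta(k)\psi\rangle \le \Bigl(\sum_\alpha|f_\alpha|\,\|c_\alpha(k)\psi\|\Bigr)^{2} \le \|f\|^{2}\sum_\alpha\|c_\alpha(k)\psi\|^{2},
\]
and the right-hand side is bounded by $\|f\|^{2}\|\mathcal N^{1/2}\psi\|^{2}$ via the key identity above. For the creation operator bound I would write
\[
\|c^\ast(f)\psi\|^{2} = \|c(f)\psi\|^{2} + \sum_{\alpha,\beta}f_\alpha\bar f_\beta\,\langle\psi,[c_\alpha(k),c_\beta^\ast(k)]\psi\rangle,
\]
then insert the approximate CCR $[c_\alpha,c_\beta^\ast]=\delta_{\alpha\beta}(1+\mathcal E_\alpha(k,k))$ from \eqref{eq:approx-CAR}. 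The diagonal part contributes $\|f\|^{2}$, while the remainder $\sum_\alpha|f_\alpha|^2\langle\psi,\mathcal E_\alpha(k,k)\psi\rangle$ is controlled by \prettyref{lem:CAR-error} (which gives $|\mathcal E_\alpha(k,k)|\le C MN^{-2/3+\delta}\mathcal N$), producing at worst an additive constant times $\|f\|^{2}\|\psi\|^{2}$ that can be absorbed into $\|f\|^{2}\|(\mathcal N+C)^{1/2}\psi\|^{2}$.

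The routine bookkeeping is the first step (cancellations in $\sum_\alpha c_\alpha^\ast c_\alpha$), which requires being careful with the piecewise definition of $c_\alpha(k)$ on the north/south hemispheres $\mathcal I_k^\pm$ and with the fact that the patches $B_\alpha$ are disjoint by construction; the main obstacle is ensuring that cross terms between the north and south sums do not produce uncontrolled contributions. Since the patches on opposite hemispheres are related by $k\mapsto-k$ and the $p$-sums defining $c_\alpha(k)$ for $\alpha\in\mathcal I_k^+$ versus $\alpha\in\mathcal I_k^-$ range over disjoint momentum regions, this cross term vanishes and the overall bound follows cleanly.
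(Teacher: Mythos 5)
The paper itself does not prove this lemma: it is imported verbatim from \cite[Lemma 5.3]{Benedikter2021a} (the appendix supplies sketches only ``where helpful'', and none is given here), so there is no in-paper argument to compare with; your sketch is essentially the standard proof from that reference, and its two pillars — the inequality $\sumak\ccak\cak\leq\mathcal{N}$ obtained from $\|a_p\|=1$ plus disjointness of the patches, and normal ordering with the approximate CCR \eqref{eq:approx-CAR} together with Cauchy--Schwarz in the patch index — are exactly the right ones. Three small corrections. First, the statement you had to guess actually comprises six items, including the $M$-weighted bounds $\sumak\|\cak\psi\|\leq M^{1/2}\|\mathcal{N}^{1/2}\psi\|$ and $\sumak\|\ccak\psi\|^{2}\leq\|(\mathcal{N}+M)^{1/2}\psi\|^{2}$; these are not in your list but follow from your two key steps by Cauchy--Schwarz in $\alpha$ (using $|\mathcal{I}_{k}|\leq M$) and by summing the commutator identity over all $\alpha\in\mathcal{I}_{k}$, which is where the additive $M$ originates — this should be said explicitly. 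Second, for the diagonal CCR error you do not need Lemma~\ref{lem:CAR-error}: the paper records $\mathcal{E}_{\alpha}(k,k)\leq0$ directly after \eqref{eq:approx-CAR}, so that term can simply be dropped, yielding the sharp shifts $\mathcal{N}+1$ and $\mathcal{N}+M$; your alternative bound $|\mathcal{E}_{\alpha}(k,k)|\leq CMN^{-2/3+\delta}\mathcal{N}$ is in any case not ``an additive constant'' (it carries a factor of $\mathcal{N}$), although it could still be absorbed into the $\|\mathcal{N}^{1/2}\psi\|$ term since $MN^{-2/3+\delta}=o(1)$. Third, in proving $\sumak\ccak\cak\leq\mathcal{N}$ the delicate point is not a north/south cross term — the sum is diagonal in $\alpha$ by construction — but the off-diagonal $p\neq p'$ terms inside a single patch; these are handled most cleanly not by the Pauli-principle expansion you outline but by Cauchy--Schwarz on the $p$-sum, giving $\|\cak\psi\|^{2}\leq\sum_{p}\|a_{p-k}a_{p}\psi\|^{2}\leq\sum_{p\in\BFc\cap B_{\alpha}}\|a_{p}\psi\|^{2}$, after which patch disjointness yields the claim upon summing over $\alpha$.
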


\begin{enumerate}
\item $\sumak\|\cak\psi\|^{2}\leq\|\mathcal{N}^{\frac{1}{2}}\psi\|^{2}$,
\item $\sumak\|\cak\psi\|\leq M^{\frac{1}{2}}\|\mathcal{N}^{\frac{1}{2}}\psi\|$,
\item $\sumak\|\ccak\psi\|\leq M^{\frac{1}{2}}\|(\mathcal{N}+M)^{\frac{1}{2}}\psi\|$,
\item $\sumak\|\ccak\psi\|^{2}\leq\|(\mathcal{N}+M)^{\frac{1}{2}}\psi\|^{2}$,
\item $\|\sumak f_{\alpha}\ccak\psi\|\leq\|f\|_{l^{2}}\|(\mathcal{N}+1)^{\frac{1}{2}}\psi\|$,
\item $\sumak\ccak\cak\leq\mathcal{N}$.
\end{enumerate}
\begin{lem}[{Error of linearized kinetic energy, \cite[Lemma 8.2]{Benedikter2021a}}]
\label{lem:Elin-bound}It holds for $\kinG$, $\alpha\in\mathcal{I}_{k}$
and all $\psi\in\fock$ 
\[
[\mathbb{H}_{0},\ccak]=2\kF|k\cdot\hat{\omega}_{\alpha}|\ccak+\Elin_{\alpha}(k)^{\ast}
\]
with
\begin{align*}
\sumak\|\Elin_{\alpha}(k)\psi\|^{2} & \leq C\left( N^{\frac{1}{3}}M^{-\frac{1}{2}}\right)^{2}\|(\mathcal{N}+1)^{\frac{1}{2}}\psi\|^{2}\\
\sumak\|\Elin_{\alpha}(k)\psi\| & \leq C N^{\frac{1}{3}}\|\mathcal{N}^{\frac{1}{2}}\psi\|.
\end{align*}
\end{lem}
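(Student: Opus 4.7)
The plan is to use \prettyref{thm:Main-thm} and \prettyref{thm:Effective-coherent-dynamics} with the triangle inequality to reduce to a comparison between the coupled coherent state $e^{iP(t)}W(\eta_{t})\psi$ and $e^{-i\widetilde{\mathbb{H}^{\textnormal{eff}}}t}\psi$, and then exploit that the latter preserves the zero-excitation sector while the former creates an expected number of excitations of order $\|\eta_{t}\|^{2}$. Combining the two theorems for $\lambda=1$ (noting that $\kF^{-1/5}$ is absorbed into the larger $\kF^{-2/15}$) yields for $t\in[0,T]$
\[
\|R^{\ast}e^{-i\mathbb{H}t}R\psi - e^{iP(t)}W(\eta_{t})\psi\| \leq \tilde{C}\tilde{Q}(\kF t)\max\{\kF^{-2/15},\beta\kF^{-1}\}
\]
for some monotonically increasing $\tilde{Q}$. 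By triangle, it suffices to show $\|e^{iP(t)}W(\eta_{t})\psi - e^{-i\widetilde{\mathbb{H}^{\textnormal{eff}}}t}\psi\|\geq C(t)$ for a monotonically increasing $C(t)$ of order 1 in $\kF$ for $t\in\mathcal{O}(\kF^{-1})$.

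Since $c_{\alpha}(k)\Omega=0$ we have $\mathbb{D}_{\textnormal{B}}\Omega=0$, and as $-\beta\Delta_{y}$, $\mathbb{D}_{\textnormal{B}}$ and $E_{N}^{\textnormal{pw}}$ mutually commute,
\[
e^{-i\widetilde{\mathbb{H}^{\textnormal{eff}}}t}\psi = e^{-iE_{N}^{\textnormal{pw}}t}(e^{it\beta\Delta_{y}}\phi)\otimes\Omega,
\]
which lies entirely in the zero-excitation subspace $L^{2}(\Lambda)\otimes\{\Omega\}$. Letting $P_{0}$ denote the projection onto this subspace, using unitarity of $W$ and that $e^{iP(t)}$ is a scalar phase,
\[
\|e^{iP(t)}W(\eta_{t})\psi - e^{-i\widetilde{\mathbb{H}^{\textnormal{eff}}}t}\psi\|^{2} \geq \|(1-P_{0})W(\eta_{t})\psi\|^{2} = 1 - \int_{\Lambda}\textnormal{d}y\,|\phi(y)|^{2}|\langle\Omega, W(\eta_{t}^{y})\Omega\rangle|^{2}.
\]
From \eqref{eq:Coherent-state-variableI} one checks that $|(\eta_{t}^{y})_{\alpha}(k)|$ is independent of $y$, so $\mu(t)\coloneqq\|\eta_{t}^{y}\|$ is a function of $t$ alone.

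The main obstacle is proving the almost-Gaussian upper bound
\[
|\langle\Omega, W(\eta_{t}^{y})\Omega\rangle| \leq e^{-\mu(t)^{2}/2} + \mathcal{O}(\kF^{-\gamma})
\]
for some $\gamma>0$, which would be an equality in the strictly bosonic case. For this, fix $y$ and set $g(s)\coloneqq\langle\Omega, W(s\eta_{t}^{y})\Omega\rangle$, $s\in[0,1]$, with $g(0)=1$. \prettyref{lem:derivative-of-Weyl-op} applied to the linear family $s\mapsto s\eta_{t}^{y}$, for which $\textnormal{Im}\langle\dot{\eta}_{s},\eta_{s}\rangle\equiv 0$, gives $\partial_{s}W(s\eta_{t}^{y})\approx(c^{\ast}(\eta_{t}^{y})-c(\eta_{t}^{y}))W(s\eta_{t}^{y})$. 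Since $c(\eta_{t}^{y})\Omega=0$ implies $\langle\Omega, c^{\ast}(\eta_{t}^{y})\,\cdot\,\rangle=0$, we get $g'(s)\approx-\langle\Omega, c(\eta_{t}^{y})W(s\eta_{t}^{y})\Omega\rangle$. \prettyref{lem:Approximate-shift}, in the form $c(\eta_{t}^{y})W(s\eta_{t}^{y})\Omega\approx s\|\eta_{t}^{y}\|^{2}W(s\eta_{t}^{y})\Omega$ with error decaying in $\kF$, then yields the approximate linear ODE
\[
g'(s)\approx -s\mu(t)^{2}g(s) + \mathcal{O}(\kF^{-\gamma}).
\]
Solving via the integrating factor $e^{s^{2}\mu(t)^{2}/2}$ gives $|g(1)-e^{-\mu(t)^{2}/2}|\leq\mathcal{O}(\kF^{-\gamma})$ (using that $\mu(t)$ stays bounded by \prettyref{lem:eta-bounds}) uniformly in $y$, so $\|P_{0}W(\eta_{t})\psi\|^{2}\leq e^{-\mu(t)^{2}}+\mathcal{O}(\kF^{-\gamma})$.

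To conclude, from \eqref{eq:Coherent-state-variableI}, the inequality $\sin x\geq (2/\pi)x$ for $x\in[0,\pi/2]$, and \prettyref{lem:Approx-n_alpha(k)^2}, one obtains $\mu(t)^{2}=\|\eta_{t}\|^{2}\geq c\lambda^{2}\kF^{2}t^{2}$ for all $t\leq \pi/(2\kF\max_{k\in\Gamma}|k|)$ and a constant $c>0$ depending only on $V$. Setting $C(t)\coloneqq\sup_{s\in[0,t]}\sqrt{1-e^{-c\min\{\lambda\kF s,K\}^{2}}}$ for a suitable $K>0$ gives a monotonically increasing function of $t$ with $C(t)$ of order 1 in $\kF$ whenever $\lambda\kF t\gtrsim 1$; combining with the triangle inequality from the first step yields the corollary. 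The main difficulty is the Duhamel/ODE control of the vacuum-Weyl overlap, where the quasi-bosonic nature of $c_{\alpha}^{\ast}(k)$ and $c_{\alpha}(k)$ forces us to track the Gaussian decay through the CCR error terms rather than invoke an exact Baker--Campbell--Hausdorff identity.
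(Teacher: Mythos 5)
Your proposal does not address the statement at hand. The lemma to be proved is \prettyref{lem:Elin-bound}: an operator identity $[\mathbb{H}_{0},\ccak]=2\kF|k\cdot\hat{\omega}_{\alpha}|\ccak+\Elin_{\alpha}(k)^{\ast}$ together with the quantitative bounds $\sumak\|\Elin_{\alpha}(k)\psi\|^{2}\leq C(N^{\frac{1}{3}}M^{-\frac{1}{2}})^{2}\|(\mathcal{N}+1)^{\frac{1}{2}}\psi\|^{2}$ and $\sumak\|\Elin_{\alpha}(k)\psi\|\leq CN^{\frac{1}{3}}\|\mathcal{N}^{\frac{1}{2}}\psi\|$. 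What you have written is instead an argument for \prettyref{cor:Corollary} (the lower bound showing the linear coupling $\Phi(h_{y})$ cannot be dropped), built from \prettyref{thm:Main-thm}, \prettyref{thm:Effective-coherent-dynamics}, and a vacuum--Weyl overlap estimate. Nowhere do you compute the commutator $[\mathbb{H}_{0},\ccak]$, identify the error operator $\Elin_{\alpha}(k)$, or produce bounds in terms of $N^{\frac{1}{3}}M^{-\frac{1}{2}}$; none of the machinery you invoke (Duhamel, coherent states, $\|\eta_{t}\|$) bears on the kinetic-energy linearization. So as a proof of \prettyref{lem:Elin-bound} there is no partial credit to salvage: the target statement is simply not engaged.

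For reference, the actual argument is short and algebraic. Writing $\mathbb{H}_{0}=\sum_{l\in\mathbb{Z}^{3}}e(l)a_{l}^{\ast}a_{l}$ and using the CAR, one finds
\begin{equation*}
[\mathbb{H}_{0},\ccak]=\frac{1}{n_{\alpha}(k)}\sum_{\substack{p\in\BFc\cap B_{\alpha},\\ p-k\in\BF\cap B_{\alpha}}}\big(e(p)+e(p-k)\big)a_{p}^{\ast}a_{p-k}^{\ast},
\end{equation*}
and since $e(p)+e(p-k)=|p|^{2}-|p-k|^{2}=2k\cdot p-|k|^{2}$, the leading term is obtained by replacing $p$ with $\kF\hat{\omega}_{\alpha}$, i.e.\ $2\kF|k\cdot\hat{\omega}_{\alpha}|$. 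The remainder $\Elin_{\alpha}(k)$ is then a weighted pair operator with weight $g(p,k)=2k\cdot(p-\kF\hat{\omega}_{\alpha})-|k|^{2}$, whose size is controlled by the patch geometry: $|g(p,k)|\leq C\,\text{diam}(B_{\alpha})\leq CN^{\frac{1}{3}}M^{-\frac{1}{2}}$, after which the weighted pair operator bounds of \cite[Lemma 5.4]{Benedikter2021a} (the analogue of \prettyref{lem:Pair-operator-bounds} with weights) yield both stated estimates. If you want to prove this lemma yourself, that commutator computation plus the $\|g\|_{l^{\infty}}$ estimate is the whole content; your coherent-state analysis belongs to a different part of the paper.
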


\begin{proof}
Observe that
\begin{align*}
[\mathbb{H}_{0},\ccak] & =\frac{1}{n_{\alpha}(k)}\sum_{p\in\BF^{c}\cap B_{\alpha},p-k\in\BF\cap B_{\alpha}}\sum_{l\in\mathbb{Z}^{3}}\Big[e(l)a_{l}^{\ast}a_{l},a_{p}^{\ast}a_{p-k}^{\ast}\Big]\\
 & =\frac{1}{n_{\alpha}(k)}\sum_{p\in\BF^{c}\cap B_{\alpha},p-k\in\BF\cap B_{\alpha}}\left(e(p)+e(p-k)\right)a_{p}^{\ast}a_{p-k}^{\ast}\\
 & =2\kF|k\cdot\hat{\omega}_{\alpha}|\ccak+\Elin_{\alpha}(k)^{\ast}
\end{align*}
One makes the identification $\Elin_{\alpha}(k)\equiv c_{\alpha}^{g}(k)$
representing a weighted operator (see \cite[eq. (5.11)]{Benedikter2021a}) with 
\begin{align*}
g(p,k) & =e(p)+e(p-k)-2\kF|k\cdot\hat{\omega}_{\alpha}|=|p|^{2}-|p-k|^{2}-2\kF|k\cdot\hat{\omega}_{\alpha}|\\
 & =\left(2k\cdot(p-\kF\hat{\omega}_{\alpha})-|k|^{2}\right)
\end{align*}
where we used $e(p)$ as defined in \eqref{eq:H_0-def}. The bound
follows from \cite[Lemma 5.4]{Benedikter2021a} which depends on $\|g\|_{l^{\infty}}$.
This can be estimated by using $\text{diam}(B_{\alpha})\leq CN^{\frac{1}{3}}M^{-\frac{1}{2}}$
such that $|g(p,k)|\leq C N^{\frac{1}{3}}M^{-\frac{1}{2}}$.
\end{proof}
\begin{lem}[{Error of bosonized kinetic energy, \cite[eq. (8.6)]{Benedikter2021a}}]
\label{lem:Ebos-bound}It holds for $\kinG$, $\alpha\in\mathcal{I}_{k}$
and all $\psi\in\fock$ 
\[
[\mathbb{D}_{\textnormal{B}},\ccak]=2\kF|k\cdot\hat{\omega}_{\alpha}|\ccak+\Ebos_{\alpha}(k)^{\ast}
\]
with
\begin{align*}
\sumak\|\mathfrak{E}_{\alpha}^{\text{B}}(k)\psi\|^{2} & \leq C\left(\kF MN^{-\frac{2}{3}+\delta}\right)^{2}\|(\mathcal{N}+1)^{\frac{3}{2}}\psi\|^{2},\\
\sumak\|\mathfrak{E}_{\alpha}^{\text{B}}(k)\psi\| & \leq C\kF M^{\frac{3}{2}}N^{-\frac{2}{3}+\delta}\|(\mathcal{N}+1)^{\frac{3}{2}}\psi\|.
\end{align*}
\end{lem}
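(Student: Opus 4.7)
The plan is to combine \prettyref{thm:Main-thm} and \prettyref{thm:Effective-coherent-dynamics} with an explicit computation of $e^{-i\widetilde{\mathbb{H}^{\textnormal{eff}}}t}\psi$, reducing the claim to a lower bound on the overlap between the coupled coherent state and the vacuum sector of the Fock space. For $\lambda=1$ and $t\in[0,T]$ the two theorems together give
\[
\|R^{\ast}e^{-i\mathbb{H}t}R\psi-e^{iP(t)}W(\eta_{t})\phi\otimes\Omega\|=\mathcal{O}\!\left(\max\{\kF^{-\frac{2}{15}},\beta\kF^{-1}\}\right),
\]
so by the reverse triangle inequality it suffices to lower bound $D(t)\coloneqq\|e^{iP(t)}W(\eta_{t})\phi\otimes\Omega-e^{-i\widetilde{\mathbb{H}^{\textnormal{eff}}}t}\psi\|$. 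Since $c_{\alpha}(k)\Omega=0$, the bosonic kinetic term annihilates the vacuum, so the reference state evaluates explicitly to
\[
e^{-i\widetilde{\mathbb{H}^{\textnormal{eff}}}t}\psi=e^{-iE_{N}^{\textnormal{pw}}t}(e^{i\beta\Delta_{y}t}\phi)\otimes\Omega\in\mathrm{Ran}(1\otimes P_{\Omega}),
\]
with $P_{\Omega}\coloneqq|\Omega\rangle\langle\Omega|$. Orthogonal decomposition against this sector gives the preliminary inequality $D(t)^{2}\geq\|(1\otimes(1-P_{\Omega}))W(\eta_{t})\phi\otimes\Omega\|^{2}$.

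The next step is a number-operator lower bound on the right-hand side. Since $\mathcal{N}\Omega=0$, one has $\mathcal{N}\Psi=\mathcal{N}(1-P_{\Omega})\Psi$ for all $\Psi$, and Cauchy-Schwarz applied to $\langle(1-P_{\Omega})\Psi,\mathcal{N}\Psi\rangle=\langle\Psi,\mathcal{N}\Psi\rangle$ yields
\[
\|(1-P_{\Omega})\Psi\|^{2}\geq\frac{\langle\Psi,\mathcal{N}\Psi\rangle^{2}}{\langle\Psi,\mathcal{N}^{2}\Psi\rangle}.
\]
Applied to $\Psi=W(\eta_{t})\phi\otimes\Omega$, the numerator equals $(2\|\eta_{t}\|^{2}+o(1))^{2}$ by \eqref{eq:heuristic-number-exp}. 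For the denominator I will prove the analogous upper bound $\langle\Psi,\mathcal{N}^{2}\Psi\rangle\lesssim\|\eta_{t}\|^{4}+\|\eta_{t}\|^{2}+o(1)$ by iterating the commutation identity \eqref{eq:number-op-vs-ccak} twice through $W(\eta_{t})$ and controlling the resulting error terms using \prettyref{lem:CAR-error} and \prettyref{lem:Pair-operator-bounds}. This is the expected almost-bosonic analogue of the exact identity $\langle W(\eta)\Omega,\mathcal{N}^{2}W(\eta)\Omega\rangle=4\|\eta\|^{4}+4\|\eta\|^{2}$.

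It remains to show that $\|\eta_{t}\|^{2}$ stays positive on the prescribed time scale. Using $(\eta_{t})_{\alpha}(k)=-ie^{-it\epsilon_{\alpha}(k)/2}\frac{\sin(\epsilon_{\alpha}(k)t/2)}{\epsilon_{\alpha}(k)/2}\lambda\hat{V}(k)n_{\alpha}(k)e^{iky}$ together with \prettyref{lem:Approx-n_alpha(k)^2} and a Riemann-sum approximation converting the patch sum into an integral over $S^{2}$, one verifies that for $\lambda=1$ and $t=s/\kF$ the quantity $\|\eta_{t}\|^{2}$ converges as $\kF\to\infty$ to a strictly positive continuous function $g(s)$ with $g(0)=0$. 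Choosing $C(t)$ as a monotonically increasing envelope of $\sqrt{g(\kF t)/(g(\kF t)+1)}$ then yields the claimed bound and gives $C(t)\in\mathcal{O}(1)$ for $t\in\mathcal{O}(\kF^{-1})$. The main technical obstacle is the second-moment estimate on $\mathcal{N}$: while structurally parallel to the first-moment identity already hinted at in \eqref{eq:heuristic-number-exp}, it requires careful bookkeeping of higher-order error terms from the approximate CCR, whose operator norms depend on powers of $\mathcal{N}$ and therefore must be closed self-consistently.
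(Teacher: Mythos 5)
Your proposal does not address the statement at hand. The lemma to be proved is \prettyref{lem:Ebos-bound}: the commutator identity $[\mathbb{D}_{\textnormal{B}},\ccak]=2\kF|k\cdot\hat{\omega}_{\alpha}|\ccak+\Ebos_{\alpha}(k)^{\ast}$ together with the two quantitative bounds on $\sumak\|\Ebos_{\alpha}(k)\psi\|^{2}$ and $\sumak\|\Ebos_{\alpha}(k)\psi\|$. What you wrote is instead a proof sketch for \prettyref{cor:Corollary} (the lower bound showing the linear coupling $\Phi(h_y)$ cannot be dropped): you invoke \prettyref{thm:Main-thm} and \prettyref{thm:Effective-coherent-dynamics}, reduce to the overlap of the coupled coherent state with the vacuum sector, and control first and second moments of $\mathcal{N}$ in that state. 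None of this establishes, or even touches, the commutator of the bosonized kinetic energy $\mathbb{D}_{\textnormal{B}}=\sum_{\kinG}\sumak\epsilon_{\alpha}(k)\ccak\cak$ with a single pair creation operator $\ccak$, which is a purely algebraic-plus-estimate statement at the level of the patch operators and is logically upstream of both theorems you rely on (it enters the comparison of $\mathbb{H}_0$ with $\mathbb{D}_{\textnormal{B}}$ in the proof of the main theorem). So there is a genuine gap: the entire argument is missing.

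For the record, the actual proof is short and local. Using the approximate CCR \eqref{eq:approx-CAR}, one computes
\begin{equation*}
[\mathbb{D}_{\textnormal{B}},\ccak]=\sum_{\linG}\sum_{\beta\in\mathcal{I}_{l}}\epsilon_{\beta}(l)\,c_{\beta}^{\ast}(l)[c_{\beta}(l),\ccak]
=2\kF|k\cdot\hat{\omega}_{\alpha}|\ccak+2\kF\sum_{\linG}|l\cdot\hat{\omega}_{\alpha}|\,c_{\alpha}^{\ast}(l)\mathcal{E}_{\alpha}(l,k)\,\chi(\alpha\in\mathcal{I}_{l}),
\end{equation*}
which identifies $\Ebos_{\alpha}(k)^{\ast}$ as the second summand. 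The norm bounds then follow by bounding $|l\cdot\hat{\omega}_{\alpha}|$ and the (finitely many, since $\hat V$ has compact support) terms of the $l$-sum, applying \prettyref{lem:CAR-error} to $\mathcal{E}_{\alpha}(l,k)$, commuting $\mathcal{N}$ past $c_{\alpha}(l)$ via $\mathcal{N}c_{\alpha}(l)=c_{\alpha}(l)(\mathcal{N}-2)$, and closing with Cauchy--Schwarz and \prettyref{lem:Pair-operator-bounds}; this produces the factor $\kF MN^{-\frac{2}{3}+\delta}$ with the weight $\|(\mathcal{N}+1)^{\frac{3}{2}}\psi\|$, and the second displayed bound is then Cauchy--Schwarz over $\alpha\in\mathcal{I}_{k}$ (contributing the extra $M^{\frac{1}{2}}$). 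If you wish to salvage your text, it belongs to the proof of \prettyref{cor:Corollary}, where the strategy you outline (vacuum-sector projection, moment bounds on $\mathcal{N}$ in the coherent state, and positivity of $\|\eta_t\|^{2}$ on times of order $\kF^{-1}$) is indeed the natural one; but as a proof of \prettyref{lem:Ebos-bound} it must be discarded and replaced by the computation above.
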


\begin{proof}
It holds $\mathfrak{E}_{\alpha}^{\text{B}}(k)=2\kF\sum_{l}|l\cdot\hat{\omega}_{\alpha}|\mathcal{E}_{\alpha}^{\ast}(l,k)c_{\alpha}(l)\chi(\alpha\in\mathcal{I}_{l})$
and therefore
\begin{align*}
 & \sumak\|\mathfrak{E}_{\alpha}^{\text{B}}(k)\psi\|^{2}\\
 & \leq\sum_{\alpha\in\mathcal{I}_{k}\cap\mathcal{I}_{l}}\bigg(\sum_{\linG}||2\kF|l\cdot\hat{\omega}_{\alpha}|\mathcal{E}_{\alpha}^{\ast}(l,k)c_{\alpha}(l)\psi\|\bigg)^{2}\leq C\kF\sum_{\alpha\in\mathcal{I}_{k}\cap\mathcal{I}_{l}}\bigg(\sum_{\linG}||\mathcal{E}_{\alpha}^{\ast}(l,k)c_{\alpha}(l)\psi\|\bigg)^{2}\\
 & \leq C\kF\sum_{\alpha\in\mathcal{I}_{k}\cap\mathcal{I}_{l}}\bigg(\sum_{\linG}CMN^{-\frac{2}{3}+\delta}||\mathcal{N}c_{\alpha}(l)\psi\|\bigg)^{2}\\
 & \leq C\kF\left(CMN^{-\frac{2}{3}+\delta}\right)^{2}\sum_{\alpha\in\mathcal{I}_{k}\cap\mathcal{I}_{l}}\sum_{\linG}\sum_{l'\in\Gamma}\|c_{\alpha}(l)(\mathcal{N}-2)\psi\|\ \|c_{\alpha}(l')(\mathcal{N}-2)\psi\|\\
 & \leq C\kF\left(CMN^{-\frac{2}{3}+\delta}\right)^{2}\|(\mathcal{N}+1)^{\frac{3}{2}}\psi\|^{2}
\end{align*}
where we used in the second line $\linG$ bounded, \prettyref{lem:CAR-error}
in the third line, $\mathcal{N}c_{\alpha}(l)=c_{\alpha}(l)(\mathcal{N}-2)$
in the fourth line and Cauchy-Schwarz and \prettyref{lem:Pair-operator-bounds}
in the last line.

The second statement simply follows from Cauchy-Schwarz.
\end{proof}
\begin{lem}[Approximation of patch decomposed operators]
\label{lem:Approx-of-patch-operators}It holds for all $\kinG$

\[
\|\big(b(k)-\sumak n_{\alpha}(k)\cak+\text{h.c.}\big)\psi\|\leq C(N^{\frac{1}{3}-\frac{\delta}{2}}+N^{\frac{1}{6}}M^{\frac{1}{4}})\|(\mathcal{N}+1)^{\frac{1}{2}}\psi\|.
\]
\end{lem}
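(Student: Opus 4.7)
The plan is to express the operator difference $b(k) - \sumak n_{\alpha}(k) c_{\alpha}(k) + \text{h.c.}$ as a sum of fermion bilinears supported on a small set of ``bad'' momenta, and then apply a Hilbert--Schmidt bilinear bound. We have $b(k) = \sum_{p \in \mathcal{A}_k} a_{p-k} a_p$ with $\mathcal{A}_k \coloneqq \{p \in \BFc : p-k \in \BF\}$. For each north patch $\alpha \in \mathcal{I}_k^+$ the definition yields $n_{\alpha}(k) c_{\alpha}(k) = \sum_{p \in \BFc \cap B_\alpha,\ p - k \in \BF \cap B_\alpha} a_{p-k} a_p$, namely the restriction of $b(k)$ to pairs contained in $B_\alpha$; the corridor construction of width $2R \geq 2|k|$ prevents a pair $(p, p-k)$ from straddling two distinct patches. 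For a south patch $\alpha \in \mathcal{I}_k^-$ the identity is the analogue with $k$ replaced by $-k$, producing a restricted piece of $b(-k)$. Combining this with the $+\text{h.c.}$ and the evenness $\hat V(-k) = \hat V(k)$, the surviving fermion bilinears are exactly those over momenta $(p, p-k)$ in which $p$ or $p-k$ lies in a corridor between patches or in the equatorial strip excluded from $\mathcal{I}_k$ (where $|k \cdot \hat \omega_{\alpha}| < N^{-\delta}$).

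Next I would count the bad momenta inside the thin Fermi shell $\BFc \cap (\BF + k)$, whose radial thickness is $\mathcal{O}(|k|) = \mathcal{O}(1)$. The equatorial strip has angular width $\mathcal{O}(N^{-\delta})$, giving $|S_{\mathrm{eq}}| \leq C \kF^2 N^{-\delta} = C N^{2/3-\delta}$. The corridor total area is controlled by summing the perimeter $\mathcal{O}(\kF/\sqrt{M})$ of each of the $M$ patches times the corridor width $\mathcal{O}(1)$, which gives $|S_{\mathrm{corr}}| \leq C \kF \sqrt{M} = C N^{1/3}\sqrt{M}$. The standard Hilbert--Schmidt bilinear estimate
\[
\bigg\|\sum_{p \in S} a_{p-k} a_p \psi\bigg\| \leq \sqrt{|S|}\ \|(\mathcal{N}+1)^{1/2}\psi\|,
\]
obtained by writing $\sum_{p \in S} a_{p-k} a_p = \sum_{p,q \in \Z} D(p,q) a_p a_q$ with $D(p-k,p) = 1$ for $p \in S$ and zero otherwise (so $\|D\|_{\mathrm{HS}} = \sqrt{|S|}$), together with the analogous estimate for the creation counterpart, then produces the two contributions $\sqrt{|S_{\mathrm{eq}}|} \leq C N^{1/3-\delta/2}$ and $\sqrt{|S_{\mathrm{corr}}|} \leq C N^{1/6} M^{1/4}$ to the norm, matching the claimed inequality.

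The main obstacle I anticipate is the hemispheric bookkeeping outlined in the first paragraph: because $c_{\alpha}(k)$ is defined as $b_\alpha(-k)$ for $\alpha \in \mathcal{I}_k^-$, the naive difference $b(k) - \sumak n_{\alpha} c_{\alpha}$ carries south-patch fermion bilinears that are not in the support of $b(k)$, so the identification ``difference = bilinear on bad momenta'' is not immediate. It is only after adding the hermitian conjugate and invoking the $k \leftrightarrow -k$ reflection symmetry of $\hat V$ that the south contributions pair off against the corresponding $b^*$-terms and leave only the corridor and strip remainders. Once this matching is carried out, the counting of bad momenta together with the Hilbert--Schmidt bilinear bound yields the stated estimate.
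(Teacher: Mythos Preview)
The paper does not actually supply a proof of this lemma; it is listed in the appendix among auxiliary estimates drawn from the bosonization framework of \cite{Benedikter2019,Benedikter2021,Benedikter2021a}, without argument or citation. Your overall strategy --- write the difference as a fermion bilinear supported on the corridor and equatorial-strip momenta, count those momenta, and apply the Hilbert--Schmidt bilinear bound --- is exactly the standard argument from that framework, and your counting of the two contributions $N^{1/3-\delta/2}$ and $N^{1/6}M^{1/4}$ is correct.

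Your hemispheric concern is legitimate, but your proposed fix does not work. For fixed $k\in\Gamma$ and $\alpha\in\mathcal I_k^-$ one has $n_\alpha(k)c_\alpha(k)=m_\alpha(-k)b_\alpha(-k)$ and $n_\alpha(k)c_\alpha^\ast(k)=m_\alpha(-k)b_\alpha^\ast(-k)$, which are pieces of $b(-k)$ and $b^\ast(-k)$ and live on the southern hemisphere relative to $k$. Adding the hermitian conjugate only introduces $b^\ast(k)$, which is again northern; nothing in $b(k)+b^\ast(k)$ absorbs the southern sum, and the symmetry $\hat V(-k)=\hat V(k)$ cannot be invoked because no $\hat V$ appears at fixed $k$. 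Read literally, the southern block contributes a full $\mathcal O(N^{1/3})$ and the stated gain is lost.

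The actual mechanism is that $\mathcal I_k^-=\mathcal I_{-k}^+$ and $n_\alpha(k)=n_\alpha(-k)$, so the southern sum is precisely the northern-patch approximation to $b(-k)+b^\ast(-k)$. In other words, the estimate that your argument establishes (and that is used when comparing $b(\tilde h_y)+b^\ast(\tilde h_y)$ with $c(h_y)+c^\ast(h_y)$) is either the northern-only version
\[
\Big\|\Big(b(k)-\sum_{\alpha\in\mathcal I_k^+} n_{\alpha}(k)\cak+\textnormal{h.c.}\Big)\psi\Big\|\leq C\big(N^{\frac{1}{3}-\frac{\delta}{2}}+N^{\frac{1}{6}}M^{\frac{1}{4}}\big)\|(\mathcal N+1)^{\frac{1}{2}}\psi\|,
\]
or equivalently the version with $b(k)+b(-k)$ on the left and the full $\mathcal I_k$ sum subtracted. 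Once you pair north with $b(k)$ and south with $b(-k)$ in this way, your counting and bilinear bound finish the proof.
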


\begin{lem}[{Estimate of non-bosonizable terms, \cite[eq. (4.6)]{Benedikter2021}}]
\label{lem:Estimate-non-bosonizable} It holds for $\mathcal{E}$
as defined in \eqref{eq:non-bosonizable-V} the following estimate
for all $\psi\in\fock$

\[
\|\mathcal{E}\psi\|\leq C \lambda\|\hat{V}\|_{1}\|\mathcal{N}\psi\|.
\]
\end{lem}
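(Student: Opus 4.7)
The plan is to decompose $\mathcal{E} = \mathcal{E}_1 + \mathcal{E}_2$ according to the two summands in \eqref{eq:non-bosonizable-V} and to recognize each of them, after a single application of the CAR, as (a scalar plus) the second quantization of a bounded one-body operator on $\ell^2(\Z)$. Using $a_p a_{p-k}^\ast = \delta_{k,0} - a_{p-k}^\ast a_p$ I rewrite
\[
\mathcal{E}_1 \;=\; \lambda\,\hat V(0)\,|\BF|\;-\;\lambda\sum_{k\in\Z}\sum_{\substack{p\in\BF\\ p-k\in\BF}}\hat V(k)\,e^{iky}\,a_{p-k}^\ast a_p.
\]
The commutator scalar $\lambda\hat V(0)|\BF|$ is the standard constant produced by normal-ordering; it is absorbed into the Fermi-ball scalar shift $E_N^{\mathrm{pw}}$ used throughout the paper, and therefore does not contribute to the operator part that has to be bounded.

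The key analytic ingredient is then the classical fermionic estimate
\[
\|d\Gamma(T)\psi\|\;\le\;\|T\|_{\mathrm{op}}\,\|\mathcal{N}\psi\|
\]
for any bounded one-body operator $T$ on $\ell^2(\Z)$. This follows by decomposing $\psi=\bigoplus_{n\ge 0}\psi^{(n)}$ into fermionic $n$-particle sectors: on each sector $d\Gamma(T)$ acts as $\sum_{i=1}^n T^{(i)}$, yielding $\|d\Gamma(T)\psi^{(n)}\|\le n\|T\|_{\mathrm{op}}\|\psi^{(n)}\|$, and summing over $n$ gives the claim. I apply this to the two restricted-shift operators
\[
T_k^{(1)} \coloneqq \chi_{\BF}\,\tau_k\,\chi_{\BF}, \qquad T_k^{(2)} \coloneqq \chi_{\BFc}\,\tau_{-k}\,\chi_{\BFc},
\]
where $\tau_k$ is the unitary translation by $k\in\Z$. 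Being restrictions of a unitary, each satisfies $\|T_k^{(j)}\|_{\mathrm{op}}\le 1$ uniformly in $k$, and the normal-ordered remainders of $\mathcal{E}_1$ and $\mathcal{E}_2$ are precisely $\pm\lambda\sum_k \hat V(k)\,e^{iky}\,d\Gamma(T_k^{(j)})$.

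Combining the two estimates via the triangle inequality and using the $k$-uniform bound $\|d\Gamma(T_k^{(j)})\psi\|\le\|\mathcal N\psi\|$,
\[
\|\mathcal{E}\psi\|\;\le\;\lambda\sum_{k\in\Z}|\hat V(k)|\,\bigl(\|d\Gamma(T_k^{(1)})\psi\|+\|d\Gamma(T_k^{(2)})\psi\|\bigr)\;\le\;2\lambda\,\|\hat V\|_1\,\|\mathcal{N}\psi\|,
\]
where $\|\hat V\|_1\coloneqq\sum_{k\in\Z}|\hat V(k)|<\infty$ by the assumed compact support of $\hat V$ on $\Z$. No genuine technical obstacle is expected; the only point requiring any care is the bookkeeping of the $k=0$ commutator constant produced by normal-ordering $\mathcal{E}_1$, which is disposed of by absorbing it into $E_N^{\mathrm{pw}}$ as done already elsewhere in the paper.
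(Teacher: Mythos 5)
The analytic core of your argument is sound, and it actually supplies a self-contained proof where the paper gives none: for this lemma the paper only points to \cite[eq.\ (4.6)]{Benedikter2021}, and your route (normal order, recognize each sum as $\lambda\hat V(k)e^{iky}$ times the second quantization of a restricted shift $\chi\,\tau_{\pm k}\,\chi$ of operator norm at most $1$, then use $\|\mathrm{d}\Gamma(T)\psi\|\le\|T\|_{\mathrm{op}}\|\mathcal N\psi\|$ sector by sector and sum over $k$ using $\|\hat V\|_{1}<\infty$) is exactly the standard mechanism behind that cited estimate, transplanted from the quartic to the present quadratic, impurity-coupled setting. The unimodular factor $e^{iky}$ is indeed harmless, and the sign/direction of the translation in your $T_k^{(1)},T_k^{(2)}$ is immaterial for the norm bound.

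The one step that does not stand as written is your disposal of the normal-ordering constant $\lambda\hat V(0)|\BF|$. It is not "absorbed into $E_N^{\mathrm{pw}}$" anywhere in the paper: $E_N^{\mathrm{pw}}=\sum_{k\in\BF}|k|^{2}$ is purely kinetic, and $\mathcal E$ is defined to be literally the two sums in \eqref{eq:non-bosonizable-V}, so in bounding $\|\mathcal E\psi\|$ you are not free to discard a piece of $\mathcal E$ by reassigning it to a scalar used elsewhere. Moreover, if $\hat V(0)\neq0$ the asserted inequality would genuinely fail: testing on the vacuum gives $\mathcal E\,\Omega=\lambda\hat V(0)|\BF|\,\Omega\neq0$ while $\mathcal N\Omega=0$, so no constant $C$ works. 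The correct resolution is that $0\notin\supp\hat V$ is implicitly assumed in the paper (it is already needed for the claimed decomposition $\Gamma\cup(-\Gamma)=\Z\cap\supp\hat V$), hence the $k=0$ term, and with it your constant, simply does not occur. Replace the "absorption" sentence by this observation (or state the lemma with $\hat V(0)=0$ as hypothesis) and your proof is complete as it stands, with constant $C=2$ (in fact each of the two pieces separately is bounded by $\lambda\|\hat V\|_{1}\|\mathcal N\psi\|$).
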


\section*{Acknowledgment}

We are grateful for helpful discussions with David Mitrouskas,
Benjamin Schlein and Karla Schön. This work has been partially supported by SFB/Transregio
TRR 352 -- Project-ID 470903074.

\subsection*{Declaration}
\paragraph*{Conflict of interest} The authors declare that there is no conflict of interest.

\bibliographystyle{alphadin}
\bibliography{Promotion_DB}

\end{document}